\tikzstyle{startstop} = [rectangle, rounded corners, minimum width=3cm, minimum height=1cm,text centered, draw=black, fill=red!30]
\tikzstyle{process} = [rectangle, minimum width=3cm, minimum height=1cm, text centered, draw=black, fill=orange!30]
\tikzstyle{arrow} = [thick,->,>=stealth]
\providecommand{\U}[1]{\protect\rule{.1in}{.1in}}
\newtheorem{theorem}{Theorem}
\newtheorem{assumption}{Assumption}
\newtheorem{corollary}{Corollary}
\newtheorem{example}{Example}
\newtheorem{lemma}{Lemma}
\newtheorem{proposition}{Proposition}[section]
\newenvironment{examplecont}[1]{%
  \manualtheoreminner
}{\endmanualtheoreminner}
\newenvironment{proof}[1][Proof]{\noindent \textbf{#1:} }{\  \rule{0.5em}{0.5em}}
\newcommand{\floor}[1]{\lfloor #1 \rfloor}
\newcommand{\Ex}{\mathbb{E}}
\renewcommand{\U}{\mathbb{U}}
\newcommand{\ind}{\mathds{1}}
\newcommand{\T}[1]{\widetilde{#1}}
\renewcommand{\Pr}{\mathbb{P}}
\DeclarePairedDelimiter{\ceil}{\lceil}{\rceil}
\newcommand{\urltilde}{\kern -.15em\lower .7ex\hbox{~}\kern .04em}
\def \@seccntformat#1{\csname the#1\endcsname.\quad}
\numberwithin{equation}{section}
\begin{document}
\title{\vspace{-2cm} Locally Robust Policy Learning: Inequality, Inequality of Opportunity and Intergenerational Mobility\thanks{Research funded by Ministerio de Ciencia e Innovaci\'{o}n,
grant ECO2017-86675-P, MCI/AEI/FEDER/UE, grant PGC 2018-096732-B-100, grant PID2021-127794NB-I00, grant PRE2020-092794.
Comunidad de Madrid, grants EPUC3M11 (VPRICIT) and H2019/HUM-589. Swiss National Science Foundation Grant 192580 `Bringing Empirical Welfare Maximization methods to practice'. I am grateful to Juan Carlos Escanciano, Toru Kitagawa and Aleksey Tetenov for their guidance and support, and to Miguel Ángel Delgado, Juan José Dolado, Ignacio Ortuño, Nazarii Salish, Jan Stuhler, Carlos Velasco and Daniel Wilhelm. \textit{Email address: joel.terschuur@tum.de}}}
\author{Joël Terschuur \\ \textit{Technical University of Munich, Germany}}
\date{\today \\
\href{https://raw.githubusercontent.com/joelters/website/gh-pages/assets/LRPL.pdf}{Click here for the latest version of the paper.}}
\maketitle

\begin{abstract}
    Policy makers need to decide whether to treat or not to treat heterogeneous individuals. The optimal treatment choice depends on the welfare function that the policy maker has in mind and it is referred to as the policy learning problem. I study a general setting for policy learning with semiparametric Social Welfare Functions (SWFs) that can be estimated by locally robust/orthogonal moments based on U-statistics. This rich class of SWFs substantially expands the setting in \cite{athey2021policy} and accommodates a wider range of distributional preferences. Three main applications of the general theory motivate the paper: (i) Inequality aware SWFs, (ii) Inequality of Opportunity aware SWFs and (iii) Intergenerational Mobility SWFs. I use the Panel Study of Income Dynamics (PSID) to assess the effect of attending preschool on adult earnings and estimate optimal policy rules based on parental years of education and parental income.
    \\
    \\
    \textbf{JEL\ Classification:} C13; C14; C21; D31; D63; I24
    \\
    \textbf{Keywords:} local robustness, U-statistics, Inequality, Intergenerational mobility, empirical welfare maximization.
    \\
    \textbf{R package (forthcoming):} \url{https://joelters.github.io/home/code/}
\end{abstract}

\newpage

\section{Introduction}
Whenever a treatment has heterogeneous effects it is important to decide carefully who should be treated. In the simplest case where we care about the average outcome, no budgetary limits exist and treatment effects are positive for everyone, the best policy is to treat everyone. However, we might have a limited budget, distributional concerns or negative treatment effects. Then, it is important to decide whether \emph{to treat or not to treat} different individuals. This is the problem of policy learning. 

In economics we might want to know whether to provide training to the unemployed or design rules to assign conditional cash transfers. In business, we might want to know whether to provide a discount to a customer. Judges have to decide whether to release someone on parole. Schools might want to know whether to provide extra-curricular lessons to some students. Certain medicines might be beneficial for some but detrimental for others.

The inherent distributional conscerns in these examples are quite different. Hence, we need a framework accommodating different SWFs. The framework has to be general, but also needs to allow for certain statistical guarantees. I provide a framework to estimate optimal rules for a rich class of semiparametric SWFs, estimable by U-statistics. Examples include the average outcome, Inequality aware SWFs, Inequality of Opportunity (IOp) aware SWFs and Intergenerational Mobility (IGM) SWFs. 

To my knowledge, there is no prior work on IOp and IGM SWFs in the policy learning literature. IOp is the part of inequality explained by circumstances $X$ outside the control of the individual, e.g. sex, race or parental income. IOp SWFs do not penalize all inequality, only inequality explained by circumstances. Based on the seminal contributions in \cite{gaer1993}, \cite{fleurbaey1995equal} and \cite{roemer1998equality}, the IOp literature has focused on measuring IOp. A popular measure is the Gini of the best predictions (in mean squared error sense) of the outcome $Y$ given the circumstances $X$, i.e. $G(\gamma(X))$ where $\gamma(X) = \Ex[Y|X]$ and $G(Z)$ denotes the Gini index of the random variable $Z$. To accommodate a possibly high-dimensional set of circumstances, IOp literature has started using machine learners to predict (e.g \cite{brunori2019inequality}, \cite{brunori2019upward}, \cite{brunori2021roots}, \cite{brunori2021evolution}, \cite{rodriguez2021inequality}, \cite{carranza2022} or \cite{hufe2022fairness}). The bias-variance trade-off in the prediction allows for bias which can creep into the IOp estimator. \cite{escanciano2023machine} provide locally robust IOp estimators robust to such biases. I construct Neyman-orthogonal IOp aware SWFs.

Inequality SWFs have been studied in \cite{kasy2016partial}, \cite{kitagawa2021equality} or \cite{kock2023treatment}. A popular SWF for an outcome $Y$ is $W = \Ex[Y](1-G(Y))$. This SWF values the average outcome but penalizes high inequality. \cite{leqi2021median} propose to maximize average conditional quantiles. \cite{cui2024policy} focus on a conditional quantile of the treatment effect using partial identification. \cite{wang2018quantile} study quantile-optimal policies and adapt their theory to minimize Gini's mean difference. They use the U-statistics nature of the Gini mean difference and obtain asymptotic theory for a particular class of policy rules by using empirical U-process methods. I avoid U-processes theory by using a representation of U-statistics as sums-of-i.i.d. blocks introduced in \cite{hoeffding1963}. This representation is key in proving the main result of the paper.

While inequality aware SWFs look at the distribution of $Y$, IOp aware SWFs focus on the distribution of $\gamma(X)$. An IOp aware SWF is $W = \Ex[Y](1-G(\gamma(X)))$, which penalizes IOp. This SWF adds an extra nuisance parameter, $\gamma(X)$, on top of the conditional expectations/propensity scores needed to identify treatment effects. Policy learning with semiparametric SWFs, which directly depend on additional unknown functions, has been little explored, with the exception of \cite{leqi2021median} whose welfare depends on conditional quantiles.

IGM studies the relationship between child and parental outcomes. The Kendall-$\tau$ is a popular measure of mobility in the literature (see \cite{chetty2014land} or \cite{kitagawa2018measurement}). It looks at whether the parents of individual $i$ are richer than those of $j$ and $i$ is richer than $j$. An IGM aware SWF is $W = -|\tau - t|$ for some target $t \in [-1,1]$. For instance, we could decide the allocation of higher education scholarships to reduce dependence between parental and child's income.

The policy learning literature looks for optimal allocation rules $\pi$ mapping characteristics to binary treatment decisions. Optimal rules are searched within a class $\Pi$ of treatment rules to maximize welfare. Following \cite{manski2004statistical}, I search for optimal policies in $\Pi$ so as to minimize regret, i.e. the expected difference between the best possible welfare and the welfare evaluated at the estimated policy. Other relevant work includes \cite{dehejia2005program}, \cite{hirano2009asymptotics}, \cite{stoye2009minimax,stoye2012minimax}, \cite{chamberlain2011bayesian}, \cite{bhattacharya2012inferring}, \cite{tetenov2012statistical}, \cite{kasy2016partial}, \cite{kitagawa2018should,kitagawa2021equality}, \cite{athey2021policy}, \cite{} or \cite{zhou2023offline}.

Estimation of unknown functions in semiparametric SWFs challenges the statistical guarantees of estimated policy rules. This is due to slow convergence of non-parametric estimators, addressed in semiparametric methods through locally robust/orthogonal moments. These are moment conditions that identify the quantity of interest and allow for its estimation at $\sqrt{n}$ ($n$ is the sample size) rate. I expand previous work by considering any semiparametric SWF, possibly defined as a U-statistic, which can be estimated by locally robust/orthogonal scores. The main theoretical result provides an asymptotic upper bound to the regret of the estimated policy rule.

This paper is related to \cite{athey2021policy}, \cite{leqi2021median} and \cite{zhou2023offline} in making use of the semiparametric literature on locally robust/orthogonal scores (e.g. \cite{chernozhukov2022locally}) to obtain $\sqrt{n}$ rates of convergence even with nonparametric first steps. I build upon \cite{escanciano2023machine} to expand policy learning results to SWFs defined by U-statistics. \cite{athey2021policy} find rates of the regret that optimally depend on the complexity $\Pi$ and the sample size in observational settings where the propensity score is unknown. They do so for average-treatment-like SWFs. I generalize this setting by allowing general semiparametric SWFs, possibly defined as U-statistics.
 
Empirically, treatment allocation with inequality, IOp and IGM SWFs is hard. We need circumstances and parental income which are usually absent and to identify treatment effects. I look at the effect of preschool on adult earnings using the Panel Study of Income Dynamics (PSID) dataset. This empirical illustration has many advantages. Any variable that induces preschool attendance can be considered a circumstance under the (very reasonable) assumption that we cannot hold the kid responsible for these variables. Also, PSID has rich information on family background and it allows us to look at long-term outcomes. It also has limitations. Treatment is not randomly assigned, so I rely on the assumption of selection on observables. Preschool attendance is not a binary treatment since its quality varies. Furthermore, I have no information on the cost of treatment and allocating children to preschool based on their circumstances might not be enforceable or ethical. Observed preschool choices differ from estimated optimal rules, even when maximizing average outcomes, suggesting parents prioritize factors beyond future earnings.


The effect of preschool is heterogeneous. On average preschool has a positive effect on adult earnings but children with highly educated mothers and high parental income are negatively affected. This aligns with findings in psychology and economics (see \cite{fort2020cognitive}). These heterogeneous effects have different implications for different SWFs. I estimate optimal treatment rules based on parental income and mother's education. Inequality aware SWFs treat individuals with negative treatment effects since the decrease in inequality compensated the decrease in average earnings. The same happens with the IGM welfare which has no average motive at all. The additive and IOp estimated optimal policy rules coincide. This coincidence is specific to the heterogeneous treatment effects in the data and not a general result. In this empirical illustration, maximizing the average already decreases IOp drastically.

I introduce the welfare objects in Section \ref{sec_welfare_econ}. Section \ref{sec_pollearn_linorthscores} elaborates on the general theory for semiparametric SWFs which are linear on the distribution of the data (i.e. not U-statistics) and Section \ref{sec_pollearn_Ustats} generalized to SWFs possibly defined as U-statistics. Section \ref{sec_statguarantees} provides upper bounds on the regret and Section \ref{sec_empapp} deals with the empirical illustration. All proofs are in the Appendix. 

\section{Welfare economics for inequality, IOp and rank correlations}
\label{sec_welfare_econ}
The policy learning literature is at the intersection of welfare economics and econometrics. Before addressing the econometric problem, I introduce the key welfare objects of interest. For a continuous random outcome $Y_i \in \mathbb{R}^+$, the additive welfare is based on the average outcome: $W = \Ex[Y_i]$. Additive welfare does not care about distributional aspects other than the average. A first approach to include distributional concerns is to follow \cite{dalton1920measurement} and \cite{atkinson1970measurement} and consider increasing and concave transformations $u(\cdot)$ \footnote{With abuse of notation I call $W$ to all SWFs as they appear.}: $W = \Ex[u(Y_i)]$.

This SWF will rank two outcome distributions equally for all increasing and concave $u(\cdot)$ if the Lorenz curve of one of the distributions is everywhere above the Lorenz curve of the other distribution and has equal or higher mean; equivalently if one distribution second-order stochastically dominates the other. If we want to obtain a complete ordering we need to specify $u(\cdot)$ further. One popular choice is
\[
    u(y) = \begin{cases}
        \frac{y^{1-\theta}}{1-\theta} & \text{if } \theta \in (0,1) \\
        \log(y) & \text{if } \theta = 1,
    \end{cases}
\]
where $\theta$ captures the concavity of $u(\cdot)$ and can be interpreted as an inequality aversion parameter. I also focus on SWFs aware of Inequality of Opportunity (IOp). IOp is the part of total inequality which can be explained by circumstances, i.e. by variables that are outside the control of the individual such as parental education or parental income. Let $X_i \in \mathbb{R}^k$ be such a random vector of circumstances. Let also $\gamma(X_i) = \Ex[Y_i|X_i]$. By looking at the distribution of $\gamma(X_i)$ we get IOp averse SWFs: $W = \Ex[u(\gamma(X_i))]$.

If there is no IOp, circumstances are unable to predict the outcome and we have $\gamma(X_i) = \Ex[Y_i]$. In this case, $W = u(\Ex[Y_i])$ so we only care about the (transformed) average income. If we have maximum IOp, the outcome is a deterministic function of the circumstances and $\gamma(X_i) = Y_i$. Then, $W = \Ex[u(Y_i)]$. Since all inequality is IOp, we are back to the inequality averse SWF.

Alternatively, let $F_Y$ be the distribution of $Y$ and $F^{-1}_Y$ be the quantile function, for weights $w(\cdot)$, we might have
\[
    W = \int_0^1 F_Y^{-1}(\tau) w(\tau) d \tau.    
\]
This welfare has been used in \cite{mehran1976linear}, \cite{donaldson1980single}, \cite{weymark1981generalized}, \cite{donaldson1983ethically} or \cite{aaberge2021ranking}. Letting $w_k(\tau) = (k-1)(1-\tau)^{k-2}$ we get the extended Gini family of SWFs. I focus on $k = 3$, the standard Gini SWF which can be shown to be 
\begin{align*}
  W &= \Ex[Y_i](1 - G(Y_i)) = (1/2)\Ex[Y_i + Y_j - |Y_i - Y_j|],  
\end{align*}
where the second equality follows from writing the Gini of $Y_i$ as $G(Y_i) = \Ex[|Y_i-Y_j|]/\Ex[Y_i + Y_j]$ where $Y_j$ is an independent copy of $Y_i$. The welfare above is additive if there is no inequality and penalizes positive values of the Gini coefficient. If we only care about IOp, we can look at the distribution of $\gamma(X_i)$. In that case, we have
\begin{align*}
  W &= \Ex[\gamma(X_i)](1-G(\gamma(X_i))) = (1/2) \Ex[\gamma(X_i) + \gamma(X_j) - |\gamma(X_i) - \gamma(X_j)|].
\end{align*}
If $G(\gamma(X_i)) = 0$, we are back to the additive case. If there is full IOp, then $G(\gamma(X_i)) = G(Y_i)$ and we are back to the standard Gini SWF. I also consider the problem of intergenerational mobility. Let $X_{1i} \in \mathbb{R}$ be the parental outcome. A measure of association between $Y_i$ and $X_{1i}$ is the Kendall-$\tau$
\[
\tau = \Ex[sgn(Y_i-Y_j)sgn(X_{1i} - X_{1j})],
\]
where $sgn(a) = \ind(a > 0) - \ind(a< 0)$. This parameter is popular in the IGM literature (see \cite{chetty2014land} or \cite{kitagawa2018measurement}) where $X_{1i}$ is parental income and $Y_i$ is the child's income. It takes values between $1$ and $-1$. $\tau = 1$ means that whenever an individual has a higher income than another, she also has a higher parental income. $\tau = -1$ is the opposite. For some target Kendall-$\tau$ $t \in [-1,1]$ an IGM aware SWF is
\[
    W = -\biggl|\mathbb{E}\biggl[ sgn(Y_i-Y_j)sgn(X_{1i} - X_{1j}) \biggr] - t \biggr|.
\]
To my knowledge, this is a novel SWF. Note that it allows us to treat problems much more general than IGM. Setting $t = 0$, maximizing this SWF corresponds to allocating a treatment to minimize the dependence between two variables $Y_i$ and $X_{1i}$.

\section{Policy learning with general orthogonal scores}
\label{sec_pollearn_linorthscores}
Let $(Y_i(1),Y_i(0),D_i, X_i) \sim F_0$ where $(Y_i(1),Y_i(0)) \in \mathcal{Y} \times \mathcal{Y}$ are real-valued potential outcomes, i.e. $i$'s outcome under treatment and in the absence of treatment respectively. $D_i$ is a binary treatment and $X_i \in \mathcal{X}$ is a vector of pre-treatment covariates. Let $\gamma^{(j)}(X_i) = \mathbb{E}[Y_i(j)|X_i] \in \Gamma$ for $j = 0,1$ be potential predictions. We observe an i.i.d. sample $(Z_1,...,Z_n)$ with $Z_i = (Y_i,D_i,X_i) \in \mathcal{Z}$ and $Y_i = Y_i(1)D_i + Y_i(0)(1-D_i) \in \mathcal{Y}$. Let $\pi: \mathcal{X} \mapsto \{0,1\}$ be a binary treatment rule and $\Pi$ be a collection of such treatment rules. We are interested in choosing $\pi \in \Pi$ to maximize
\begin{align}
\label{eq_linwf}
W(\pi) &= \mathbb{E}[g(Y_i(1),X_i,\gamma^{(1)}) \pi(X_i) + 
g(Y_i(0),X_i,\gamma^{(0)}) (1-\pi(X_i))].
\end{align}
For additive welfare, $g(Y_i(j),X_i,\gamma^{(j)}) = Y_i(j)$ for $j = 0,1$. Importantly, $g$ can depend on possibly infinite-dimensional nuisance parameters $\gamma$. $\gamma$ is a conditional expectation throughout the paper but this framework can be extended to allow for much more general first steps such as conditional quantiles (see \cite{ichimura2022influence}).

\begin{example}[IOp Atkinson]
    For an inequality averse SWF we can use $W(\pi) = \Ex[u(Y_i(1))\pi(X_i) + u(Y_i(0))(1-\pi(X_i))]$ with $u(\cdot)$ a concave function and $X_i$ a vector of circumstances. For an IOp averse SWF we look at the distribution of $\gamma(X_i)$:
    \[
        W(\pi) = \Ex[u(\gamma^{(1)}(X_i))\pi(X_i) + u(\gamma^{(0)}(X_i))(1-\pi(X_i))].
    \] 
    This welfare has not been covered in the policy learning literature before.
    $\blacksquare$
\end{example}
(\ref{eq_linwf}) is not based on observables. To identify it, we need a sample from an experimental or observational experiment where the policy has already been implemented. Let $e(X_i) = \mathbb{P}(D_i  = 1 | X_i)$ be the propensity score. I assume that the following holds. 

\begin{assumption}
    \label{ass_ident}
    i) $(Y_i(1),Y_i(0)) \perp D_i | X_i$, ii) $e(x) \in [\kappa, 1 - \kappa]$ for some $\kappa \in (0,1/2]$.
\end{assumption}
There are two ways of identifying welfare, the direct method (DM) based on conditional expectations or Inverse Propensity Score Weighting (IPW). I use the DM approach. All results in the paper for the IPW approach are in Appendix \ref{app_ipw}. Let $\gamma(D_i,X_i) = \Ex[Y_i|D_i,X_i]$, $\gamma_j(X_i) = \gamma(j,X_i)$ for $j = 0, 1$ and $\varphi(D_i,X_i, \gamma) = \Ex[g(Y_i,X_i,\gamma)|D_i,X_i]$.
\begin{proposition}
    \label{prop_linid}
    Under Assumption \ref{ass_ident}, $W(\pi)$ is identified as
    \begin{align*}
        W(\pi) &= \Ex[\varphi(1,X_i,\gamma_1)\pi(X_i) + \varphi(0,X_i,\gamma_0)(1-\pi(X_i))].
    \end{align*}
\end{proposition}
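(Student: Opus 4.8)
The plan is to prove the identity term by term, handling the piece multiplied by $\pi(X_i)$ and the piece multiplied by $1-\pi(X_i)$ symmetrically, so I describe only the former. Since $\pi(X_i)$ is a function of $X_i$, the law of iterated expectations gives $\Ex[g(Y_i(1),X_i,\gamma^{(1)})\pi(X_i)] = \Ex\big[\Ex[g(Y_i(1),X_i,\gamma^{(1)})\mid X_i]\,\pi(X_i)\big]$, so it suffices to show $\Ex[g(Y_i(1),X_i,\gamma^{(1)})\mid X_i] = \varphi(1,X_i,\gamma_1)$ (and the analogous equality with index $0$), after which one adds the two weighted conditional means and applies iterated expectations once more to conclude.

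The argument for this conditional-mean identity rests on Assumption \ref{ass_ident}. Part ii) (overlap) ensures $\Pr(D_i=1\mid X_i)\ge\kappa>0$ a.s., so every conditional expectation given $(D_i=1,X_i)$ appearing below is well defined on the support of $X_i$. Part i) (unconfoundedness) gives, for any integrable $h$, $\Ex[h(Y_i(1))\mid X_i]=\Ex[h(Y_i(1))\mid D_i=1,X_i]$. I would first apply this with $h$ the identity map and use $Y_i=Y_i(1)$ on $\{D_i=1\}$ to obtain that the two candidate nuisances coincide on the support of $X_i$: $\gamma_1(\cdot)=\gamma(1,\cdot)=\Ex[Y_i\mid D_i=1,X_i=\cdot]=\Ex[Y_i(1)\mid D_i=1,X_i=\cdot]=\Ex[Y_i(1)\mid X_i=\cdot]=\gamma^{(1)}(\cdot)$, so $\gamma^{(1)}$ may be replaced by $\gamma_1$ inside $g$.

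Next I would apply unconfoundedness once more, now with $h(y)=g(y,X_i,\gamma_1)$ — legitimate because $X_i$ and the now-fixed function $\gamma_1$ are held constant under the conditioning — to get $\Ex[g(Y_i(1),X_i,\gamma_1)\mid X_i]=\Ex[g(Y_i(1),X_i,\gamma_1)\mid D_i=1,X_i]$. Using $Y_i=Y_i(1)$ on $\{D_i=1\}$ again, the right-hand side equals $\Ex[g(Y_i,X_i,\gamma_1)\mid D_i=1,X_i]=\varphi(1,X_i,\gamma_1)$ by the definition of $\varphi$. Chaining these equalities yields $\Ex[g(Y_i(1),X_i,\gamma^{(1)})\mid X_i]=\varphi(1,X_i,\gamma_1)$, and the same steps with $D_i=0$ give $\Ex[g(Y_i(0),X_i,\gamma^{(0)})\mid X_i]=\varphi(0,X_i,\gamma_0)$; substituting into the displayed iterated-expectation expression and summing the two weighted terms completes the proof.

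The only genuinely delicate point, I expect, is the bookkeeping for the infinite-dimensional third argument of $g$: one must check that $\gamma^{(1)}$ and $\gamma_1$ agree as elements of $\Gamma$, since $g$ is allowed to depend on the whole function and not merely on its value at $X_i$. Under Assumption \ref{ass_ident} this equality holds on the support of $X_i$, which is all the argument uses; everything else is a routine application of iterated expectations, unconfoundedness, and the definitions of $\gamma_j$ and $\varphi$.
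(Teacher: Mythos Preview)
Your proof is correct and follows essentially the same route as the paper: condition on $X_i$ to pull out $\pi(X_i)$, then use unconfoundedness and $Y_i=Y_i(1)$ on $\{D_i=1\}$ to pass from $\Ex[g(Y_i(1),X_i,\gamma^{(1)})\mid X_i]$ to $\Ex[g(Y_i,X_i,\gamma_1)\mid D_i=1,X_i]=\varphi(1,X_i,\gamma_1)$. You are in fact more careful than the paper about the replacement of $\gamma^{(1)}$ by $\gamma_1$, which the paper's proof performs silently in its first displayed equality.
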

If $g$ does not depend on potential outcomes directly, i.e. $g(u, X_i,\gamma^{(j)}) = g(t, X_i,\gamma^{(j)}) \equiv g(X_i,\gamma^{(j)})$ for all $u,t \in \mathcal{Y}$ then $\varphi(D_i, X_i,\gamma) = g(X_i,\gamma)$. This happens in all IOp examples. Hence, we can have either $\gamma$ or $(\gamma,\varphi)$ as nuisance parameters. To enjoy local robustness to first steps, I first need the following assumption

\begin{assumption}
    \label{ass_linlinearization}
    There exist $(\alpha_1,\alpha_0)$ such that for any $\tilde{\gamma}$ with $\Ex[\tilde{\gamma}(X_i)^2] < \infty$ and $j = 0,1$
    \[
        \frac{d}{d\tau} \Ex[\varphi(j,X_i,\bar{\gamma}_\tau)] \biggr|_{\tau = 0} = \frac{d}{d\tau} \Ex[\alpha_j(D_i,X_i)\bar{\gamma}_\tau(D_i,X_i)] \biggr|_{\tau = 0},
    \]
    where $\bar{\gamma}_\tau = \gamma + \tau \tilde{\gamma}$ and $\Ex[\alpha_j(D_i,X_i)^2] < \infty$.
\end{assumption}
This is a common assumption in the semiparametric literature (e.g. (4.1) in \cite{newey1994asymptotic}) and allows for $\varphi$ to depend non-linearly on $\gamma$, generalizing Assumption 1 in \cite{athey2021policy}. 

\begin{proposition}
    \label{prop_linorthscores}
    The orthogonal score is $\Gamma_i(\pi) = \Gamma_{1i}\pi(X_i) + \Gamma_{0i}(1-\pi(X_i))$, where
    \begin{align*}
    \Gamma_{1i} &= \varphi(1,X_i,\gamma) + \frac{D_i}{e(X_i)}(g(Y_i,X_i,\gamma_1) - \varphi(1,X_i,\gamma)) + \alpha_1(D_i,X_i)(Y_i-\gamma(D_i,X_i)), \\
    \Gamma_{0i} &= \varphi(0,X_i,\gamma) + \frac{1-D_i}{1-e(X_i)}(g(Y_i,X_i,\gamma_0) - \varphi(0,X_i,\gamma)) + \alpha_0(D_i,X_i)(Y_i-\gamma(D_i,X_i)).
    \end{align*}
\end{proposition}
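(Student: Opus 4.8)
The plan is to verify the two properties that make $\Gamma_i(\pi)$ a locally robust (Neyman-orthogonal) moment for $W(\pi)$: \emph{mean consistency} at the truth, $\Ex[\Gamma_i(\pi)]=W(\pi)$ with each nuisance evaluated at its true value, and a \emph{vanishing Gateaux derivative}, $\tfrac{d}{d\tau}\Ex[\Gamma_i(\pi;\eta_{\tau})]\big|_{\tau=0}=0$ for every admissible path $\eta_{\tau}$ in the nuisances $\eta=(\gamma,\varphi,e,\alpha_1,\alpha_0)$. Because $\Gamma_i(\pi)=\Gamma_{1i}\pi(X_i)+\Gamma_{0i}(1-\pi(X_i))$ and $\pi(\cdot)$ is a fixed $X_i$-measurable weight, it suffices to prove both statements for $\Gamma_{1i}$; the argument for $\Gamma_{0i}$ is the mirror image with $D_i/e(X_i)$ replaced by $(1-D_i)/(1-e(X_i))$, and one then recombines through $\pi$ and $1-\pi$.

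For mean consistency I would condition on $(D_i,X_i)$. The term $\alpha_1(D_i,X_i)(Y_i-\gamma(D_i,X_i))$ has conditional mean zero since $\gamma(D_i,X_i)=\Ex[Y_i\mid D_i,X_i]$. For the propensity-weighted term, $g(Y_i,X_i,\gamma_1)-\varphi(1,X_i,\gamma)$ has conditional mean zero on $\{D_i=1\}$ by the very definition $\varphi(D_i,X_i,\gamma)=\Ex[g(Y_i,X_i,\gamma)\mid D_i,X_i]$; using $\Ex[D_i/e(X_i)\mid X_i]=1$ (overlap, Assumption \ref{ass_ident}(ii)) together with unconfoundedness (Assumption \ref{ass_ident}(i)), which makes $\Ex[\tfrac{D_i}{e(X_i)}h(Y_i,X_i)\mid X_i]=\Ex[h(Y_i(1),X_i)\mid X_i]$, collapses $\Ex[\Gamma_{1i}\mid X_i]$ to $\varphi(1,X_i,\gamma_1)$. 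Proposition \ref{prop_linid} then gives $\Ex[\Gamma_i(\pi)]=W(\pi)$.

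For orthogonality I would split the derivative into the four directions $\alpha_1$, $e$, $\varphi$, $\gamma$. The $\alpha_1$ direction is $\Ex[(\tilde\alpha_1-\alpha_1)(D_i,X_i)(Y_i-\gamma(D_i,X_i))]=0$, again by the conditional-mean-zero residual; the $e$ direction is $-\Ex[\tfrac{D_i(\tilde e-e)(X_i)}{e(X_i)^2}(g(Y_i,X_i,\gamma_1)-\varphi(1,X_i,\gamma))]=0$ for the same reason after conditioning on $(D_i,X_i)$. In the $\varphi$ direction the explicit occurrence of $\varphi(1,X_i,\gamma)$ and its occurrence inside the propensity correction move with opposite signs and, because $\Ex[D_i/e(X_i)\mid X_i]=1$, cancel exactly. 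The $\gamma$ direction is the substantive one: writing $\bar{\gamma}_{\tau}=\gamma+\tau\tilde\gamma$ with $\Ex[\tilde\gamma(X_i)^2]<\infty$ and using unconfoundedness to get $\Ex[\tfrac{D_i}{e(X_i)}g(Y_i,X_i,\bar{\gamma}_{\tau})]=\Ex[g(Y_i(1),X_i,\bar{\gamma}_{\tau})]=\Ex[\varphi(1,X_i,\bar{\gamma}_{\tau})]$, the $\gamma$-derivative of $\Ex[\Gamma_{1i}]$ reduces to $\tfrac{d}{d\tau}\Ex[\varphi(1,X_i,\bar{\gamma}_{\tau})]\big|_{\tau=0}-\Ex[\alpha_1(D_i,X_i)\tilde\gamma(D_i,X_i)]$, and Assumption \ref{ass_linlinearization} identifies the first term with $\Ex[\alpha_1(D_i,X_i)\tilde\gamma(D_i,X_i)]$, so the $\gamma$ direction vanishes. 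Summing the four directions gives the claim.

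The main obstacle is precisely this last direction: since $g$ (and hence $\varphi$) may depend on the infinite-dimensional $\gamma$ nonlinearly, one cannot differentiate $\Ex[\varphi(1,X_i,\bar{\gamma}_{\tau})]$ by inspection, and the construction is designed around Assumption \ref{ass_linlinearization}, which supplies the Riesz representer $\alpha_1$ of the linear functional $\tilde\gamma\mapsto\tfrac{d}{d\tau}\Ex[\varphi(1,X_i,\bar{\gamma}_{\tau})]\big|_{\tau=0}$; the added term $\alpha_1(D_i,X_i)(Y_i-\gamma(D_i,X_i))$ exists only to cancel this derivative. The remaining care is bookkeeping: tracking that $\varphi(1,X_i,\gamma)$ depends on $\gamma$ only through its $D=1$ restriction $\gamma_1$, and using the tower-plus-unconfoundedness identity $\Ex[g(Y_i(1),X_i,\cdot)]=\Ex[\varphi(1,X_i,\cdot)]$ both in the mean step and in the $\gamma$ direction. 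Overlap and the square-integrability conditions in Assumptions \ref{ass_ident} and \ref{ass_linlinearization} ensure every expectation is finite and that differentiation and expectation may be interchanged.
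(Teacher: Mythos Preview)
Your proof is correct but takes a different route from the paper. You \emph{verify}: given the formula for $\Gamma_{1i}$, you check mean consistency at the truth and Neyman orthogonality by differentiating in each nuisance direction $(\gamma,\varphi,e,\alpha_1)$ separately, using conditional-mean-zero residuals for the $\alpha_1$, $e$, and $\varphi$ directions and Assumption~\ref{ass_linlinearization} for the $\gamma$ direction. The paper instead \emph{constructs} the score in the style of \cite{chernozhukov2022locally}: it differentiates the identifying moment $\Ex[\varphi_\tau(1,X_i,\bar\gamma_\tau)\pi(X_i)]$ in $\tau$, splits by the chain rule into a $\varphi$-piece and a $\gamma$-piece, and rewrites each as a pathwise derivative $\tfrac{d}{d\tau}\Ex_\tau[\cdot]$ of a mean-zero correction term (using LIE, the fact that $\varphi$ and $\gamma$ are projections, and Assumption~\ref{ass_linlinearization}); the two correction terms in $\Gamma_{1i}$ are then read off, with an appeal to \cite{chernozhukov2022locally} to close the argument. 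Your approach is more self-contained---it never needs the perturbed expectation $\Ex_\tau$ and makes orthogonality with respect to $e$ and $\alpha_1$ explicit---while the paper's approach explains why the score takes this particular form rather than merely confirming it does.
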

Orthogonal scores are formed by identifying scores and correction terms for nuisance parameters $\varphi$ and $\gamma$. Whenever $g$ does not depend on the potential outcomes directly we have that $g(Y_i, X_i,\gamma_j) - \varphi(j, X_i,\gamma) = 0$ for $j = 0,1$. To estimate the welfare for a given $\pi \in \Pi$ we employ cross-fitting as in \cite{chernozhukov2022locally}. Let the data be split in $L$ groups $I_1,...,I_l$, then
\[
\hat{W}_n(\pi) = \frac{1}{n}\sum_{l = 1}^L \sum_{i \in I_l} \hat{\Gamma}_{1i,l}\pi(X_i) + \hat{\Gamma}_{0i,l}(1-\pi(X_i)),
\]
where 
\begin{align*} 
    \hat{\Gamma}_{1i,l} &= \hat{\varphi}_l(1,X_i,\hat{\gamma}_l) + \frac{D_i}{\hat{e}_l(X_i)}(Y_i - \hat{\varphi}_l(1,X_i,\hat{\gamma}_l)) + \hat{\alpha}_{1,l}(D_i,X_i)(Y_i-\hat{\gamma}_l(D_i,X_i)), \\
    \hat{\Gamma}_{0i,l} &= \hat{\varphi}_l(0,X_i,\hat{\gamma}_l) + \frac{1-D_i}{1-\hat{e}_l(X_i)}(Y_i - \hat{\varphi}_l(0,X_i,\hat{\gamma}_l)) + \hat{\alpha}_{0,l}(D_i,X_i)(Y_i-\hat{\gamma}_l(D_i,X_i)),
\end{align*}
and $(\hat{\varphi}_l,\hat{e}_l,\hat{\gamma}_l,\hat{\alpha}_{j,l})$, $j = 0,1$, are estimators of the nuisance functions which do not use observations in $I_l$.

\begin{examplecont}{1}[IOp Atkinson (cont.)]
For $\theta \in (0,1]$, let
\[
    U(\gamma(x)) = \begin{cases}
        \frac{\gamma(x)^{1-\theta}}{1-\theta} & \text{if } \theta \in (0,1) \\
        \log(\gamma(x)) & \text{if } \theta = 1.
    \end{cases}
\]
In this case, $g = U$. The orthogonal score for $\theta \in (0,1]$ is
\begin{align*}
    \Gamma_i(\pi) &= U(\gamma(1,X_i)) + \frac{\gamma(D_i,X_i)^{-\theta}D_i}{e(X_i)}(Y_i - \gamma(D_i,X_i))\pi(X_i) \\
    &+ U(\gamma(0,X_i)) + \frac{\gamma(D_i,X_i)^{-\theta}(1-D_i)}{1-e(X_i)}(Y_i - \gamma(D_i,X_i))(1-\pi(X_i)),
\end{align*}
i.e. $\alpha_1(D_i,X_i) = e(X_i)^{-1}\gamma(D_i,X_i)^{-\theta}D_i$ and $\alpha_0(D_i,X_i) = (1-e(X_i))^{-1}\gamma(D_i,X_i)^{-\theta}(1-D_i)$. 
$\blacksquare$
\end{examplecont}
The estimator of the optimal treatment rule among a class of rules $\Pi$ is $\hat{\pi} = \arg \max_{\pi \in \Pi} \hat{W}_n(\pi)$. 

\section{Policy learning with U-statistics}
\label{sec_pollearn_Ustats}
Let now $\pi_{ab}(X_i,X_j) = \ind(\pi(X_i) = a)\times\ind(\pi(X_j) = b)$ with $a,b \in \{0,1\}$. Now 
\begin{align}
    \label{eq_Uwf}
    W(\pi) &= \mathbb{E}\biggl[ \sum_{(a,b) \in \{0,1\}^2}g(Y_i(a),X_i,Y_j(b),X_j,\gamma^{(a)},\gamma^{(b)}) \pi_{ab}(X_i,X_j) \biggr].
\end{align}
Now, $W(\pi)$ depends on pairwise comparisons. Also, we are summing across $\{0,1\}^2$. This is because we have to take into account when both members of the pair are under treatment, or just one of them or none of them. Finally, we have $\pi_{ab}(X_i,X_j)$ instead of $\pi(X_i)$ since we need to account for when both members of the pair are allocated to treatment, just one of them or none of them. 

\begin{example}[Inequality]
\label{eg_inequality}
We can accommodate the standard Gini SWF with
\[
g(Y_i(a),Y_j(b)) = (1/2)(Y_i(a) + Y_j(b) - |Y_i(a) - Y_j(b)|).
\]
 $\blacksquare$
\end{example}

\begin{example}[Inequality of Opportunity IOp]
    \label{eg_iop}
        $\Ex[\gamma(X_i)](1-G(\gamma(X_i)))$ fits our setting by letting
        \[
        g(X_i,X_j,\gamma^{(a)},\gamma^{(b)}) = (1/2)(\gamma^{(a)}(X_i) + \gamma^{(b)}(X_j) - |\gamma^{(a)}(X_i) - \gamma^{(b)}(X_j)|).
        \]
        $\blacksquare$
\end{example}

\begin{example}[Kendal-$\tau$]
\label{eg_interm}
To allocate a treatment targeting a specific Kendall-$\tau$, say $t \in \mathbb{R}$, we have to extend our setting to transformations of the right-hand side of \ref{eq_Uwf}
\begin{align*}
g(Y_i(a),X_{1i},Y_j(b),X_{1j}) &= sgn(Y_i(a)-Y_j(b))sgn(X_{1i} - X_{1j}), \\
W(\pi) &= -\biggl|\mathbb{E}\biggl[ \sum_{(a,b) \in \{0,1\}^2}g(Y_i(a),X_{1i},Y_j(b),X_{1j}) \pi_{ab}(X_i,X_j) \biggr] - t \biggr|.
\end{align*}
$\blacksquare$
\end{example}

\noindent For $a,b \in \{0,1\}$ let now $\varphi(a,X_i,b,X_j,\gamma_a,\gamma_b) = \Ex[g(Y_i,X_i,Y_j,X_j,\gamma_a,\gamma_b)|D_i = a,X_i,D_j = b,X_j]$ and $e_{ab}(X_i,X_j) = e_a(X_i)e_b(X_j)$ where for $c \in \{0,1\}$, $e_c(X_i) = \Pr(D_i = c |X_i)$.

\begin{proposition}
    \label{prop_Uid}
    Under Assumption \ref{ass_ident}, $W(\pi)$ in (\ref{eq_Uwf}) is identified as
    \begin{align*}
        W(\pi) &= \Ex\biggl[\sum_{(a,b)\in \{0,1\}^2} \varphi(a,X_i,b,X_j,\gamma_a,\gamma_b)\pi_{ab}(X_i,X_j).\biggr],
    \end{align*}
\end{proposition}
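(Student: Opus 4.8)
The plan is to mirror the argument for Proposition \ref{prop_linid}, but now over pairs of independent observations rather than a single observation. Fix a pair $(i,j)$ with $i \neq j$. The key observation is that the summand $g(Y_i(a),X_i,Y_j(b),X_j,\gamma^{(a)},\gamma^{(b)})\,\pi_{ab}(X_i,X_j)$ depends on the potential outcomes only through $(Y_i(a),Y_j(b))$, and the indicator $\pi_{ab}(X_i,X_j)=\ind(\pi(X_i)=a)\ind(\pi(X_j)=b)$ is a deterministic function of $(X_i,X_j)$ alone. So the first step is to condition on $(X_i,X_j)$ and pull the policy indicator out:
\[
\Ex\bigl[g(Y_i(a),X_i,Y_j(b),X_j,\gamma^{(a)},\gamma^{(b)})\,\pi_{ab}(X_i,X_j)\bigr]
= \Ex\bigl[\pi_{ab}(X_i,X_j)\,\Ex[g(Y_i(a),X_i,Y_j(b),X_j,\gamma^{(a)},\gamma^{(b)})\mid X_i,X_j]\bigr].
\]

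The second step is to rewrite the inner conditional expectation in terms of observables. By Assumption \ref{ass_ident}(i), $(Y_i(1),Y_i(0))\perp D_i\mid X_i$, and by the i.i.d. sampling the pair $(Y_i(1),Y_i(0),D_i,X_i)$ is independent of $(Y_j(1),Y_j(0),D_j,X_j)$; hence the joint conditional law of $(Y_i(a),Y_j(b))$ given $(X_i,X_j)$ equals the joint conditional law given $(D_i=a,X_i,D_j=b,X_j)$. This lets me replace $Y_i(a)$ by the observed $Y_i$ on the event $\{D_i=a\}$ and likewise $Y_j(b)$ by $Y_j$ on $\{D_j=b\}$, so that
\[
\Ex[g(Y_i(a),X_i,Y_j(b),X_j,\gamma_a,\gamma_b)\mid X_i,X_j]
= \Ex[g(Y_i,X_i,Y_j,X_j,\gamma_a,\gamma_b)\mid D_i=a,X_i,D_j=b,X_j]
= \varphi(a,X_i,b,X_j,\gamma_a,\gamma_b),
\]
using the definition of $\varphi$ and that under Assumption \ref{ass_ident}(ii) the conditioning event has positive probability (overlap guarantees $e_a(X_i)e_b(X_j)\geq\kappa^2>0$). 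Substituting back and summing over $(a,b)\in\{0,1\}^2$, then using linearity of expectation to move the sum outside, gives the stated identification formula; Assumption \ref{ass_ident}(ii) also guarantees all the relevant expectations are finite so the interchange of sum and expectation is legitimate.

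I expect the main subtlety — rather than a genuine obstacle — to be bookkeeping around the pair independence: one must be careful that $\gamma^{(a)}(X_i)$ and $\gamma^{(b)}(X_j)$ are the \emph{true} conditional-expectation functions evaluated at the respective covariates, and that conditioning on $(D_i=a,D_j=b)$ does not disturb them because of the conditional-independence-plus-sampling-independence structure. For the Kendall-$\tau$ example (Example \ref{eg_interm}) the identification statement as written applies to the expression \emph{inside} the absolute value; since the outer map $m\mapsto -|m-t|$ is a fixed deterministic transformation, identification of the inner U-statistic expectation immediately yields identification of $W(\pi)$, so no separate argument is needed there.
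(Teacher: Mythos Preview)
Your proposal is correct and follows essentially the same route as the paper: condition on $(X_i,X_j)$ via iterated expectations, invoke unconfoundedness (together with independence across $i$ and $j$) to pass from conditioning on $(X_i,X_j)$ to conditioning on $(D_i=a,X_i,D_j=b,X_j)$, and then replace potential outcomes by observed outcomes on that event to recover $\varphi$. Your extra care about $\gamma^{(a)}=\gamma_a$ and about overlap guaranteeing the conditioning event has positive probability is welcome but not a departure from the paper's argument.
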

Now I apply Proposition \ref{prop_Uid} to identify the welfare in each of our three main examples. 

\begin{examplecont}{2}[Inequality (cont.)]
    In this example, welfare is identified by
    \[
        W(\pi) = \mathbb{E}\biggl[\frac{1}{2} \sum_{(a,b) \in \{0,1\}^2} \Ex(Y_i + Y_j - |Y_i - Y_j| \mid D_i = a,X_i,D_j = b,X_j)\pi_{ab}(X_i,X_j)\biggr].    
    \]
    $\blacksquare$
\end{examplecont}

\begin{examplecont}{3}[IOp (cont.)]
    In this example, welfare is identified by
    \[
    W(\pi) = \frac{1}{2}\mathbb{E}\biggl[\sum_{(a,b)\in \{0,1\}^2}\biggl(\gamma_a(X_i) + \gamma_b(X_j) - |\gamma_a(X_i) - \gamma_b(X_j)|\biggr)\pi_{ab}(X_i,X_j)\biggr].
    \]
    $\blacksquare$
\end{examplecont}

\begin{examplecont}{4}[IGM (cont.)]
    In this example, welfare is identified by
    \[
    W(\pi) = - \biggl| \mathbb{E}\biggl[\frac{1}{2} \sum_{(a,b) \in \{0,1\}^2} \Ex(sgn(X_{1i} - X_{1j}) sgn(Y_i - Y_j) \mid D_i = a,X_i,D_j = b,X_j)\pi_{ab}(X_i,X_j)\biggr] - t \biggr|.
    \]
    $\blacksquare$
\end{examplecont}
To compute orthogonal scores we need to assume a linearization property as in Assumption \ref{ass_linlinearization}.

\begin{assumption}
    \label{ass_Ulinearization}
    There exist $\alpha_{ab,p}$, $P<\infty$, and $(c_{1p}, c_{2p})$ for $p = 1,...,P$, such that for all $(a,b) \in \{0,1\}^2$ the following linearization holds
    \[
    \frac{d}{d\tau} \mathbb{E}[\varphi(a,X_i,b,X_j,\bar{\gamma}_\tau)] = \frac{d}{d\tau} \mathbb{E}\biggl[\sum_{p=1}^P\alpha_{ab,p}^\gamma(D_i,X_i,D_j,X_j)(c_{1p}\bar{\gamma}_\tau(D_i,X_i) + c_{2p}\bar{\gamma}_\tau(D_j,X_j))\biggr],
    \]
    where $\bar{\gamma}_\tau$ is defined as in Assumption \ref{ass_linlinearization} and $\Ex[\alpha_j(D_i,X_i)^2] < \infty$.
\end{assumption}

\begin{proposition}
    \label{prop_Uorthscores}
    The orthogonal score is $\Gamma_{ij}(\pi) = \sum_{(a,b) \in \{0,1\}^2} \Gamma_{ij}^{ab}\pi_{ab}(X_i,X_j),$ with
    \begin{align*}
        \Gamma_{ij}^{ab} = 
        \varphi(a,X_i,b,X_j,\gamma_a,\gamma_b) &+ \phi_{ab}^\varphi(D_i,X_i,D_j,X_j,\varphi,\alpha^{\varphi}) + \phi_{ab}^\gamma(D_i,X_i,D_j,X_j,\gamma,\alpha^{\gamma}), \\
        \phi_{ab}^\gamma(D_i,X_i,D_j,X_j,e,\alpha^{\gamma}) &= \sum_{p=1}^P \alpha_{ab,p}^\gamma(D_i,X_i,D_j,X_j)(c_{1p} Y_i + c_{2p} Y_j - c_{1p}\gamma(D_i,X_i) - c_{2p} \gamma(D_j,X_j)), \\
        \phi_{ab}^\varphi(D_i,X_i,D_j,X_j,\varphi,\alpha^{m}) &=  \alpha_{ab}^\varphi(D_i,X_i,D_j,X_j)(g(Y_i,X_i,Y_j,X_j,\gamma_a,\gamma_b) - \varphi(D_i,X_i,D_j,X_j,\gamma_a,\gamma_b)),
    \end{align*}
    and $\alpha_{ab}^\varphi(D_i,X_i,D_j,X_j) = D_{ij}^{ab}/e_{ab}(X_i,X_j)$ and $D_{ij}^{ab} = \ind(D_i = a)\ind(D_j = b)$.
\end{proposition}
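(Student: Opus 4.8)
I would prove this by verifying directly the two defining properties of an orthogonal score for the value functional $W(\pi)$: first, that it reproduces the welfare, $\Ex[\Gamma_{ij}(\pi)]=W(\pi)$ at the true nuisances $\eta_0=(\varphi,e,\gamma)$, the expectation being over independent $Z_i,Z_j$; and second, that the pathwise derivative $\frac{d}{d\tau}\Ex[\Gamma_{ij}(\pi;\eta_\tau)]|_{\tau=0}$ vanishes for perturbations of each component of $\eta$ around $\eta_0$. I would first record a few elementary facts: (i) $\pi_{ab}(X_i,X_j)$, $e_{ab}(X_i,X_j)$ and $\varphi(a,X_i,b,X_j,\gamma_a,\gamma_b)$ are functions of $(X_i,X_j)$ only; (ii) on $\{D_{ij}^{ab}=1\}$ one has $D_i=a$, $D_j=b$, so the realized nuisance $\varphi(D_i,X_i,D_j,X_j,\gamma_a,\gamma_b)$ agrees with $\varphi(a,X_i,b,X_j,\gamma_a,\gamma_b)$; (iii) by i.i.d.\ sampling and Assumption \ref{ass_ident}(i), $\Ex[D_{ij}^{ab}\mid X_i,X_j]=e_a(X_i)e_b(X_j)=e_{ab}(X_i,X_j)$, so $\Ex[\alpha_{ab}^\varphi\mid X_i,X_j]=1$ (well defined by Assumption \ref{ass_ident}(ii)); (iv) by independence of observation $i$ from observation $j$ and the definition of $\gamma$, $\Ex[Y_i-\gamma(D_i,X_i)\mid D_i,X_i,D_j,X_j]=0$, and symmetrically in $j$; and (v) by definition of $\varphi$, $\Ex[g(Y_i,X_i,Y_j,X_j,h_a,h_b)\mid D_i=a,X_i,D_j=b,X_j]=\varphi(a,X_i,b,X_j,h_a,h_b)$ for any functions $h_a,h_b$ plugged into $g$.

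For the first property I would condition each summand $\Gamma_{ij}^{ab}\pi_{ab}$ on $(D_i,X_i,D_j,X_j)$: by (i), (ii) and (v) the term $\phi_{ab}^\varphi=\alpha_{ab}^\varphi(g-\varphi(D_i,X_i,D_j,X_j,\gamma_a,\gamma_b))$ has conditional mean zero, and by (i) and (iv) so does $\phi_{ab}^\gamma$. Hence $\Ex[\Gamma_{ij}^{ab}\pi_{ab}]=\Ex[\varphi(a,X_i,b,X_j,\gamma_a,\gamma_b)\pi_{ab}]$, and summing over $(a,b)\in\{0,1\}^2$ gives $\Ex[\Gamma_{ij}(\pi)]=W(\pi)$ by Proposition \ref{prop_Uid}.

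For orthogonality I would take the three nuisance directions in turn. Perturbing $\varphi$ alone: the two $\varphi$-dependent terms $\varphi(a,X_i,b,X_j,\gamma_a,\gamma_b)\pi_{ab}$ and $-\alpha_{ab}^\varphi\varphi(D_i,X_i,D_j,X_j,\gamma_a,\gamma_b)\pi_{ab}$ are linear in $\varphi$ and, by (i)--(iii), have the same expectation for every $\varphi$; they cancel, so the derivative is identically zero. Perturbing $e$ alone: $e$ enters only through $\alpha_{ab}^\varphi$ multiplying the residual $g-\varphi(D_i,X_i,D_j,X_j,\gamma_a,\gamma_b)$, which by (ii) and (v) has conditional mean zero given $(D_i,X_i,D_j,X_j)$ on $\{D_{ij}^{ab}=1\}$ and is multiplied by zero off that event; so $\Ex[\phi_{ab}^\varphi\pi_{ab}]=0$ for every $e$ and the derivative vanishes. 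Perturbing $\gamma\mapsto\bar\gamma_\tau=\gamma+\tau\tilde\gamma$: the $\phi_{ab}^\varphi$ piece remains at expectation zero for all $\tau$ by the same residual argument, now with $\bar\gamma_\tau$ in the $g$-slot via (v); in $\phi_{ab}^\gamma$ any $\gamma$-dependence of $\alpha_{ab,p}^\gamma$ contributes only against a conditionally mean-zero residual by (iv) and drops, leaving $\frac{d}{d\tau}\Ex[\phi_{ab}^\gamma(\bar\gamma_\tau)\pi_{ab}]|_{0}=-\Ex[\sum_{p}\alpha_{ab,p}^\gamma(c_{1p}\tilde\gamma(D_i,X_i)+c_{2p}\tilde\gamma(D_j,X_j))\pi_{ab}]$; and by Assumption \ref{ass_Ulinearization} the derivative of $\Ex[\varphi(a,X_i,b,X_j,\bar\gamma_\tau)\pi_{ab}]$ at $\tau=0$ equals $+\Ex[\sum_{p}\alpha_{ab,p}^\gamma(c_{1p}\tilde\gamma(D_i,X_i)+c_{2p}\tilde\gamma(D_j,X_j))\pi_{ab}]$. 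These cancel for each $(a,b)$, so the $\gamma$-derivative of $\Ex[\Gamma_{ij}(\pi)]$ is zero; with the other two directions this yields Neyman orthogonality.

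I expect the main obstacle to be reconciling Assumption \ref{ass_Ulinearization}, which is stated for the unweighted map $\gamma\mapsto\Ex[\varphi(a,X_i,b,X_j,\bar\gamma_\tau)]$, with the policy-weighted object $\Ex[\varphi(a,X_i,b,X_j,\bar\gamma_\tau)\pi_{ab}(X_i,X_j)]$ that actually appears in $W(\pi)$. I would argue the linearization transfers verbatim with $\alpha_{ab,p}^\gamma$ replaced by $\alpha_{ab,p}^\gamma\pi_{ab}$ and the same constants $(c_{1p},c_{2p})$: since $\pi_{ab}=\ind(\pi(X_i)=a)\ind(\pi(X_j)=b)$ is bounded, $(X_i,X_j)$-measurable and factorizes across $i$ and $j$, multiplying the functional by $\pi_{ab}$ only multiplies its Riesz representer by $\pi_{ab}$ without altering the additive-in-$(\gamma(D_i,X_i),\gamma(D_j,X_j))$ structure, which arises because $\varphi$ depends on $\gamma$ only through $\gamma(a,\cdot)$ and $\gamma(b,\cdot)$. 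The remaining points---interchange of differentiation and expectation and integrability of every term---are routine given the square-integrability conditions in Assumptions \ref{ass_linlinearization}--\ref{ass_Ulinearization} and overlap \ref{ass_ident}(ii). Finally, I would note that Proposition \ref{prop_Uorthscores} delivers the orthogonal score for the inner functional $\Ex[\sum_{ab}g\,\pi_{ab}]$ only; for the Kendall-$\tau$ SWF the outer map $z\mapsto-|z-t|$ is composed downstream and lies outside this statement.
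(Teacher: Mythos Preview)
Your proposal is correct and takes a genuinely different route from the paper. The paper \emph{derives} the score constructively using the influence-function machinery of \cite{chernozhukov2022locally} and \cite{escanciano2023machine}: it writes $m_{ab}(Z_i,Z_j,\gamma,\varphi)=\varphi(a,X_i,b,X_j,\gamma_a,\gamma_b)$, splits the Gateaux derivative $\frac{d}{d\tau}\Ex[m_{ab}(Z_i,Z_j,\bar\gamma_\tau,\varphi_\tau)]$ by the chain rule into a $\gamma$-piece and a $\varphi$-piece, invokes Assumption~\ref{ass_Ulinearization} to linearize the $\gamma$-piece and cites Lemma~1 and (2.16) of \cite{escanciano2023machine} to read off $\phi_{ab}^\gamma$, and handles the $\varphi$-piece by rewriting $\Ex[\varphi(a,X_i,b,X_j,\gamma_a,\gamma_b)]=\Ex[\varphi(D_i,X_i,D_j,X_j,\gamma_a,\gamma_b)\,D_{ij}^{ab}/e_{ab}(X_i,X_j)]$ so that $\varphi$ appears as a projection with Riesz weight $D_{ij}^{ab}/e_{ab}$, yielding $\phi_{ab}^\varphi$. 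You instead take the displayed formula as given and \emph{verify} the two defining properties---unbiasedness $\Ex[\Gamma_{ij}(\pi)]=W(\pi)$ and first-order insensitivity to each nuisance direction---by conditioning on $(D_i,X_i,D_j,X_j)$ and exploiting the residual structure. Your argument is more elementary and self-contained (no appeal to external lemmas about orthogonal moment construction), while the paper's derivation is constructive and explains \emph{why} $\alpha_{ab}^\varphi=D_{ij}^{ab}/e_{ab}$ is the right weight rather than merely checking it works. Your observation that Assumption~\ref{ass_Ulinearization} is stated without the $\pi_{ab}$ weight, and your resolution via factorization and boundedness of $\pi_{ab}$, is a point the paper's proof also leaves implicit: the derivation there is carried out for the unweighted $\Ex[m_{ab}]$ and the $\pi_{ab}$ multiplier is attached afterward without comment.
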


\begin{examplecont}{2}[Inequality (cont.)]
    In this example, we have that 
    \begin{align*}
        \Gamma_{ij}^{ab} &= \frac{1}{2}\Ex(Y_i + Y_j - |Y_i - Y_j| \mid D_i = a,X_i,D_j = b,X_j) \\
        &+ \frac{D_{ij}^{ab}}{2e_{ab}(X_i,X_j)}(Y_i + Y_j - |Y_i - Y_j| -\Ex(Y_i + Y_j - |Y_i - Y_j| \mid D_i = a,X_i,D_j = b,X_j)).
    \end{align*}
    $\blacksquare$
\end{examplecont}

 \begin{examplecont}{3}[IOp (cont.)]
    I give the orthogonal score for IOp as a Proposition.
    \begin{proposition}
        Assume for all $(a,b) \in \{0,1\}^2$ that either (i) $\mathbb{P}(\gamma_a(X_i) - \gamma_b(X_j) = 0) = 0$ or that (ii) $x_i \neq x_j \implies \gamma_a(X_i) - \gamma_b(X_j) \neq 0$ and let $\delta_{ij}^{ab} = sgn(\gamma_a(X_i) - \gamma_b(X_j))$, then
    \begin{align*}
        \Gamma_{ij}^{ab} &= \frac{1}{2}\biggl(\gamma_a(X_i) + \gamma_b(X_j) - |\gamma_a(X_i) - \gamma_b(X_j)| \\
        &+ \frac{\ind(D_i = a)}{e_a(X_i)}(1-\delta_{ij}^{ab})(Y_i - \gamma(D_i,X_i)) + \frac{\ind(D_j = b)}{e_b(X_j)}(1+\delta_{ij}^{ab})(Y_j - \gamma(D_j,X_j))\biggr).
    \end{align*}
    \end{proposition}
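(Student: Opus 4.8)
The plan is to read this orthogonal score off from Proposition \ref{prop_Uorthscores}, once the linearization Assumption \ref{ass_Ulinearization} has been verified for the IOp kernel. The first observation is that here $g$ depends on the data only through $\gamma$ and never directly through the potential outcomes, so $\varphi(a,X_i,b,X_j,\gamma_a,\gamma_b) = \tfrac12(\gamma_a(X_i) + \gamma_b(X_j) - |\gamma_a(X_i) - \gamma_b(X_j)|)$ with no conditional expectation left to take, and $g - \varphi \equiv 0$. Hence the correction term $\phi_{ab}^\varphi$ in Proposition \ref{prop_Uorthscores} vanishes identically, $\varphi$ is not a separate nuisance, and the whole job reduces to finding the correction $\phi_{ab}^\gamma$ for $\gamma$.

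To do that I would compute the Gâteaux derivative of $\tau \mapsto \Ex[\varphi(a,X_i,b,X_j,\bar\gamma_\tau)]$ at $\tau = 0$ along $\bar\gamma_\tau = \gamma + \tau\tilde\gamma$. The $\gamma_a(X_i)$ and $\gamma_b(X_j)$ pieces are affine in $\tau$ and contribute $\tilde\gamma(a,X_i)$ and $\tilde\gamma(b,X_j)$; for the absolute value I would use that $t \mapsto |t|$ is differentiable off the origin with derivative $sgn(t)$, so the chain rule gives the pointwise $\tau$-derivative $\delta_{ij}^{ab}(\tilde\gamma(a,X_i) - \tilde\gamma(b,X_j))$ wherever $\gamma_a(X_i) \neq \gamma_b(X_j)$. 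Collecting terms,
\[
\frac{d}{d\tau}\Ex[\varphi(a,X_i,b,X_j,\bar\gamma_\tau)]\Big|_{\tau=0} = \tfrac12\Ex\bigl[(1-\delta_{ij}^{ab})\tilde\gamma(a,X_i) + (1+\delta_{ij}^{ab})\tilde\gamma(b,X_j)\bigr].
\]
To match the form demanded by Assumption \ref{ass_Ulinearization}, in which $\bar\gamma_\tau$ must be evaluated at $(D_i,X_i)$ and $(D_j,X_j)$, I would invoke Assumption \ref{ass_ident}(i) together with i.i.d. sampling in the form $\Ex[\ind(D_i=a)\tilde\gamma(D_i,X_i)\mid X_i,X_j,D_j] = e_a(X_i)\tilde\gamma(a,X_i)$. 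Multiplying inside the expectation by $\ind(D_i=a)/e_a(X_i)$ (and symmetrically $\ind(D_j=b)/e_b(X_j)$) then converts $\tilde\gamma(a,X_i)$ into $\tilde\gamma(D_i,X_i)$ while leaving the $(X_i,X_j)$-measurable factors $\tfrac12(1\mp\delta_{ij}^{ab})$ intact. This identifies $P=2$, $(c_{11},c_{21})=(1,0)$, $(c_{12},c_{22})=(0,1)$, $\alpha_{ab,1}^\gamma = \tfrac12\ind(D_i=a)e_a(X_i)^{-1}(1-\delta_{ij}^{ab})$ and $\alpha_{ab,2}^\gamma = \tfrac12\ind(D_j=b)e_b(X_j)^{-1}(1+\delta_{ij}^{ab})$, which are square-integrable by Assumption \ref{ass_ident}(ii) and $|\delta_{ij}^{ab}| \le 1$.

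Finally I would substitute these $\alpha_{ab,p}^\gamma$ and $(c_{1p},c_{2p})$ into the formula for $\phi_{ab}^\gamma$ in Proposition \ref{prop_Uorthscores}, obtaining $\phi_{ab}^\gamma = \tfrac12\ind(D_i=a)e_a(X_i)^{-1}(1-\delta_{ij}^{ab})(Y_i-\gamma(D_i,X_i)) + \tfrac12\ind(D_j=b)e_b(X_j)^{-1}(1+\delta_{ij}^{ab})(Y_j-\gamma(D_j,X_j))$, and then add $\varphi(a,X_i,b,X_j,\gamma_a,\gamma_b)$ and the vanishing $\phi_{ab}^\varphi$ to recover the stated $\Gamma_{ij}^{ab}$. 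The one step requiring genuine care is the interchange of $d/d\tau$ and $\Ex$ across the kink of $|\cdot|$, which is exactly what conditions (i)--(ii) license: under (i) the set $\{\gamma_a(X_i)=\gamma_b(X_j)\}$ is $F_0$-null, and the difference quotients of $\tau \mapsto |\bar\gamma_\tau(a,X_i)-\bar\gamma_\tau(b,X_j)|$ are dominated by $|\tilde\gamma(a,X_i)|+|\tilde\gamma(b,X_j)| \in L^1$ (using $\Ex[\tilde\gamma(X_i)^2]<\infty$), so dominated convergence applies; under (ii) that set is contained in $\{X_i=X_j\}$, where for $a\neq b$ the relevant contributions are annihilated in (\ref{eq_Uwf}) by $\pi_{ab}=0$ and for $a=b$ the integrand is identically zero, consistent with $sgn(0)=0$. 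I expect this differentiability-under-the-integral step to be the main obstacle; everything else is bookkeeping through Proposition \ref{prop_Uorthscores}.
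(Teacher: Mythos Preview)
Your proposal is correct and follows essentially the same route as the paper: both differentiate the absolute value via $sgn(\cdot)$ under the no-mass-at-zero hypothesis, then use the $\ind(D_i=a)/e_a(X_i)$ reweighting to pass from $\tilde\gamma(a,X_i)$ to $\tilde\gamma(D_i,X_i)$, and combine the linear and absolute-value pieces. The only organizational difference is that you explicitly verify Assumption~\ref{ass_Ulinearization} and read the score off Proposition~\ref{prop_Uorthscores}, whereas the paper rederives the correction terms directly via the $\Ex_\tau$ chain-rule device and cites \cite{escanciano2023machine}; your treatment of condition~(ii) is more explicit than the paper's (which simply defers to that reference), though note that your $\pi_{ab}=0$ argument for $a\neq b$ pertains to the full score $\Gamma_{ij}(\pi)$ rather than to $\Gamma_{ij}^{ab}$ in isolation.
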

    \label{prop_ioporthscores}
    These assumptions deal with the point of non-differentiability of the absolute value. For a thorough discussion see \cite{escanciano2023machine}.
    $\blacksquare$
\end{examplecont}

\noindent I use a cross-fitting algorithm used in \cite{escanciano2023machine}. I split the pairs $\{(i,j) \in \{1,...,n\}^2: i < j\}$ in $L$ groups $I_1,...,I_l$, then
\begin{align}
    \label{eq_welfare_estimator}
\hat{W}_n(\pi) = \binom{n}{2}^{-1} \sum_{l = 1}^L \sum_{(i,j) \in I_l} \hat{\Gamma}_{ij,l}(\pi),
\end{align}
where $\hat{\Gamma}_{ij,l}$ is the same as $\Gamma_{ij}$ but with all nuisance parameters replaced by estimators which do not use observations in $I_l$. The estimator of the optimal treatment rule is
\[
  \hat{\pi} = \arg \max_{\pi \in \Pi} \hat{W}_n(\pi). 
\]
For the IGM example, the estimation is slightly different.

\begin{examplecont}{4}[IGM (cont.)]
    The orthogonal score is given by 
    \small
 \begin{align*}
     \Gamma_{ij}^{ab} &= \Ex(sgn(X_{1i} - X_{1j}) sgn(Y_i - Y_j) \mid D_i = a,X_i,D_j = b,X_j) \\
     &+ \frac{D_{ij}^{ab}}{e_{ab}(X_i,X_j)}(sgn(X_{1i} - X_{1j}) sgn(Y_i - Y_j) -\Ex(sgn(X_{1i} - X_{1j}) sgn(Y_i - Y_j) \mid D_i = a,X_i,D_j = b,X_j)).
 \end{align*}
 \normalsize
 The estimator of the welfare for a given $\pi \in \Pi$ and target $t$ is
 \begin{equation}
    \hat{W}_n(\pi) = -\biggl|\binom{n}{2}^{-1}\sum_{l = 1}^L \sum_{(i,j) \in I_l} \sum_{(a,b) \in \{0,1\}^2} \hat{\Gamma}_{ij,l}^{ab}\pi_{ab}(X_i,X_j) - t\biggr|.
    \label{eq_What_igm}
 \end{equation}
     $\blacksquare$
 \end{examplecont}

\section{Asymptotic statistical guarantees}
\label{sec_statguarantees}

Now it is useful to make clear the dependence of the scores $\Gamma_{ij}^{ab}$ on the data and the nuisance parameters. Hence, I let now $\Gamma_{ij}^{ab} = \psi_{ab}(Z_i,Z_j,\gamma,\varphi,\alpha)$, where
\[
    \psi_{ab}(Z_i,Z_j,\gamma,\varphi,\alpha) = \varphi(a,X_i,b,X_j,\gamma_a,\gamma_b) + \phi_{ab}^{\gamma}(Z_i,Z_j,\gamma,\alpha^{\gamma}) + \phi_{ab}^{\varphi}(Z_i,Z_j,\varphi,\alpha^{\varphi}).
\]
$\psi_{ab}$ is the sum of an identifying function ($\varphi_{ab}$) and correction terms ($(\phi_{ab}^{\gamma},\phi_{ab}^\varphi$) needed to achieve orthogonality. For a given treatment rule $\pi$, orthogonal scores are given by
\[
\Gamma_{ij}(\pi) = \sum_{(a,b) \in \{0,1\}^2}\psi_{ab}(Z_i,Z_j,\gamma,\varphi,\alpha)\pi_{ab}(X_i,X_j).
\]
This framework accommodates SWFs in Section \ref{sec_pollearn_linorthscores} if $\psi_{ab}$ does not depend on $Z_j$ and only depends on $a$ so that it can be written as $\psi_{a}(Z_i,\gamma,\varphi,\alpha)$ for $a \in \{0,1\}$. Hence, I only state conditions and results for SWFs in this Section. The IGM example does not fit in the general setting, however, results extend to this example by Corollary in Section C of the Online Appendix. Next, I give conditions on the convergence of the nuisance estimators and the complexity of $\Pi$ that allow me to prove asymptotic statistical guarantees for the estimator of treatment rules.

\subsection{Conditions on the nuisance parameter estimators}
I give high-level conditions for the estimators of nuisance parameters. 
\begin{assumption}
    \label{ass_nuisance_rates} $\mathbb{E}[|\psi(Z_{i},Z_{j},\gamma,\varphi,\alpha)|^{2}]<\infty$, $\omega \in \{\gamma,\varphi\}$ and for $(a,b) \in \{0,1\}^2$ 
    
    \begin{enumerate}
    [(i)]
    
    \item $n^{\lambda_\gamma}\sqrt{\Ex(|\varphi(a,X_{i},b,X_{j},\hat{\gamma}_{l})-\varphi(a,X_{i},b,X_{j},\gamma)|^{2})} = o(1) $ ;

    \item $n^{\lambda_\varphi}\sqrt{\Ex(|\hat{\varphi}_l(a,X_{i},b,X_{j},\gamma)-\varphi(a,X_{i},b,X_{j},\gamma)|^{2})} = o(1) $ ;

    \item $n^{\lambda_\gamma}\sqrt{\Ex(|\phi_{ab}^\gamma(Z_{i},Z_{j},\hat{\gamma}_l,\alpha^\gamma)-
    \phi_{ab}^{\gamma}(Z_{i},Z_{j},\gamma,\alpha^{\gamma})|^{2})}= o(1)$;

    \item $n^{\lambda_\varphi}\sqrt{\Ex(|\phi_{ab}^\varphi(Z_{i},Z_{j},\hat{\varphi}_l,\alpha^\varphi)-
    \phi_{ab}^{\varphi}(Z_{i},Z_{j},\varphi,\alpha^{\varphi})|^{2})} = o(1)$;
    
    \item $n^{\lambda_{\alpha}}\sqrt{\Ex(|\phi_{ab}^\omega(Z_{i},Z_{j},\omega,\hat{\alpha}_{l}^\omega)-
    \phi_{ab}^\omega(Z_{i},Z_{j},\omega,\alpha^{\omega})|^{2})} = o(1)$,
    \end{enumerate}
    where $1/4 < \lambda_\gamma, \lambda_\varphi, \lambda_\alpha$.
\end{assumption}
These are mean-square consistency conditions for $\hat{\gamma}_{l}$, $\hat{\varphi}_l$ and
$\hat{\alpha}_{l}$. Assumption \ref{ass_nuisance_rates} often follows from the L2 convergence rates of the nuisance estimators. The non-parametric literature gives rates for kernel regression and sieves/series (e.g. \cite{chen2007large}). For $L_{1}$-penalty estimators such as Lasso see \cite{belloni2011} and \cite{belloni2013least}. For low-level conditions for shrinkage and kernel estimators see Appendix B in \cite{sasaki2021estimation}. Rates for $L_{2}$-boosting in low dimensions are in \cite{zhang2005boosting}, and \cite{kueck2023estimation} find rates for $L_{2}$-boosting with high dimensional data. For results on versions of random forests see \cite{wager2015adaptive} and \cite{athey2019generalized}. For single-layer, sigmoid-based neural networks see \cite{chen1999improved} and for a modern setting of deep neural networks see \cite{farrell2021deep}. Note that $\hat{\varphi}_l$ estimates conditional expectations where both the dependent variables and the conditioning ones are indexed by both $i$ and $j$. \cite{stute1991conditional} calls such objects conditional U-statistics and studies the asymptotic properties of Nadaraya-Watson nonparametric estimators. I run the machine learning algorithms on the stacked pairs. Unfortunately, not much is known about rates for such machine learning regressions which are computationally demanding. Define now the following interaction terms for $\omega \in \{\gamma,\varphi\}$ and let $||\cdot||$ denote the L2 norm.
\begin{align*}
    \hat{\xi}_{ij,ab,l} &= \hat{\varphi}_l(a,X_i,b,X_j,\hat{\gamma}_l) - \varphi(a,X_i,b,X_j,\hat{\gamma}_l) - \hat{\varphi}_l(a,X_i,b,X_j,\gamma) + \varphi(a,X_i,b,X_j,\gamma), \\
    \hat{\xi}_{ij,ab,l}^\omega &= \phi_{ab}(z_{i},z_{j},\hat{\omega}_{l},\hat{\alpha}%
    _{l}^\omega)-\phi_{ab}(z_{i},z_{j},\omega,\hat{\alpha}_{l}^\omega)-\phi_{ab}(z_{i},z_{j},\hat{\omega}_{l},\alpha^\omega)+\phi_{ab}(z_{i},z_{j},\omega,\alpha^\omega).
\end{align*}

\begin{assumption}
    \label{ass_glob_rob_quadr}
    For each $l =1,...,L$
    \begin{enumerate}[(i)]
        \item $\int \int \phi_{ab}^{\gamma}(z_i,z_j,\gamma,\hat{\alpha}_l^{\gamma}) F(dz_i)F(dz_j) = 0$ and $\int \int \phi_{ab}^{\varphi}(Z_i,Z_j,\varphi,\hat{\alpha}_l^{\varphi}) F(dz_i)F(dz_j) = 0$.
        \item $\Ex(||\hat{\gamma}_l - \gamma||^2) = o(n^{-2\lambda_\gamma})$, $\Ex(||\hat{\varphi}_l - \varphi||^2) = o(n^{-2\lambda_\varphi})$ and
            \begin{align*}
                |\Ex[(\varphi(a,X_i,b,X_j,\tilde{\gamma}) + \phi_{ab}^{\gamma}(Z_i,Z_j,\tilde{\gamma},\alpha^{\gamma}))\pi_{ab}(X_i,X_j)]| &\leq C ||\tilde{\gamma} - \gamma||^2 \\
                |\Ex[(\tilde{\varphi}(a,X_i,b,X_j,\gamma) + \phi_{ab}^{\varphi}(Z_i,Z_j,\tilde{\varphi},\alpha^{\varphi}))\pi_{ab}(X_i,X_j)]| &\leq C ||\tilde{\varphi} - \varphi||^2.  
            \end{align*}
    \end{enumerate}
\end{assumption}
Assumption \ref{ass_glob_rob_quadr} (i) is usually verified from visual inspection and (ii) requires L2 convergence rates and some smoothness. $C$ is a constant so the right-hand-sides above do not depend on $\pi \in \Pi$.

\begin{assumption}
\label{ass_interaction_term}
For each $l = 1,...,L$: $\sqrt{n}\Ex(\hat{\xi}_{ij,ab,l}^\omega\pi_{ab}(X_i,X_j)) = o(1).$
\end{assumption}
These are rate conditions on the remainder terms $\hat{\xi}_{l}^\omega(w_{i},w_{j})$.
Often, Assumption \ref{ass_interaction_term} follows if $\sqrt{n}
||\hat{\alpha}_{l}^{\omega}-\alpha||||\hat{\omega}_{l}-\omega|| = o(1)$. In essence, it is enough for the product of the nonparametric estimators to go to zero at a $\sqrt{n}$ rate.

\subsection{Conditions on the complexity of the policy class}

The complexity of the policy class must also be restricted. If all sorts of subsets of $\mathcal{X}$ are allowed to decide who should be treated then we get overfitted policy rules. As in \cite{athey2021policy} I measure the policy class complexity with its VC dimension (see for instance \cite{wainwright2019high}) which is allowed to grow with the sample size. 
\begin{assumption}
    \label{ass_VC}
    There are constants $0< \beta < 1/2$ and $n^*\geq 1$ such that for all $n \geq n^*$, $VC(\Pi_n) < n^\beta$.
\end{assumption}
Examples of finite VC-dimension classes are linear eligibility scores or generalized eligibility scores. Policy classes whose VC-dimension can increase with the sample size are for example decision trees which get deeper with sample size. 

\subsection{Upper bounds}

Let now
\begin{align*}
W(\pi) &= \Ex \biggl[ \sum_{(a,b) \in \{0,1\}^2}\psi_{ab}(Z_i,Z_j,\gamma,\varphi,\alpha)\pi_{ab}(X_i,X_j)\biggr], \\
\T{W}_n(\pi) &= \binom{n}{2}^{-1} \sum_{i<j} \biggl[ \sum_{(a,b) \in \{0,1\}^2}\psi_{ab}(Z_i,Z_j,\gamma,\varphi,\alpha)\pi_{ab}(X_i,X_j)\biggr], \\
\hat{W}_n(\pi) &= \binom{n}{2}^{-1} \sum_{l=1}^L \sum_{(i,j)\in I_l} \biggl[\sum_{(a,b) \in \pi}\psi_{ab}(Z_i,Z_j,\hat{\gamma}_l,\hat{\varphi}_l,\hat{\alpha}_l)\pi_{ab}(X_i,X_j).\biggr],
\end{align*}
$W(\pi)$ and $\T{W}_n(\pi)$ are the welfare and the infeasible estimator of the welfare at policy rule $\pi$ when all nuisance parameters are known. $\hat{W}_n(\pi)$ is the feasible estimator which we already introduced in (\ref{eq_welfare_estimator}). Let $W_{\Pi_n}^* = \sup_{\pi \in \Pi_n} W(\pi)$ be the best possible welfare. I give upper bounds to the regret: $\Ex[W_{\Pi_n}^* - W(\hat{\pi})]$. I start bounding the regret by
\begin{align}
\Ex[W_{\Pi_n}^* - W(\hat{\pi})] &\leq 2\Ex\biggl[\sup_{\pi \in \Pi_n} |\hat{W}_n(\pi) - \T{W}_n(\pi)|\biggr] + 2\Ex\biggl[\sup_{\pi \in \Pi_n} |\T{W}_n(\pi) - W(\pi)|\biggr],
\label{eq_simpleboundregret}
\end{align}
The second term above is a standard centered U-process indexed by $\pi \in \Pi_n$. I start as in \cite{athey2021policy} by showing the rate of convergence of this second term. I work for some fixed $(a,b) \in \{0,1\}^2$ and let $\Pi_{ab,n} = \{\pi_{ab}: \pi \in \Pi_n\}$. The first step is to bound it by the Rademacher complexity which I define as
\[
\mathcal{R}_n(\Pi_{ab,n}) = \Ex_\varepsilon \biggl(\sup_{\pi \in \Pi_n} \biggl| \floor{n/2}^{-1} \sum_{i=1}^{\floor{n/2}}\varepsilon_i  \Gamma_{i,\floor{n/2} + i}^{ab} \pi_{ab}(X_{i},X_{\floor{n/2} + i})\biggr| \biggr),
\]
where $F_\varepsilon$ is the distribution of Rademacher random variables taking value $1$ and $-1$ with equal probability. $(\varepsilon_1,...,\varepsilon_{\floor{n/2}})$ are independent draws from $F_\varepsilon$. The next result gives a bound for the Rademacher complexity which relies on a characterization of U-statistics as dependent sums of independent sums introduced in \cite{hoeffding1963} 
\begin{lemma}
    \label{lemma_rademacher}
    \[
    \Ex\biggl[\sup_{\pi \in \Pi_n} |\T{W}_n(\pi) - W(\pi)|\biggr] \leq \Ex[2 \mathcal{R}_n(\Pi_{ab,n})].
    \]
\end{lemma}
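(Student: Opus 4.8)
The plan is to prove Lemma \ref{lemma_rademacher} by a symmetrization argument for $U$-processes, the key device being Hoeffding's representation of a degree-two $U$-statistic as an average, over the $n!$ permutations of the sample, of an average of $\floor{n/2}$ \emph{independent} blocks. Write $m=\floor{n/2}$ and, for $\pi\in\Pi_n$, set $h_\pi(Z_i,Z_j)=\sum_{(a,b)\in\{0,1\}^2}\psi_{ab}(Z_i,Z_j,\gamma,\varphi,\alpha)\,\pi_{ab}(X_i,X_j)$, so that $\T{W}_n(\pi)=\binom{n}{2}^{-1}\sum_{i<j}h_\pi(Z_i,Z_j)$ and, by the definition of $W$ in this section and the independence of $Z_i$ and $Z_j$ for $i\ne j$, $W(\pi)=\Ex[h_\pi(Z_i,Z_j)]$, all the expectations below being finite by the square-integrability in Assumption \ref{ass_nuisance_rates}. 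Because the kernels in Examples \ref{eg_inequality}--\ref{eg_interm} obey $\psi_{ab}(Z_i,Z_j)=\psi_{ba}(Z_j,Z_i)$, the full kernel $h_\pi$ is symmetric in its two arguments, which is what is needed below.

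First I would invoke Hoeffding's identity: for any symmetric kernel $h$,
\[
\binom{n}{2}^{-1}\sum_{i<j}h(Z_i,Z_j)=\frac{1}{n!}\sum_{\sigma\in S_n}\frac{1}{m}\sum_{i=1}^{m}h\bigl(Z_{\sigma(i)},Z_{\sigma(m+i)}\bigr),
\]
applied to $h=h_\pi-W(\pi)$. Since the supremum over $\pi$ of an average is at most the average of the suprema, and since $(Z_1,\dots,Z_n)$ is exchangeable so each permuted block has the same law, taking expectations gives
\[
\Ex\Bigl[\sup_{\pi\in\Pi_n}\bigl|\T{W}_n(\pi)-W(\pi)\bigr|\Bigr]\le \Ex\Bigl[\sup_{\pi\in\Pi_n}\Bigl|\frac{1}{m}\sum_{i=1}^{m}\bigl(h_\pi(Z_i,Z_{m+i})-W(\pi)\bigr)\Bigr|\Bigr].
\]
The gain is that the pairs $(Z_i,Z_{m+i})$, $i=1,\dots,m$, use pairwise disjoint indices and are therefore i.i.d., so the right-hand side is an ordinary centered empirical process over $m$ i.i.d.\ observations, indexed by $\pi\in\Pi_n$.

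To this I would then apply the classical symmetrization inequality for i.i.d.\ empirical processes (e.g.\ van der Vaart and Wellner, Lemma~2.3.1, or \cite{wainwright2019high}): introducing independent Rademacher signs $\varepsilon_1,\dots,\varepsilon_m$,
\[
\Ex\Bigl[\sup_{\pi\in\Pi_n}\Bigl|\frac{1}{m}\sum_{i=1}^{m}\bigl(h_\pi(Z_i,Z_{m+i})-W(\pi)\bigr)\Bigr|\Bigr]\le 2\,\Ex\Bigl[\sup_{\pi\in\Pi_n}\Bigl|\frac{1}{m}\sum_{i=1}^{m}\varepsilon_i\,h_\pi(Z_i,Z_{m+i})\Bigr|\Bigr].
\]
Expanding $h_\pi(Z_i,Z_{m+i})=\sum_{(a,b)}\Gamma_{i,m+i}^{ab}\,\pi_{ab}(X_i,X_{m+i})$ and using the triangle inequality over the four values of $(a,b)$, the right-hand side is bounded by $2\sum_{(a,b)}\Ex[\mathcal{R}_n(\Pi_{ab,n})]$ by the definition of the Rademacher complexity; reading this, as in the statement, for the generic term $(a,b)$ gives the asserted bound.

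The only non-routine step is the passage through Hoeffding's representation: one must check that the averaging identity genuinely applies (hence the need to work with the symmetric full kernel) and, above all, that the resulting blocking produces truly i.i.d.\ summands, so that the standard symmetrization lemma for i.i.d.\ data can be quoted verbatim rather than re-derived for the dependent $\binom{n}{2}$-term $U$-statistic. This is exactly where \cite{hoeffding1963} does the work---trading dependence among the pairwise summands for an outer permutation average, to which the triangle inequality applies---after which everything (symmetrization, splitting over $(a,b)$) is routine bookkeeping.
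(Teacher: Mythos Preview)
Your proposal is correct and follows essentially the same route as the paper: reduce the $U$-process to an i.i.d.\ average via Hoeffding's permutation representation, then apply Rademacher symmetrization. The only differences are cosmetic: the paper pulls the sum over $(a,b)$ outside by the triangle inequality \emph{before} invoking the Hoeffding representation and then writes out the ghost-sample argument explicitly, whereas you keep the full (symmetric) kernel $h_\pi$ through the Hoeffding step, quote the standard symmetrization lemma directly, and split over $(a,b)$ only at the end. Your ordering is arguably cleaner, since Hoeffding's identity is applied to a genuinely symmetric kernel rather than to the individual $(a,b)$ pieces, but the substance is identical.
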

We want an asymptotic upper bound for $\Ex[\mathcal{R}_n(\Pi_{ab})]$. While \cite{kitagawa2018should} and others provide bounds in terms of the maximum of the (bounded) scores, \cite{athey2021policy} provide bounds based on the variance. The next result provides a bound on the Rademacher complexity based on $S_{ab} =  \Ex[ \Gamma_{i,j}^{2 \, ab}]$.
\begin{lemma}
    \label{lemma_boundforrademacher}
    Under Assumptions \ref{ass_nuisance_rates} and \ref{ass_interaction_term}, if $\Gamma_{ij}^{ab}$ has bounded support, then 
    \[
        \Ex[\mathcal{R}_n(\Pi_{ab,n})] = \mathcal{O}\biggl(\sqrt{\frac{S_{ab} \cdot VC(\Pi_{ab,n})}{\floor{n/2}}} \biggr).
    \]    
\end{lemma}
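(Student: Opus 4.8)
The plan is to read $\Ex[\mathcal{R}_n(\Pi_{ab,n})]$ as the expected supremum of a symmetrised empirical process over the function class
\[
\mathcal{F}_{ab} \;=\; \bigl\{\, (z,z')\mapsto \Gamma^{ab}(z,z')\,\pi_{ab}(x,x') \;:\; \pi\in\Pi_n \,\bigr\},
\]
evaluated at the $m:=\floor{n/2}$ non-overlapping (hence i.i.d.) pairs $W_i:=(Z_i,Z_{m+i})$, $i=1,\dots,m$, where $\Gamma^{ab}(Z_i,Z_j)=\Gamma_{ij}^{ab}=\psi_{ab}(Z_i,Z_j,\gamma,\varphi,\alpha)$ is a fixed kernel evaluated at the true nuisances. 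Since the average defining $\mathcal{R}_n$ is already over i.i.d.\ blocks, no Hoeffding-type decomposition is needed at this stage; the task reduces to a standard empirical-process bound for $\mathcal{F}_{ab}$, and the crux is obtaining a \emph{variance}-scaled rate instead of an envelope-scaled one by way of a Dudley entropy integral truncated at the variance radius.

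First I would record the two quantities that govern the bound. Every $f\in\mathcal{F}_{ab}$ satisfies $|f|\le M$ almost surely for some finite $M$, because $\Gamma_{ij}^{ab}$ has bounded support and $\pi_{ab}\in\{0,1\}$, and $\Ex[f(W_1)^2]\le\Ex[(\Gamma_{1,m+1}^{ab})^2]=S_{ab}<\infty$, using $\pi_{ab}^2=\pi_{ab}\le1$ and the square-integrability in Assumption \ref{ass_nuisance_rates}; thus $\mathcal{F}_{ab}$ has constant envelope $M$ and variance proxy $\sigma^2:=S_{ab}\ (\le M^2)$. Next I would control the metric entropy: the set class $\Pi_{ab,n}=\{\pi_{ab}:\pi\in\Pi_n\}$ is a VC class with index $V:=VC(\Pi_{ab,n})$, so its indicators form a VC-subgraph class; multiplying them by the fixed measurable bounded function $\Gamma^{ab}$ only rescales $L^2$ distances by at most $M$, so $\mathcal{F}_{ab}$ has uniform covering numbers $\sup_Q N(\varepsilon M,\mathcal{F}_{ab},L^2(Q))\le(A/\varepsilon)^{cV}$ for universal constants $A,c$ (see e.g.\ \cite{wainwright2019high}).

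I would then plug the envelope $M$, the variance proxy $\sigma^2=S_{ab}$ and this uniform entropy bound into a standard variance-localised maximal inequality for symmetrised empirical processes of VC-subgraph classes (the refined Dudley bound used in \cite{athey2021policy}), also taking the expectation over the $W_i$. Using $\int_0^{\sigma}\sqrt{1+cV\log(AM/\varepsilon)}\,d\varepsilon\asymp\sigma\sqrt{V\log(AM/\sigma)}$ this gives
\[
\Ex[\mathcal{R}_n(\Pi_{ab,n})] \;\lesssim\; \sqrt{\frac{\sigma^2\,V}{m}\,\log\frac{M}{\sigma}} \;+\; \frac{M\,V}{m}\,\log\frac{M}{\sigma}.
\]
Because the nuisances entering $\Gamma_{ij}^{ab}$ are the true ones, $M$ and $\sigma^2=S_{ab}$ are fixed constants, so $\log(M/\sigma)$ is an absolute constant that $\mathcal{O}(\cdot)$ absorbs; and the diagonal term $MV\log(M/\sigma)/m$ is of the same order as the leading term — if $V$ is so large that $\sqrt{S_{ab}V/m}>M$ one uses the trivial bound $\mathcal{R}_n\le M$, otherwise $V/m\le M^2/S_{ab}$ is bounded and the diagonal term is $\mathcal{O}(\sqrt{S_{ab}V/m})$ (and it is strictly lower order under the maintained restriction $VC(\Pi_{ab,n})=o(n)$, Assumption \ref{ass_VC}). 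This yields $\Ex[\mathcal{R}_n(\Pi_{ab,n})]=\mathcal{O}\bigl(\sqrt{S_{ab}\,VC(\Pi_{ab,n})/\floor{n/2}}\bigr)$.

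The main obstacle is getting the variance factor $\sqrt{S_{ab}}$ rather than the envelope $M$ into the leading term while avoiding a spurious $\sqrt{\log n}$. A crude argument — a union bound over the $(em/V)^V$ Sauer–Shelah value patterns combined with a Bernstein-type tail bound — produces $\sqrt{S_{ab}\,V\log(m/V)/m}$, which carries an unwanted logarithm in $n$. Removing it forces the entropy integral to be cut off at the small radius $\sigma=\sqrt{S_{ab}}$ rather than at $M$, i.e.\ a genuine chaining/peeling argument (shells of geometrically decreasing $L^2$ radius, a Bernstein bound at each level, then a geometric summation), together with the customary care in passing between empirical and population $L^2$ norms in the covering numbers and in handling the lower-order diagonal term. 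I would cite the variance-localised maximal inequality rather than reconstruct the chaining from scratch.
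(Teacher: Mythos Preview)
Your proposal is correct and rests on the same idea as the paper: treat $\mathcal{R}_n$ as a symmetrised empirical process over the $m=\floor{n/2}$ i.i.d.\ blocks, note that $\mathcal{F}_{ab}$ is a fixed bounded function times a VC class of indicators, and obtain a variance-scaled (rather than envelope-scaled) rate via localised chaining. The execution differs: you invoke a packaged variance-localised maximal inequality and absorb the resulting $\sqrt{\log(M/\sigma)}$ factor as a constant, whereas the paper carries out the Dudley chaining explicitly in the style of \cite{athey2021policy} and \cite{zhou2023offline}---a telescope into shells $b_0,\dots,b_J$, Hoeffding at each shell, union bound via $N_{D_n}\le N_H$ and Haussler's bound, and a geometric summation---which yields a clean $\sqrt{S_{ab}\,VC}$ bound with no residual $\log(M/\sigma)$. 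For the lemma as stated both routes give the same $\mathcal{O}(\cdot)$, since $M$ and $S_{ab}$ are fixed. The place where the difference bites is downstream: the paper re-applies this lemma inside the proof of Lemma~\ref{lemma_uniform_coupling} with the scores replaced by remainder terms $\hat{R}^{(q)}_{ij,l}$ whose second moment is $o(n^{-2\lambda})$, i.e.\ with a shrinking $\sigma$; there your $\sqrt{\log(M/\sigma)}$ would become a $\sqrt{\log n}$ factor rather than a constant, while the paper's log-free chaining does not pick one up. This does not affect the final Theorem~\ref{thm_regret_upper_bound}, but it is worth being aware that the black-box inequality you cite is slightly looser than what the explicit chaining delivers.
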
 
Lemma \ref{lemma_boundforrademacher} can be generalized to sub-Gaussianity. However, it comes at the cost of making the (already involved) proofs substantially less tractable. Now I provide asymptotic upper bounds for the first term in (\ref{eq_simpleboundregret}). \cite{escanciano2023machine} show that for given $\pi \in \Pi_n$, $\sqrt{n}(\hat{W}_n(\pi) - \T{W}_n(\pi)) \to_p 0$. The next result makes this uniform in $\pi \in \Pi_n$.

\begin{lemma}[Uniform coupling]
\label{lemma_uniform_coupling}
Under Assumptions \ref{ass_nuisance_rates} and \ref{ass_interaction_term}
\[
    \sqrt{n}\Ex[\sup_{\pi \in \Pi_n}|\hat{W}_n(\pi) - \T{W}_n(\pi)|] = \mathcal{O}\biggl(1 + \frac{VC(\Pi_{ab,n})}{\floor{n/2}^{\min(\lambda_\gamma,\lambda_\varphi,\lambda_\alpha)}}\biggr).
\]
\end{lemma}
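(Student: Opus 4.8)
The plan is to decompose the difference $\hat{W}_n(\pi) - \T{W}_n(\pi)$ into the various sources of nuisance-estimation error, bound each one uniformly over $\pi \in \Pi_n$, and add up. For a fixed $(a,b)$ the summand is $[\psi_{ab}(Z_i,Z_j,\hat\gamma_l,\hat\varphi_l,\hat\alpha_l) - \psi_{ab}(Z_i,Z_j,\gamma,\varphi,\alpha)]\pi_{ab}(X_i,X_j)$, and since $\psi_{ab}$ is the sum of an identifying term $\varphi(a,X_i,b,X_j,\gamma_a,\gamma_b)$ and the correction terms $\phi_{ab}^\gamma, \phi_{ab}^\varphi$, I would first split the error into (1) the change in $\varphi$ through $\hat\gamma_l$; (2) the change in $\varphi$ directly through $\hat\varphi_l$; (3) the change in $\phi_{ab}^\gamma$ through $\hat\gamma_l$; (4) the change in $\phi_{ab}^\varphi$ through $\hat\varphi_l$; and (5) the change in both $\phi$'s through $\hat\alpha_l^\omega$. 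Each of these has a "linear" (leading) part and a "quadratic interaction" remainder; the interaction remainders are exactly the $\hat\xi_{ij,ab,l}$ and $\hat\xi_{ij,ab,l}^\omega$ defined before Assumption \ref{ass_glob_rob_quadr}.

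\textbf{Step 1: isolate the leading terms.} For each block, add and subtract so that what remains is a U-statistic whose kernel, conditional on the held-out fold, is a fixed (estimated but non-random over $I_l$) function times $\pi_{ab}$. Using Assumption \ref{ass_glob_rob_quadr}(i), the conditional mean of each leading correction term is zero, so after centering, $\sqrt{n}$ times the leading term is a centered (conditionally degenerate or one-degenerate) U-process indexed by $\Pi_{ab,n}$. Its supremum over $\Pi_{ab,n}$ is controlled by a Rademacher-complexity argument identical to Lemmas \ref{lemma_rademacher}--\ref{lemma_boundforrademacher}: using Hoeffding's representation of the U-statistic as an average of i.i.d.\ blocks and the VC bound, $\Ex[\sup_{\pi}|\cdot|]$ is $\mathcal{O}\!\left(\sqrt{VC(\Pi_{ab,n})/\floor{n/2}}\,\|\hat\gamma_l-\gamma\|\right)$ and similarly with $\|\hat\varphi_l-\varphi\|$, $\|\hat\alpha_l^\omega-\alpha\|$. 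Multiplying by $\sqrt{n}$ and invoking the rates $\|\hat\gamma_l-\gamma\| = o(n^{-\lambda_\gamma})$ etc.\ from Assumption \ref{ass_nuisance_rates}/\ref{ass_glob_rob_quadr}(ii), these contribute $\mathcal{O}(VC(\Pi_{ab,n})/\floor{n/2}^{\min(\lambda_\gamma,\lambda_\varphi,\lambda_\alpha)})$ — except one must be careful: the clean version gives a term like $\sqrt{VC\cdot n}\cdot n^{-\lambda} = \sqrt{VC}\, n^{1/2-\lambda}$, and since $\lambda>1/4$ this is $o(\sqrt{VC\,n})\cdot$(small), which after the bookkeeping in the statement is absorbed into the stated $\mathcal{O}(1 + VC/\floor{n/2}^{\min\lambda})$ bound; the "$1+$" is what picks up the genuinely $\sqrt{n}$-consistent piece.

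\textbf{Step 2: the quadratic remainders.} For the bias-type remainder $\Ex[(\varphi(a,X_i,b,X_j,\hat\gamma_l)+\phi_{ab}^\gamma(Z_i,Z_j,\hat\gamma_l,\alpha^\gamma))\pi_{ab}]$ and its $\varphi$-analogue, Assumption \ref{ass_glob_rob_quadr}(ii) bounds them by $C\|\hat\gamma_l-\gamma\|^2$ and $C\|\hat\varphi_l-\varphi\|^2$ uniformly in $\pi$, so after multiplying by $\sqrt{n}$ and using the $o(n^{-2\lambda})$ rates these vanish. For the $\hat\alpha$-interaction terms $\hat\xi_{ij,ab,l}^\omega$, Assumption \ref{ass_interaction_term} gives directly $\sqrt{n}\,\Ex(\hat\xi_{ij,ab,l}^\omega\pi_{ab}) = o(1)$ uniformly; one still needs the centered (around its conditional mean) part of these remainders, handled again by the Rademacher bound with the extra smallness of $\|\hat\alpha_l^\omega-\alpha\|\cdot\|\hat\omega_l-\omega\|$. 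Finally sum over $(a,b)\in\{0,1\}^2$ (finitely many) and over the $L$ folds (fixed $L$), and note that since $\pi_{ab}(X_i,X_j)=\ind(\pi(X_i)=a)\ind(\pi(X_j)=b)$ the classes $\Pi_{ab,n}$ all have VC dimension controlled by $VC(\Pi_n)$, so one may write $VC(\Pi_{ab,n})$ throughout.

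\textbf{Main obstacle.} The delicate point is the U-statistic structure of the "interaction" terms: unlike the sums-of-i.i.d.\ case in \cite{chernozhukov2022locally}, here $\hat\varphi_l$ is a conditional U-statistic and the coupling error involves a degenerate or partially degenerate U-process indexed by $\Pi_{ab,n}$. Controlling its supremum requires pushing the Hoeffding decomposition through every remainder term (not just the leading linearization), checking that the conditional-mean-zero property survives cross-fitting, and verifying that the maximal-inequality/VC bound applies fold-by-fold with the held-out nuisance estimators treated as fixed. Keeping the bookkeeping so that the final rate collapses to the clean $\mathcal{O}(1 + VC(\Pi_{ab,n})/\floor{n/2}^{\min(\lambda_\gamma,\lambda_\varphi,\lambda_\alpha)})$ — in particular tracking where the slower of the two $\sqrt{n}$-consistency margins enters — is the part that needs the most care.
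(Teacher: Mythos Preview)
Your proposal is correct and follows essentially the same route as the paper: decompose $\psi_{ab}(\cdot,\hat\gamma_l,\hat\varphi_l,\hat\alpha_l)-\psi_{ab}(\cdot,\gamma,\varphi,\alpha)$ into the six one-at-a-time differences $\hat R^{(1)},\dots,\hat R^{(6)}$ plus the interaction remainders $\hat\xi$; for the $\gamma$- and $\varphi$-blocks add and subtract the conditional mean given the held-out fold so that the centered piece is a U-process handled by Lemmas \ref{lemma_rademacher}--\ref{lemma_boundforrademacher} and the bias piece is killed by the quadratic bound in Assumption \ref{ass_glob_rob_quadr}(ii); for the $\alpha$-blocks use Assumption \ref{ass_glob_rob_quadr}(i) to get conditional mean zero directly; and finish the interaction terms with Assumption \ref{ass_interaction_term}. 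One small imprecision: in your Step 1 you invoke Assumption \ref{ass_glob_rob_quadr}(i) for ``each leading correction term,'' but that assumption only zeroes out the $\hat\alpha$-pieces ($\hat R^{(4)},\hat R^{(6)}$); for $\hat R^{(1)}+\hat R^{(3)}$ and $\hat R^{(2)}+\hat R^{(5)}$ the conditional mean is genuinely nonzero and must be subtracted off (its size is then controlled in your Step 2 via Assumption \ref{ass_glob_rob_quadr}(ii)) --- but since you do exactly that in Step 2, the argument is fine.
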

Finally, using Lemmas \ref{lemma_boundforrademacher} and \ref{lemma_uniform_coupling} the following holds.
\begin{theorem}
    \label{thm_regret_upper_bound}
    If Assumptions \ref{ass_nuisance_rates}, \ref{ass_interaction_term} and \ref{ass_VC} hold with $\beta < \min(\lambda_\gamma,\lambda_\varphi,\lambda_\alpha)$. Then
    \[
        \Ex[W_{\Pi_n}^* - W(\hat{\pi})] = \mathcal{O}\biggl(\sqrt{\frac{S_{ab} \cdot (2VC(\Pi_{n})-1)}{\floor{n/2}}} \biggr).
    \]    
\end{theorem}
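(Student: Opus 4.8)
The plan is to combine the elementary regret decomposition in (\ref{eq_simpleboundregret}) with the bounds already in place---Lemmas \ref{lemma_rademacher}--\ref{lemma_boundforrademacher} for the oracle U-process term and Lemma \ref{lemma_uniform_coupling} for the nuisance-estimation term---and then to check that, under the hypothesis $\beta<\lambda_*:=\min(\lambda_\gamma,\lambda_\varphi,\lambda_\alpha)$, the nuisance-estimation term is of strictly smaller order than the U-process term, so that only the latter survives in the $\mathcal{O}(\cdot)$. Write $m=\floor{n/2}$. Since $\T{W}_n(\pi)-W(\pi)$ and $\hat{W}_n(\pi)-\T{W}_n(\pi)$ are finite sums over $(a,b)\in\{0,1\}^2$, the triangle inequality reduces the analysis to a single pair, at the price of reading $S_{ab}$ as the maximum of $\Ex[(\Gamma_{ij}^{ab})^2]$ over the four pairs.

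For the second term of (\ref{eq_simpleboundregret}), Lemma \ref{lemma_rademacher} gives $\Ex[\sup_{\pi\in\Pi_n}|\T{W}_n(\pi)-W(\pi)|]\le 2\,\Ex[\mathcal{R}_n(\Pi_{ab,n})]$, and Lemma \ref{lemma_boundforrademacher}---which rests on the Hoeffding representation of the U-statistic as an average of sums of i.i.d.\ blocks, together with boundedness of $\Gamma_{ij}^{ab}$ and the second-moment constant $S_{ab}$---gives $\Ex[\mathcal{R}_n(\Pi_{ab,n})]=\mathcal{O}(\sqrt{S_{ab}\,VC(\Pi_{ab,n})/m})$. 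Using the pair-class complexity bound $VC(\Pi_{ab,n})\le 2\,VC(\Pi_n)-1$, the second term is $\mathcal{O}(\sqrt{S_{ab}\,(2VC(\Pi_n)-1)/m})$. For the first term, Lemma \ref{lemma_uniform_coupling} gives $\sqrt{n}\,\Ex[\sup_{\pi\in\Pi_n}|\hat{W}_n(\pi)-\T{W}_n(\pi)|]=\mathcal{O}(1+VC(\Pi_{ab,n})/m^{\lambda_*})$. By Assumption \ref{ass_VC}, $VC(\Pi_n)<n^{\beta}$ for $n\ge n^{*}$, so $VC(\Pi_{ab,n})=\mathcal{O}(n^{\beta})$; since $m^{\lambda_*}\asymp n^{\lambda_*}$ and $\beta<\lambda_*$, we get $VC(\Pi_{ab,n})/m^{\lambda_*}=\mathcal{O}(n^{\beta-\lambda_*})=o(1)$, whence the first term of (\ref{eq_simpleboundregret}) is $\mathcal{O}(n^{-1/2})$.

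It remains to combine the two pieces. Because $S_{ab}$ is a fixed positive constant, $VC(\Pi_n)\ge 1$ (so $2VC(\Pi_n)-1\ge 1$), and $m\le n/2$, we have $\sqrt{S_{ab}\,(2VC(\Pi_n)-1)/m}\ge\sqrt{2S_{ab}}\,n^{-1/2}$, so the $\mathcal{O}(n^{-1/2})$ contribution of the first term is absorbed into the bound for the second; this gives $\Ex[W_{\Pi_n}^*-W(\hat{\pi})]=\mathcal{O}(\sqrt{S_{ab}\,(2VC(\Pi_n)-1)/m})$, which is the claim. The only genuinely delicate points at this level are (a) aligning the exponents so that the nuisance-estimation remainder is $o$ of the leading term---this is exactly where the hypothesis $\beta<\min(\lambda_\gamma,\lambda_\varphi,\lambda_\alpha)$ enters---and (b) controlling the complexity of the induced pair class $\Pi_{ab,n}$ by $2\,VC(\Pi_n)-1$; all the analytic heavy lifting (the block decomposition behind Lemma \ref{lemma_boundforrademacher} and the uniform coupling of Lemma \ref{lemma_uniform_coupling}) has already been carried out.
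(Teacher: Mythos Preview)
Your proposal is correct and follows essentially the same approach as the paper's proof, which simply cites Lemmas \ref{lemma_boundforrademacher}, \ref{lemma_uniform_coupling}, and the VC bound $VC(\tilde{\Pi}_{ab})\le 2VC(\Pi)-1$ (Lemma 3 of the Online Appendix). You have spelled out explicitly the order comparison showing that the $\mathcal{O}(n^{-1/2})$ nuisance-coupling term is dominated by the Rademacher term under $\beta<\min(\lambda_\gamma,\lambda_\varphi,\lambda_\alpha)$, which the paper leaves implicit.
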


\section{Empirical illustration}
\label{sec_empapp}

I study the optimal allocation of children to preschool. I make use of the Panel Study of Income Dynamics (PSID) database. This survey contains a rich set of circumstances and long-term outcomes. In 1995, adults between 18-30 years old were asked about their participation in preschool. The outcome is the average earnings from 25 to 35 years old. I assume selection on sex, birthyear, average parental income in the 5 years before birth, mother's education, father's education, father's occupation and whether the individual is black. In Table \ref{table_ate} we see the results of estimating the Average Treatment Effect (ATE), Gini, IOp and IGM measured by the Kendall-$\tau$.

\begin{table}[h!]
    \centering
    \small
    \begin{tabular}{@{}cccccccc@{}}
      \toprule
      Outcome & ATE & se & p-value & Gini & IOp & IGM & n \\ \midrule
      Earnings 25-35 & 4622.063 & 1083.865 & 0 & 0.392 & 0.172 & 0.168 & 2971 \\ \bottomrule
      \end{tabular}
      \caption{\small ATE, Gini, IOp and Kendal-$\tau$}
    \label{table_ate}
    \end{table}
\noindent To estimate the ATE, I use the doubly robust Augmented Inverse Propensity weighting from \cite{robins1994estimation} using Conditional Inference Forests (CIF) to estimate the regression functions and propensity scores. I chose CIF by cross-validation among a pool of different machine learners. Under the assumption of no selection on observables, we observe a positive effect of attending preschool of 4,622\$ of added annual earnings. Dollars have been adjusted by the CPI to 2010 dollars. The Gini coefficient is $0.39$ and IOp is $0.17$, i.e. almost 44\% of total inequality can be explained by the circumstances we observe. The Kendall-$\tau$ is around 0.17 which indicates a positive association between parental and child incomes.

I compute optimal treatment rules based on parental income and the mother's years of education. I set the target in the Kendall-$\tau$ welfare to zero, meaning that the aim is to completely erase intergenerational persistence. As the policy class, I use 2-depth decision trees. The U-statistic nature of the SWF prevents me from using the computational shortcuts in \cite{athey2021policy} since the sub-trees are not independent optimization problems. Instead, I use the deciles of parental income as cutting points instead of all the observed values of parental income. 

Optimal treatment allocation is the same for additive and IOp welfare. Although this seems surprising, it is possible if decreasing inequality of opportunity leads to sizeable reductions of the average. In fact, as reported in Table \ref{table_welfares}, the estimated rule maximizing the average already drastically decreases IOp. I show the optimal rule under these two welfares in Figure \ref{figure_util}. At the terminal nodes, I report the number of observations, the conditional average treatment effect (CATE) in the node and the proportion of observations node that are treated in the data ($\hat{p})$. For additive/IOp welfare, the first cutting point is whether parental income is below or above the 40th percentile (51,515\$). If an observation is below this cut-off the tree splits according to the education of the mother. If parental income is below the 40th percentile and the mother's education is less than college (below 13 years) the tree allocates the observation to treatment. The CATE in this node is positive so, as we would expect, an additive policy maker treats these observations. If parental income is below the 40th percentile but the mother is highly educated the CATE is negative and hence the additive policy maker does not allocate the individual to treatment. For high parental income, we split next on the mother's education but at a higher level. If your parental income is higher than the 40th percentile and your mother attended college or less (16 years of education or less) you are allocated to treatment and in this node, we have large positive effects. However, we do not allocate kids with high parental income and high maternal education to preschool since the CATE in this group is negative.

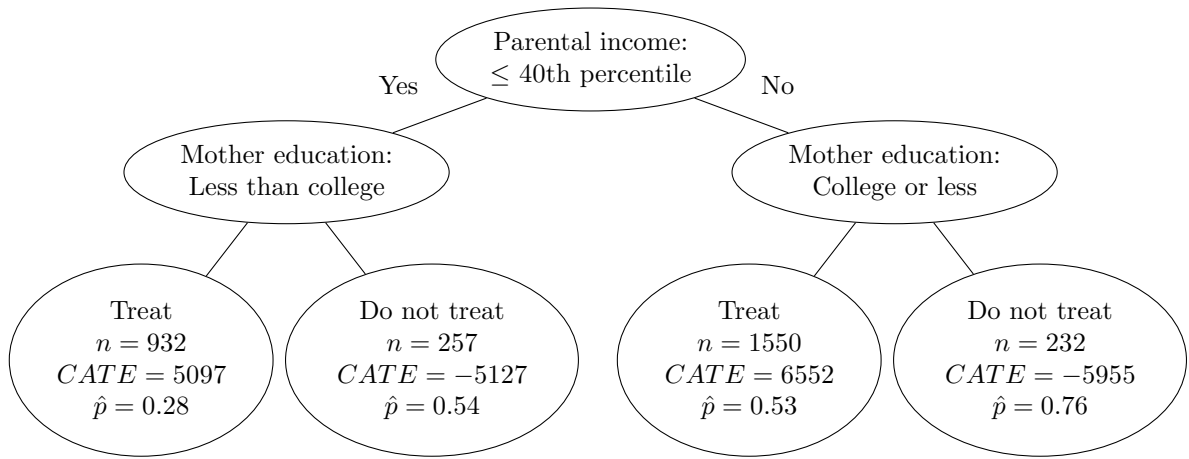
\begin{figure}[h!]
    \centering
    \footnotesize
    \begin{tikzpicture}[
        every node/.style = {minimum width = 2em, draw, ellipse, align=center},
        level 1/.style = {sibling distance = 23em},
        level 2/.style = {sibling distance = 11em, level distance = 25mm}
    ]
    \node {Parental income: \\ $\leq$ 40th percentile}
        child {node {Mother education: \\ Less than college}
            child {node {Treat \\ $n = 932$
                         \\$CATE = 5097$ \\ $\hat{p} = 0.28$}}
            child {node {Do not treat \\ $n = 257$
            \\$CATE = -5127$ \\ $\hat{p} = 0.54$}}
            edge from parent node[left, xshift= 0mm, yshift = 4mm,
            draw = none] {Yes}
        }
        child {node {Mother education: \\ College or less}
            child {node {Treat \\ $n = 1550$
            \\$CATE = 6552$ \\ $\hat{p} = 0.53$}}
            child {node {Do not treat \\ $n = 232$
            \\$CATE = -5955$ \\ $\hat{p} = 0.76$}}
            edge from parent node[right, xshift= 0mm, yshift = 4mm,
            draw = none] {No}
        };
    \end{tikzpicture}
    \caption{\footnotesize Estimated optimal policy rule under additive and IOp welfare.}
    \label{figure_util}
  \end{figure}

\noindent If we take parental income to be a proxy for the quality of preschool, it is enough for the mother to have more than 13 years of education for the child to be better off without preschool. However, for children who attend better preschools (have higher parental income), the mother has to have more than 16 years of education for the child to be better off without preschool. This is in line with results in the psychology and economics literature (see \cite{fort2020cognitive}). To decrease IOp further, we need to treat advantaged kids who do not benefit from preschool. The penalization of IOp is not severe enough to do this. The observed treatment allocation deviates significantly from the optimal rule, this is likely as parents consider more than future earnings when deciding on preschool.

In figure \ref{figure_ineq}, we see the optimal policy rule for the inequality SWF. The tree is the same except for the first cutting point on parental income. We first divide individuals into those with parental income lower and higher than the 20th percentile (37,699\$). Then, the division based on the mother's education is the same. Hence, compared to the previous tree, we shift 20\% of the population to the right side subtree. For instance, a kid who has a parental income of 40,000\$ and whose mother has 16 years of education is not treated under the additive/IOp welfare but is treated under the inequality based optimal rule. Although masked by other observations in the node, this 20\% of the population who is switched to treatment has an estimated negative CATE. When we penalize all sorts of inequalities, it starts becoming optimal to decrease the average outcome to decrease inequality.     

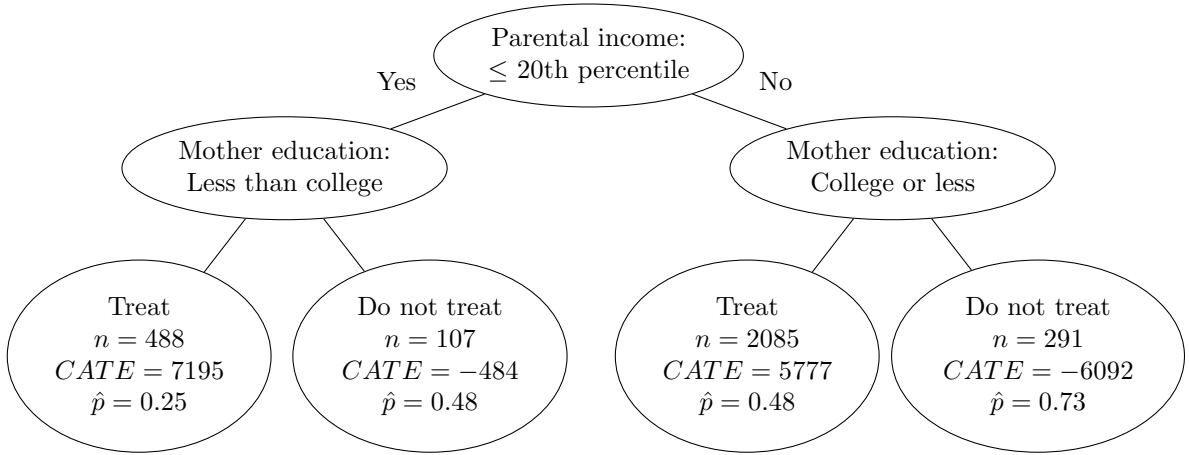
\begin{figure}[h!]
    \centering
    \footnotesize
    \begin{tikzpicture}[
        every node/.style = {minimum width = 2em, draw, ellipse, align=center},
        level 1/.style = {sibling distance = 23em},
        level 2/.style = {sibling distance = 11em, level distance = 25mm}
    ]
    \node {Parental income: \\ $\leq$ 20th percentile}
        child {node {Mother education: \\ Less than college}
            child {node {Treat \\ $n = 488$
                         \\$CATE = 7195$ \\ $\hat{p} = 0.25$}}
            child {node {Do not treat \\ $n = 107$
            \\$CATE = -484$ \\ $\hat{p} = 0.48$}}
            edge from parent node[left, xshift= 0mm, yshift = 4mm,
            draw = none] {Yes}
        }
        child {node {Mother education: \\ College or less}
            child {node {Treat \\ $n = 2085$
            \\$CATE = 5777$ \\ $\hat{p} = 0.48$}}
            child {node {Do not treat \\ $n = 291$
            \\$CATE = -6092$ \\ $\hat{p} = 0.73$}}
            edge from parent node[right, xshift= 0mm, yshift = 4mm,
            draw = none] {No}
        };
    \end{tikzpicture}
    \caption{\footnotesize Estimated optimal policy rule under inequality welfare.}
    \label{figure_ineq}
  \end{figure}

\noindent In Figure \ref{figure_igm} we see the results for an IGM aware SWF. Notice that in the IGM welfare there is no efficiency motive and we target a zero Kendall-$\tau$. Hence, the optimal policy is even more controversial since individuals with positive treatment effects are not treated and individuals with negative treatment effects are treated.  

\begin{figure}[h!]
    \centering
    \footnotesize
    \begin{tikzpicture}[
        every node/.style = {minimum width = 2em, draw, ellipse, align=center},
        level 1/.style = {sibling distance = 23em},
        level 2/.style = {sibling distance = 11em, level distance = 25mm}
    ]
    \node {Mother education: \\ Less than college}
        child {node {Parental income: \\ $\leq$ 50th percentile}
            child {node {Treat \\ $n = 1149$
                         \\$CATE = 6052$ \\ $\hat{p} = 0.29$}}
            child {node {Do not treat \\ $n = 913$
            \\$CATE = 5771$ \\ $\hat{p} = 0.49$}}
            edge from parent node[left, xshift= 0mm, yshift = 4mm,
            draw = none] {Yes}
        }
        child {node {Mother education: \\ College or less}
            child {node {Do not treat \\ $n = 580$
            \\$CATE = 2891$ \\ $\hat{p} = 0.64$}}
            child {node {Treat \\ $n = 329$
            \\$CATE = -5145$ \\ $\hat{p} = 0.71$}}
            edge from parent node[right, xshift= 0mm, yshift = 4mm,
            draw = none] {No}
        };
    \end{tikzpicture}
    \caption{\footnotesize Estimated optimal policy rule under IGM welfare.}
    \label{figure_igm}
  \end{figure}
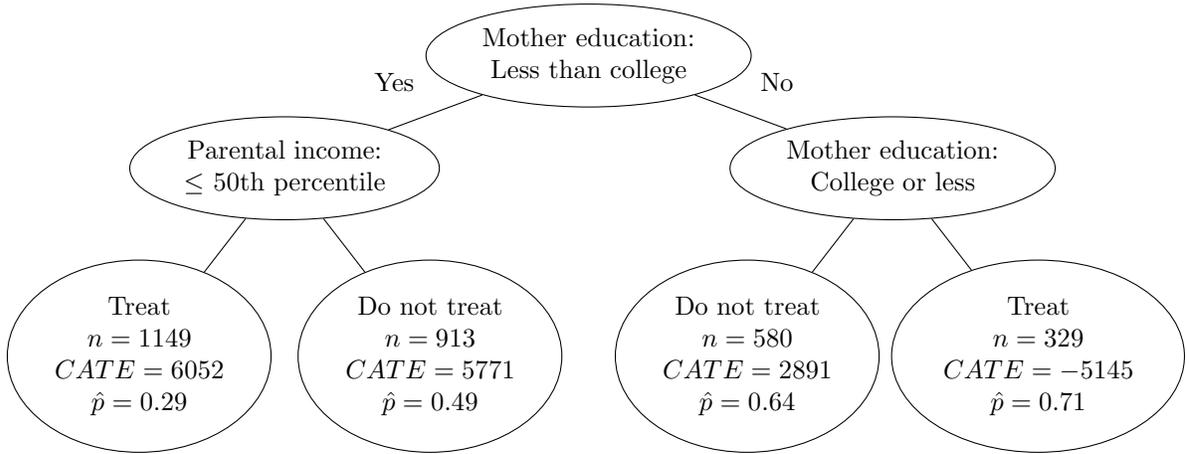

\noindent Finally, in Table \ref{table_welfares} we see a summary of the results and compare the estimated optimal treatments with situations in which no one or everyone is treated. For additive, IOp and inequality welfares, treating no one gives the worst welfare. In the IGM case, treating no one and treating everyone give basically the same welfare (note that maximal welfare in the IGM case is 0). The additive and IOp welfares have an estimated optimal policy rule that attains the highest average outcome and the lowest IOp. The decrease in IOp under this rule is drastic. While in the sample we can explain 44\% of total inequality with circumstances (IOp/Gini), at the optimal additive/IOp rule we explain 35\%. This explains why both rules coincide. In other settings, maximizing the average might increase IOp. The estimated optimal treatment rule for IGM gives a similar Gini and IOp as the the estimated optimal policy rule for IOp or inequality welfare, but at a much larger cost in average outcome. The estimated optimal policy rule under the inequality welfare gives the lowest Gini compared to additive and IOp welfares. Interestingly, it gives the highest IOp across all welfares. IGM rule gives the lowest Kendall-$\tau$. 

    \begin{table}[h!]
        \centering
        \footnotesize
        \begin{tabular}{@{}clcccccc@{}}
            \toprule
             & \multicolumn{1}{c}{} & \textbf{Welfare} & \textbf{Mean} & \textbf{Gini} & \textbf{IOp} & \textbf{IGM} & \textbf{Share treated} \\ \midrule
            \textbf{Additive} & \textbf{Optimal rule} & 39727 & 39727 & 0.392 & 0.138 & 0.15 & 0.84 \\
             & \textbf{Treat no one} & 34169 & 34169 & 0.4 & 0.162 & 0.148 & 0 \\
             & \textbf{Treat everyone} & 38778 & 38778 & 0.383 & 0.142 & 0.15 & 1 \\ \midrule
            \textbf{IOp} & \textbf{Optimal rule} & 34231 & $\cdot$ & $\cdot$ & $\cdot$ & $\cdot$ & $\cdot$ \\
             & \textbf{Treat no one} & 28640 & $\cdot$ & $\cdot$ & $\cdot$ & $\cdot$ & $\cdot$ \\
             & \textbf{Treat everyone} & 33282 & $\cdot$ & $\cdot$ & $\cdot$ & $\cdot$ & $\cdot$ \\ \midrule
            \textbf{Inequality} & \textbf{Optimal rule} & 24165 & 39383 & 0.386 & 0.141 & 0.153 & 0.87 \\
             & \textbf{Treat no one} & 20518 & 34169 & 0.4 & 0.162 & 0.148 & 0 \\
             & \textbf{Treat everyone} & 23942 & 38778 & 0.383 & 0.142 & 0.15 & 1 \\ \midrule
            \textbf{IGM} & \textbf{Optimal rule} & -0.086 & 35951 & 0.383 & 0.139 & 0.086 & 0.5 \\
             & \textbf{Treat no one} & -0.148 & 34169 & 0.4 & 0.162 & 0.148 & 0 \\
             & \textbf{Treat everyone} & -0.15 & 38778 & 0.383 & 0.142 & 0.15 & 1 \\ \midrule
            \textbf{Sample} & $\cdot$ & $\cdot$ & 36197 & 0.392 & 0.172 & 0.168 & 0.47 \\
             \bottomrule
            \end{tabular}
        \caption{\footnotesize Welfare, mean, Gini, IOp, IGM and share treated for different optimal policy rules compared with policies which treat no one and everyone. I also show the actual values observed in the sample. The dots in the IOp rows indicate that the optimal policy rule is the same as in the additive case.}
        \label{table_welfares}
        \end{table}

\noindent Finally, comparing the results with what we observe with the treatment allocation in the sample, we achieve a higher mean with all other welfares except with the IGM welfare. The Gini in the sample is the same as the one under the estimated optimal additive and IOp rule. The observed IOp in the sample is higher than the one achieved under the estimated rules of all other welfares. IGM observed in the sample is the lowest (highest Kendall-$\tau$) compared to all welfares. Finally, the share of treated in the sample is also lower than the one achieved under the estimated optimal rule of all other welfares. However, this could be explained by not taking into account costs of treatment.

\section{Appendix}
\label{Appendix}

\addcontentsline{toc}{section}{Appendices} \renewcommand{\thesubsection}{\Alph{subsection}}

\subsection{Identification and local robustness}
\label{app_ipw}
Here I show identification and local robustness results for (DM) and (IPW).
\begin{proposition}
    \label{app_prop_linid}
    Under Assumption \ref{ass_ident}, $W(\pi)$ is identified as
    \begin{align*}
        W(\pi) &= \Ex[m_1(Z_i,\gamma,\nu)\pi(X_i) + m_0(Z_i,\gamma,\nu)(1-\pi(X_i))],
    \end{align*}
    with $\nu \in \{\varphi,e\}$ and where $m_1$ and $m_0$ can be any of the following
    \begin{align*}
        \text{(DM)} \quad m_1(Z_i,\gamma, \varphi) = 
        \varphi(1,X_i,\gamma_1), \quad
        &m_0(Z_i,\gamma, \varphi) = 
        \varphi(0,X_i,\gamma_0) \\
        \text{(IPW)} \quad m_1(Z_i,\gamma,e) = 
        \frac{g(Y_i,X_i,\gamma_1)D_i}{e(X_i)}, \quad
        &m_0(Z_i,\gamma,e) = 
        \frac{g(Y_i,X_i,\gamma_0)(1-D_i)}{1-e(X_i)},
    \end{align*}
\end{proposition}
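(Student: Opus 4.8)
The plan is to prove both representations termwise: split $W(\pi)$ into the $\pi(X_i)$ summand and the $1-\pi(X_i)$ summand, and for each of $j\in\{0,1\}$ reduce it by iterated expectations down to a conditioning on $X_i$, invoking Assumption \ref{ass_ident}. Throughout write $\pi_1(X_i)=\pi(X_i)$ and $\pi_0(X_i)=1-\pi(X_i)$, and $e_1(X_i)=e(X_i)$, $e_0(X_i)=1-e(X_i)$, so the target reads $W(\pi)=\Ex\big[\sum_{j=0,1} m_j(Z_i,\gamma,\nu)\pi_j(X_i)\big]$.

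First I would record an auxiliary identification of the nuisance itself. By Assumption \ref{ass_ident}(i), $(Y_i(1),Y_i(0))\perp D_i\mid X_i$, so for $j=0,1$,
\[
\gamma^{(j)}(X_i)=\Ex[Y_i(j)\mid X_i]=\Ex[Y_i(j)\mid D_i=j,X_i]=\Ex[Y_i\mid D_i=j,X_i]=\gamma_j(X_i),
\]
where conditioning on $\{D_i=j\}$ is legitimate because Assumption \ref{ass_ident}(ii) gives $\Pr(D_i=j\mid X_i)\ge\kappa>0$ a.s., and the third equality uses $Y_i=Y_i(j)$ on $\{D_i=j\}$. Hence $\gamma^{(j)}$ may be replaced by the observable $\gamma_j$ wherever it enters $g$, up to $F_0$-null sets of $X_i$.

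For the (DM) representation, fix $j$. Since $\pi_j(X_i)$ is $\sigma(X_i)$-measurable, iterated expectations give $\Ex[g(Y_i(j),X_i,\gamma^{(j)})\pi_j(X_i)]=\Ex[\pi_j(X_i)\,\Ex[g(Y_i(j),X_i,\gamma^{(j)})\mid X_i]]$. Because $g(Y_i(j),X_i,\gamma^{(j)})$ is a measurable function of $(Y_i(j),X_i)$ and the fixed function $\gamma^{(j)}$, Assumption \ref{ass_ident}(i) lets me further condition on $\{D_i=j\}$ without changing the value; then using $Y_i=Y_i(j)$ on $\{D_i=j\}$ and $\gamma^{(j)}=\gamma_j$, the inner expectation equals $\Ex[g(Y_i,X_i,\gamma_j)\mid D_i=j,X_i]=\varphi(j,X_i,\gamma_j)$. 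Summing $j=0,1$ yields the (DM) formula. For the (IPW) representation, again fix $j$; conditioning on $X_i$ and using $Y_i=Y_i(j)$ on $\{D_i=j\}$,
\begin{align*}
\Ex\!\left[\frac{g(Y_i,X_i,\gamma_j)\ind(D_i=j)}{e_j(X_i)}\pi_j(X_i)\right]
&=\Ex\!\left[\frac{\pi_j(X_i)}{e_j(X_i)}\,\Ex\big[g(Y_i(j),X_i,\gamma_j)\ind(D_i=j)\mid X_i\big]\right].
\end{align*}
By Assumption \ref{ass_ident}(i) the inner expectation factors as $\Ex[g(Y_i(j),X_i,\gamma_j)\mid X_i]\,\Pr(D_i=j\mid X_i)=\Ex[g(Y_i(j),X_i,\gamma^{(j)})\mid X_i]\,e_j(X_i)$, the last step again replacing $\gamma_j$ by $\gamma^{(j)}$; the $e_j(X_i)$ cancels (well defined since $e_j\ge\kappa$), and iterated expectations return $\Ex[g(Y_i(j),X_i,\gamma^{(j)})\pi_j(X_i)]$. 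Summing over $j$ gives the (IPW) formula.

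The only genuinely delicate point is the interplay between identification and the unknown nuisance: because $g$ itself depends on $\gamma^{(j)}$, one must first establish $\gamma^{(j)}=\gamma_j$ and check that this substitution is valid inside $g$ — which it is, since $g$ is evaluated pointwise and the two functions agree $F_0$-a.s. on the relevant $X_i$. Everything else is the textbook selection-on-observables argument, with Assumption \ref{ass_ident}(ii) used solely to make the conditioning events $\{D_i=j\}$ have positive conditional probability and the inverse-propensity weights well defined. (Integrability of the displayed expressions is taken as given, as implicit in the statement.)
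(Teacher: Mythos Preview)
Your proof is correct and follows essentially the same route as the paper: iterated expectations on $X_i$, unconfoundedness to introduce the conditioning on $\{D_i=j\}$, and the substitution $Y_i(j)=Y_i$ on that event. The only cosmetic difference is that you first isolate and prove $\gamma^{(j)}=\gamma_j$ as a separate step and then treat (DM) and (IPW) in parallel, whereas the paper runs a single chain of equalities in which the (DM) expression appears as an intermediate line and (IPW) as the last; your version is slightly more explicit about where overlap and the nuisance identification are invoked, but the substance is identical.
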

Now I show the identification result for U-statistics estimable quantities.

\begin{proposition}
    \label{app_prop_Uid}
    Under Assumption \ref{ass_ident}, $W(\pi)$ in (\ref{eq_Uwf}) is identified in the following ways
    \begin{align*}
        W(\pi) &= \Ex\biggl[\sum_{(a,b)\in \{0,1\}^2} m_{ab}(Z_i,Z_j,\gamma,\nu)\pi_{ab}(X_i,X_j)\biggr],
    \end{align*}
    where $\nu \in \{\varphi, e\}$ and $m_{ab}$ can be any of the following
    \begin{align*}
        \text{(DM)} \quad m_{ab}(Z_i,Z_j,\gamma,\varphi) &= 
        \varphi(a,X_i,b,X_j,\gamma_a,\gamma_b), \\
         \text{(IPW)} \quad m_{ab}(Z_i,Z_j,\gamma,e) &=
        \frac{g(Y_i,X_i,Y_j,X_j,\gamma_a,\gamma_b)D_{ij}^{ab}}{e_{ab}(X_i,X_j)}.
    \end{align*}
\end{proposition}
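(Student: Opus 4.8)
The plan is to verify the two identification formulas term by term across $(a,b)\in\{0,1\}^2$, using the i.i.d. sampling — so that units $i$ and $j$ are independent — together with the unconfoundedness and overlap in Assumption \ref{ass_ident}. Throughout I fix a pair $(a,b)$ and note that $\pi_{ab}(X_i,X_j)=\ind(\pi(X_i)=a)\ind(\pi(X_j)=b)$ depends on $(X_i,X_j)$ only. A preliminary observation used repeatedly is that unconfoundedness forces the potential-prediction function $\gamma^{(c)}(X)=\Ex[Y(c)\mid X]$ to equal the observed conditional mean $\gamma_c(X)=\Ex[Y\mid D=c,X]$ for $c\in\{0,1\}$, so inside $g$ one may freely replace $\gamma^{(a)},\gamma^{(b)}$ by $\gamma_a,\gamma_b$.

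For the (DM) representation — which is exactly Proposition \ref{prop_Uid} — I would condition on $(D_i,X_i,D_j,X_j)$: on the event $\{D_i=a,D_j=b\}$ one has $Y_i=Y_i(a)$, $Y_j=Y_j(b)$, so $\varphi(a,X_i,b,X_j,\gamma_a,\gamma_b)=\Ex[g(Y_i(a),X_i,Y_j(b),X_j,\gamma_a,\gamma_b)\mid D_i=a,X_i,D_j=b,X_j]$, and by unconfoundedness combined with $i\perp j$ the conditioning on $(D_i,D_j)$ is innocuous, so this equals $\Ex[g(Y_i(a),X_i,Y_j(b),X_j,\gamma_a,\gamma_b)\mid X_i,X_j]$. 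Multiplying by the $\sigma(X_i,X_j)$-measurable weight $\pi_{ab}(X_i,X_j)$, taking expectations, and summing over $(a,b)$ recovers $W(\pi)$ in (\ref{eq_Uwf}).

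For the (IPW) representation I would instead start from the claimed expression, use the indicator $D_{ij}^{ab}=\ind(D_i=a)\ind(D_j=b)$ to substitute $Y_i\mapsto Y_i(a)$ and $Y_j\mapsto Y_j(b)$, and then condition on $(Y_i(1),Y_i(0),X_i,Y_j(1),Y_j(0),X_j)$. Independence of $i$ and $j$ plus unconfoundedness give $\Ex[\ind(D_i=a)\ind(D_j=b)\mid\cdot]=e_a(X_i)e_b(X_j)=e_{ab}(X_i,X_j)$, with $e_1=e$ and $e_0=1-e$; overlap (Assumption \ref{ass_ident}(ii)) ensures $e_{ab}(X_i,X_j)\ge\kappa^2>0$, so the ratio is well defined and integrable. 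The propensity weight then cancels, leaving $\Ex[g(Y_i(a),X_i,Y_j(b),X_j,\gamma_a,\gamma_b)\pi_{ab}(X_i,X_j)]$, and summing over $(a,b)$ together with the preliminary observation on $\gamma^{(c)}$ returns $W(\pi)$.

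The argument is essentially routine; the only point requiring care is the pairwise bookkeeping — keeping the cross-unit independence of $i$ and $j$ conceptually separate from the within-unit unconfoundedness, and using the indicator $D_{ij}^{ab}$ to justify the replacement $Y_i\mapsto Y_i(a)$, $Y_j\mapsto Y_j(b)$ \emph{before} rather than after the iterated expectation. Beyond this I expect no genuine obstacle.
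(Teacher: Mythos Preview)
Your proposal is correct and follows essentially the same route as the paper's proof: both use the law of iterated expectations together with unconfoundedness and consistency to move between $\Ex[g(Y_i(a),X_i,Y_j(b),X_j,\cdot)\mid X_i,X_j]$ and $\varphi(a,X_i,b,X_j,\cdot)$, and then the IPW form follows by inserting $D_{ij}^{ab}/e_{ab}(X_i,X_j)$ and integrating out $(D_i,D_j)$. Your explicit remark that $\gamma^{(c)}=\gamma_c$ under Assumption~\ref{ass_ident} and your separation of cross-unit independence from within-unit unconfoundedness are worthwhile clarifications that the paper leaves implicit, but the substance of the argument is the same.
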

Next, I introduce the Assumption and proposition to compute locally robust scores.

\begin{assumption}
    \label{app_ass_Ulinearization}
    $\exists$ $\alpha_{ab,p}^\gamma$, $P<\infty$, and $(c_{1p}, c_{2p})$ for $p = 1,...,P$, s.t. for all $(a,b) \in \{0,1\}^2$ 
    \[
    \frac{d}{d\tau} \mathbb{E}[m_{ab}(Z_i,Z_j,\bar{\gamma}_\tau,\nu)] = \mathbb{E}\biggl[\sum_{p=1}^P\alpha_{ab,p}^\gamma(D_i,X_i,D_j,X_j)(c_{1p}\bar{\gamma}_\tau(D_i,X_i) + c_{2p}\bar{\gamma}_\tau(D_j,X_j))\biggr],
    \]
    where $\bar{\gamma}_\tau$ is defined as in Assumption \ref{ass_linlinearization} and $\Ex[\alpha_{ab,p}^\gamma(D_i,X_i,D_j,X_j)^2] < \infty$.
\end{assumption}

\begin{proposition}
    \label{app_prop_Uorthscores}
    The orthogonal scores are given by $\Gamma_{ij}(\pi) = \sum_{(a,b) \in \{0,1\}^2} \Gamma_{ij}^{ab}\pi_{ab}(X_i,X_j)$, where depending on whether we identify with DM or IPW we have
    \begin{align*}
        (DM) \quad \Gamma_{ij}^{ab} &= 
        \varphi(a,X_i,b,X_j,\gamma_a,\gamma_b) + \phi_{ab}^\varphi(D_i,X_i,D_j,X_j,\varphi,\alpha^{e}) + \phi_{ab}^\gamma(D_i,X_i,D_j,X_j,\gamma,\alpha^{\gamma}) \\
        (IPW) \quad \Gamma_{ij}^{ab} &=
        \frac{g(Y_i,X_i,Y_j,X_j,\gamma_a,\gamma_b)D_{ij}^{ab}}{e_{ab}(X_i,X_j)} + \phi_{ab}^e(D_i,X_i,D_j,X_j,e,\alpha^{e}) + \phi_{ab}^\gamma(D_i,X_i,D_j,X_j,\gamma,\alpha^{\gamma}),
    \end{align*}
    where
    \begin{align*}
        \phi_{ab}^\gamma(D_i,X_i,D_j,X_j,e,\alpha^{\gamma}) &= \sum_{p=1}^P \alpha_{ab,p}^\gamma(D_i,X_i,D_j,X_j,e)(c_{1p} Y_i + c_{2p} Y_j - c_{1p}\gamma(D_i,X_i) - c_{2p} \gamma(D_j,X_j)), \\
        \phi_{ab}^\varphi(D_i,X_i,D_j,X_j,\varphi,\alpha^{m}) &=  \alpha_{ab}^\varphi(D_i,X_i,D_j,X_j)(g(Y_i,X_i,Y_j,X_j,\gamma_a,\gamma_b) - \varphi(D_i,X_i,D_j,X_j,\gamma_a,\gamma_b)),\\
        \phi_{ab}^e(D_i,X_i,D_j,X_j,e,\alpha^{e}) &= \alpha_{ab,1}^e(X_i)(\ind(D_i = a) - e_a(X_i)) + \alpha_{ab,2}^e(X_j)(\ind(D_j = b) - e_b(X_j)), \\
        \alpha_{ab}^\varphi(D_i,X_i,D_j,X_j) &= \frac{D_{ij}^{ab}}{e_{ab}(X_i,X_j)}, \, \alpha_{ab,1}^e(X_i) = -\Ex\biggl[\frac{g(Y_i,X_i,Y_j,X_j,\gamma_a,\gamma_b)D_{ij}^{ab}}{e_a(X_i)^2 e_b(X_j)} \biggr| X_i\biggr], \\
    \alpha_{ab,2}^e(X_j) &= -\Ex\biggl[\frac{g(Y_i,X_i,Y_j,X_j,\gamma_a,\gamma_b)D_{ij}^{ab}}{e_a(X_i) e_b(X_j)^2}\biggr|X_j\biggr].
    \end{align*}
\end{proposition}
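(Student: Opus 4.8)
The plan is to verify directly that the scores $\Gamma_{ij}^{ab}$ are (i) identifying and (ii) Neyman-orthogonal in each nuisance direction, following the standard recipe for locally robust moments (\cite{newey1994asymptotic}, \cite{chernozhukov2022locally}): start from the identifying moment of Proposition \ref{app_prop_Uid} and, for each nuisance, add a mean-zero correction term whose Gateaux derivative at the truth offsets the derivative of the identifying moment in that direction. The three correction terms $\phi_{ab}^\gamma$, $\phi_{ab}^\varphi$ (DM) and $\phi_{ab}^e$ (IPW) are precisely these offsets, and the proof amounts to checking the two defining properties.

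\emph{Identification.} By Proposition \ref{app_prop_Uid}, $W(\pi)=\Ex[\sum_{(a,b)}m_{ab}(Z_i,Z_j,\gamma,\nu)\pi_{ab}(X_i,X_j)]$ for $\nu\in\{\varphi,e\}$, so it suffices to show each correction term has zero mean even after multiplication by $\pi_{ab}(X_i,X_j)$. For $\phi_{ab}^\gamma$, condition on $(D_i,X_i,D_j,X_j)$: by i.i.d.\ sampling ($Z_i\perp Z_j$), $\Ex[Y_i\mid D_i,X_i,D_j,X_j]=\gamma(D_i,X_i)$ and likewise for $Y_j$, so the residual $c_{1p}(Y_i-\gamma(D_i,X_i))+c_{2p}(Y_j-\gamma(D_j,X_j))$ averages out, while $\alpha_{ab,p}^\gamma$ and $\pi_{ab}$ are $(D_i,X_i,D_j,X_j)$-measurable. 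For $\phi_{ab}^\varphi$, on $\{D_{ij}^{ab}=1\}$ one has $\Ex[g-\varphi\mid D_i=a,X_i,D_j=b,X_j]=0$ by definition of $\varphi$, and off that event $D_{ij}^{ab}=0$; conditioning on $(D_i,X_i,D_j,X_j)$ again gives zero. For $\phi_{ab}^e$, condition on $(X_i,X_j)$ and use $\Ex[\ind(D_i=a)\mid X_i,X_j]=e_a(X_i)$ (valid because $D_i\perp X_j$ and $\Ex[\ind(D_i=a)\mid X_i]=e_a(X_i)$ under Assumption \ref{ass_ident}), and symmetrically for the $j$-term; the coefficients $\alpha_{ab,1}^e,\alpha_{ab,2}^e$ and $\pi_{ab}$ are $(X_i,X_j)$-measurable. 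Hence $\Ex[\Gamma_{ij}^{ab}\pi_{ab}]=\Ex[m_{ab}\pi_{ab}]$, and summing over $(a,b)$ recovers $W(\pi)$.

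\emph{Orthogonality.} Perturb $\gamma$ to $\bar\gamma_\tau=\gamma+\tau\tilde\gamma$. By Assumption \ref{app_ass_Ulinearization} the $\gamma$-derivative of the identifying term is $\Ex[\sum_p\alpha_{ab,p}^\gamma(c_{1p}\tilde\gamma(D_i,X_i)+c_{2p}\tilde\gamma(D_j,X_j))]$; since $\phi_{ab}^\gamma$ is linear in $\gamma$ through $-c_{1p}\gamma(D_i,X_i)-c_{2p}\gamma(D_j,X_j)$, its $\gamma$-derivative is exactly minus this, so they cancel. The remaining $\gamma$-dependence contributes nothing at the truth: in the DM case it enters $\phi_{ab}^\varphi$ only through $\Ex[\dot g-\dot\varphi\mid D_i=a,X_i,D_j=b,X_j]=0$ (as $\dot\varphi=\Ex[\dot g\mid\cdot]$), weighted by $D_{ij}^{ab}/e_{ab}$; in the IPW case it enters $\phi_{ab}^e$ through $\alpha^e$, whose derivative multiplies the conditionally mean-zero factor $\ind(D_i=a)-e_a(X_i)$. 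Next, perturb $\varphi$ in the DM case: the identifying term's derivative is $\Ex[\tilde\varphi(a,X_i,b,X_j,\gamma_a,\gamma_b)\pi_{ab}]$, while the derivative of $\phi_{ab}^\varphi$ is $-\Ex[\frac{D_{ij}^{ab}}{e_{ab}(X_i,X_j)}\tilde\varphi(D_i,X_i,D_j,X_j,\gamma_a,\gamma_b)\pi_{ab}]$; on $\{D_{ij}^{ab}=1\}$ this is $-\Ex[\frac{\ind(D_i=a)\ind(D_j=b)}{e_a(X_i)e_b(X_j)}\tilde\varphi(a,X_i,b,X_j)\pi_{ab}]$, which collapses to $-\Ex[\tilde\varphi(a,X_i,b,X_j)\pi_{ab}]$ via $\Ex[\ind(D_i=a)\ind(D_j=b)\mid X_i,X_j]=e_a(X_i)e_b(X_j)$, cancelling the first. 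Finally, perturb $e$ in the IPW case: differentiating $m_{ab}=gD_{ij}^{ab}/(e_a(X_i)e_b(X_j))$ gives $-gD_{ij}^{ab}\bigl(\tilde e_a(X_i)/(e_a^2 e_b)+\tilde e_b(X_j)/(e_a e_b^2)\bigr)$; taking conditional expectations given $X_i$ and given $X_j$ and invoking the definitions of $\alpha_{ab,1}^e(X_i)$ and $\alpha_{ab,2}^e(X_j)$ shows these are exactly offset by the $e$-derivative of $\phi_{ab}^e$ (the incidental $e$-dependence of $\alpha^e$ again drops out against the mean-zero factors). Thus the Gateaux derivative of the moment vanishes at the truth in every nuisance direction.

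\emph{Main obstacle.} The delicate step is the IPW $e$-correction: the pairwise U-statistic structure forces two separate coefficients $\alpha_{ab,1}^e$, $\alpha_{ab,2}^e$, obtained by conditioning on $X_i$ and $X_j$ respectively, and one must lean on exchangeability together with the product form $e_{ab}(X_i,X_j)=e_a(X_i)e_b(X_j)$ — itself a consequence of $Z_i\perp Z_j$ plus Assumption \ref{ass_ident} — to verify that these conditional-expectation coefficients do the offsetting (and that the $\pi_{ab}$ factor, being $(X_i,X_j)$-measurable, never spoils the cancellations). A secondary subtlety, relevant only to the IOp specialization, is that $g$ contains an absolute value and is non-differentiable on $\{\gamma_a(X_i)=\gamma_b(X_j)\}$; this is handled by the extra condition in Proposition \ref{prop_ioporthscores} (either $\Pr(\gamma_a(X_i)=\gamma_b(X_j))=0$, or $x_i\neq x_j\Rightarrow\gamma_a(X_i)\neq\gamma_b(X_j)$), following \cite{escanciano2023machine}, so that the directional derivative of $g$ is a.s.\ well defined and Assumption \ref{app_ass_Ulinearization} applies.
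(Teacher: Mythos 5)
Your proposal is correct and takes essentially the same route as the paper: the same direction-by-direction chain-rule decomposition, the same Riesz representers obtained via iterated expectations (the $D_{ij}^{ab}/e_{ab}(X_i,X_j)$ reweighting for the $\varphi$-direction and conditioning separately on $X_i$ and $X_j$ for the two propensity-score coefficients), and Assumption \ref{app_ass_Ulinearization} for the $\gamma$-direction. The only difference is presentational: you verify directly that the displayed corrections are mean zero at the truth and cancel the Gateaux derivatives, whereas the paper derives the same corrections by computing the derivatives of the identifying moment and invoking the general construction in \cite{chernozhukov2022locally} and \cite{escanciano2023machine}.
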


\subsection{Proofs of main results}

\begin{proof}[Proof of Propositions \ref{prop_linid}, \ref{prop_Uid}, \ref{prop_Uorthscores}] 
    See Proofs of Propositions \ref{app_prop_linid}, \ref{app_prop_Uid}, \ref{app_prop_Uorthscores}.
\end{proof}

\begin{proof}[Proof of Proposition \ref{app_prop_linid}]
    I proof only the identification of the first term of the welfare since the second one follows in the same manner.
    \begin{align*}
        \mathbb{E}[g(Y_i(1),X_i,\gamma^{(1)}) \pi(X_i)] &= 
        \mathbb{E}[\Ex(g(Y_i(1),X_i,\gamma_1)|X_i) \pi(X_i)] \\
        &= \mathbb{E}[\Ex(g(Y_i,X_i,\gamma_1)|D_i = 1, X_i) \pi(X_i)] \\
        &= \mathbb{E}\biggl[\Ex \biggl(\frac{g(Y_i,X_i,\gamma_1)D_i}{e(X_i)}|X_i\biggr) \pi(X_i)\biggr] \\
        &= \mathbb{E}\biggl[\frac{g(Y_i,X_i,\gamma_1)D_i}{e(X_i)} \pi(X_i)\biggr],
    \end{align*}
    the third equality already establishes the identification by the direct method.
\end{proof}

\begin{proof}[Proof of Proposition \ref{prop_linorthscores}]
    Let $d/d\tau$ be the derivative with respect to $\tau$ evaluated at $\tau = 0$, let $\varphi_\tau = \varphi + \tau \tilde{\varphi}$ for some $\tilde{\varphi}$ in the space where $\varphi$ lives and $\Ex_\tau$ be the expectation with respect to $F + \tau(H-F)$ for some alternative distribution $H$. Then
    \begin{align*}
        \frac{d}{d\tau} \Ex[\varphi_\tau(1,X_i,\bar{\gamma}_\tau(1,X_i))\pi(X_i)] &= \frac{d}{d\tau} \Ex[\varphi_\tau(1,X_i,\gamma(1,X_i))\pi(X_i)] + \frac{d}{d\tau} \Ex[\varphi(1,X_i,\bar{\gamma}_\tau(1,X_i))\pi(X_i)].
    \end{align*}
    For the first term note that 
    \begin{align*}
        \frac{d}{d\tau} \Ex[\varphi_\tau(1,X_i,\gamma(1,X_i))\pi(X_i)] &= \frac{d}{d\tau} \Ex\biggl[\frac{D_i}{e(X_i)}\varphi_\tau(1,X_i,\gamma(1,X_i))\pi(X_i)\biggr] \\
        &= \frac{d}{d\tau} \Ex_\tau\biggl[\frac{D_i}{e(X_i)}\varphi_\tau(1,X_i,\gamma(1,X_i))\pi(X_i)\biggr] \\
        &- \frac{d}{d\tau} \Ex_\tau\biggl[\frac{D_i}{e(X_i)}\varphi(1,X_i,\gamma(1,X_i))\pi(X_i)\biggr] \\
        &= \frac{d}{d\tau} \Ex_\tau\biggl[\frac{D_i}{e(X_i)}(g(Y_i,X_i,\gamma(1,X_i)) - \varphi(1,X_i,\gamma(1,X_i)))\pi(X_i)\biggr],
    \end{align*}
    using LIE in the first equality, then the chain rule and that $\varphi_\tau(1,X_i,\gamma(1,X_i))$ is a projection of $g(Y_i, X_i,\gamma(1,X_i))$. For the second term
    \begin{align*}
        \frac{d}{d\tau} \Ex[\varphi(1,X_i,\bar{\gamma}_\tau(1,X_i))\pi(X_i)] &= \frac{d}{d\tau} \Ex[\alpha_1(D_i,X_i)\bar{\gamma}_\tau(1,X_i)\pi(X_i)] \\
        &= \frac{d}{d\tau} \Ex_\tau[\alpha_1(D_i,X_i)\bar{\gamma}_\tau(1,X_i)\pi(X_i)] \\
        &- \frac{d}{d\tau} \Ex_\tau[\alpha_1(D_i,X_i)\gamma_\tau(1,X_i)\pi(X_i)] \\
        &= \frac{d}{d\tau} \Ex_\tau[\alpha_1(D_i,X_i)(Y_i - \gamma(1,X_i))\pi(X_i)],
    \end{align*} 
   using Assumption \ref{ass_linlinearization} in the first equality, then the chain rule and then the fact that $\gamma$ is a projection. Then, following \cite{chernozhukov2022locally} we have that 
    \[
    \Gamma_{1i} = \varphi(1,X_i,\gamma) + \frac{D_i}{e(X_i)}(g(Y_i,X_i,\gamma_1) - \varphi(1,X_i,\gamma)) + \alpha_1(D_i,X_i)(Y_i-\gamma(D_i,X_i)).   
    \]
    The arguments for $\Gamma_{0i}$ are the analogous.
\end{proof}

\begin{proof}[Proof of Proposition \ref{app_prop_Uid}]
    \begin{align*}
        W(\pi) &= \mathbb{E}\biggl[ \sum_{(a,b) \in \{0,1\}^2}g(Y_i(a),X_i,Y_j(b),X_j,\gamma^{(a)},\gamma^{(b)}) \pi_{ab}(X_i,X_j) \biggr] \\ 
        &= \mathbb{E}\biggl[ \Ex \biggl( \sum_{(a,b) \in \{0,1\}^2}g(Y_i(a),X_i,Y_j(b),X_j,\gamma^{(a)},\gamma^{(b)}) \biggr| X_i, X_j \biggr) \pi_{ab}(X_i,X_j) \biggr] \\
        &= \mathbb{E}\biggl[ \Ex \biggl( \sum_{(a,b) \in \{0,1\}^2}g(Y_i(a),X_i,Y_j(b),X_j,\gamma^{(a)},\gamma^{(b)}) \biggr| X_i, D_i = a, X_j, D_j = b \biggr) \pi_{ab}(X_i,X_j) \biggr] \\
        &= \mathbb{E}\biggl[ \Ex \biggl( \sum_{(a,b) \in \{0,1\}^2}g(Y_i,X_i,Y_j,X_j,\gamma^{(a)},\gamma^{(b)}) \biggr| X_i, D_i = a, X_j, D_j = b \biggr) \pi_{ab}(X_i,X_j) \biggr] \\
        &= \mathbb{E}\biggl[ \Ex \biggl( \sum_{(a,b) \in \{0,1\}^2} \frac{g(Y_i,X_i,Y_j,X_j,\gamma^{(a)},\gamma^{(b)}) D_{ij}^{ab}}{e_{ab}(X_i,X_j)} \biggr| X_i, X_j\biggr) \pi_{ab}(X_i,X_j) \biggr] \\
        &= \mathbb{E}\biggl[ \sum_{(a,b) \in \{0,1\}^2} \frac{g(Y_i,X_i,Y_j,X_j,\gamma^{(a)},\gamma^{(b)}) D_{ij}^{ab}}{e_{ab}(X_i,X_j)}  \pi_{ab}(X_i,X_j) \biggr],
    \end{align*}
    where in the second equality I use LIE, in the third I use selection on observables, in the fourth I use the definition of $Y_i$. The identification by the direct method is in the fourth equality while the IPW is the last equality.  
\end{proof}

\begin{proof}[Proof of Proposition \ref{app_prop_Uorthscores}]
    Let us start with the DM identification. As usual, let $d/d\tau$ be the derivative at $\tau = 0$. Let me also make the dependence on $\varphi$ explicit: $m_{ab}(Z_i,Z_j,\gamma,\varphi) = \varphi(a,X_i,b,X_j,\gamma_a,\gamma_b)$, let also $\varphi_\tau = \varphi + \tau \tilde{\varphi}$ for some $\tilde{\varphi} \in L_2$. By the chain rule
    \begin{align*}
        \frac{d}{d\tau} \Ex[m_{ab}(Z_i,Z_j,\bar{\gamma}_\tau, \varphi_\tau)] &= 
        \frac{d}{d\tau} \Ex[m_{ab}(Z_i,Z_j,\bar{\gamma}_\tau, \varphi)] + 
        \frac{d}{d\tau} \Ex[m_{ab}(Z_i,Z_j,\gamma, \varphi_\tau)].
    \end{align*}
    By Assumption \ref{app_ass_Ulinearization} we have that the first term is
    \[
        \frac{d}{d\tau} \Ex[m_{ab}(Z_i,Z_j,\bar{\gamma}_\tau, \varphi)] = \mathbb{E}\biggl[\sum_{p=1}^P\alpha_{ab,p}^\gamma(D_i,X_i,D_j,X_j)(c_{1p}\bar{\gamma}_\tau(D_i,X_i) + c_{2p}\bar{\gamma}_\tau(D_j,X_j))\biggr],
    \]
    so by Lemma 1 and equation (2.16) in \cite{escanciano2023machine} we have that 
    \[
        \phi_{ab}^\gamma(D_i,X_i,D_j,X_j,e,\alpha^{\gamma}) = \sum_{p=1}^P \alpha_{ab,p}^\gamma(D_i,X_i,D_j,X_j,e)(c_{1p} Y_i + c_{2p} Y_j - c_{1p}\gamma(D_i,X_i) - c_{2p} \gamma(D_j,X_j)).    
    \]
    For the second term notice that
    \begin{align*}
        \Ex[\varphi(a,X_i,b,X_j,\gamma_a,\gamma_b)] &= \Ex\biggl[\varphi(a,X_i,b,X_j,\gamma_a,\gamma_b)\frac{D_{ij}^{ab}}{e_{ab}(X_i,X_j)}\biggr] \\
        &= \Ex\biggl[\varphi(a,X_i,b,X_j,\gamma_a,\gamma_b)\frac{1}{e_{ab}(X_i,X_j)} \biggl| D_{ij}^{ab} = 1 \biggr] \Pr(D_i = a, D_j = b) \\
        &= \Ex\biggl[\varphi(D_i,X_i,D_j,X_j,\gamma_a,\gamma_b)\frac{D_{ij}^{ab}}{e_{ab}(X_i,X_j)}  \biggr].
    \end{align*}
    So by the same arguments
    \[
        \phi_{ab}^\varphi(D_i,X_i,D_j,X_j,\varphi,\alpha^{m}) =  \alpha_{ab}^\varphi(D_i,X_i,D_j,X_j)(g(Y_i,X_i,Y_j,X_j,\gamma_a,\gamma_b) - \varphi(D_i,X_i,D_j,X_j,\gamma_a,\gamma_b)),  
    \]
    with $\alpha_{ab}^\varphi(D_i,X_i,D_j,X_j) = D_{ij}^{ab}/e_{ab}(X_i,X_j)$. For the IPW identification let me make the dependence on the propensity score explicit: $m_{ab}(Z_i,Z_j,\gamma,\varphi,e) = g(Y_i,X_i,Y_j,X_j,\gamma_a,\gamma_b)D_{ij}^{ab}/e_{ab}(X_i,X_j)$. For $c \in \{0,1\}$ let $e_{c,\tau} = e_c + \tau \tilde{e}_c$ for some $\tilde{e}_c \in L_2$ and $e_\tau = (e_{a,\tau},e_{b,\tau})$. Then
    \begin{align*}
        \frac{d}{d\tau} \Ex[m_{ab}(Z_i,Z_j,\bar{\gamma}_\tau, e_\tau)] &= 
        \frac{d}{d\tau} \Ex[m_{ab}(Z_i,Z_j,\bar{\gamma}_\tau, e)] + 
        \frac{d}{d\tau} \Ex[m_{ab}(Z_i,Z_j,\gamma, e_\tau)].
    \end{align*}
    For the first term, we have the same result by using Assumption \ref{app_ass_Ulinearization}. The second 
    \begin{align*}
        \frac{d}{d\tau} \Ex \biggl[\frac{g(Y_i,X_i,Y_j,X_j,\gamma_a,\gamma_b)D_{ij}^{ab}}{e_{a,\tau}(X_i)e_{b,\tau}(X_j)}\biggr] &= 
        \frac{d}{d\tau} \Ex \biggl[\frac{-g(Y_i,X_i,Y_j,X_j,\gamma_a,\gamma_b)D_{ij}^{ab}}{e_{a}(X_i)^2e_{b}(X_j)}e_{a,\tau}(X_i)\biggr] \\
        &+  \frac{d}{d\tau} \Ex \biggl[\frac{-g(Y_i,X_i,Y_j,X_j,\gamma_a,\gamma_b)D_{ij}^{ab}}{e_{a}(X_i)e_{b}(X_j)^2}e_{b,\tau}(X_j)\biggr] \\
        &= \frac{d}{d\tau} \Ex \biggl[\Ex \biggl(\frac{-g(Y_i,X_i,Y_j,X_j,\gamma_a,\gamma_b)D_{ij}^{ab}}{e_{a}(X_i)^2e_{b}(X_j)} \biggl| X_i \biggr) e_{a,\tau}(X_i)\biggr] \\
        &+  \frac{d}{d\tau} \Ex \biggl[\Ex \biggl(\frac{-g(Y_i,X_i,Y_j,X_j,\gamma_a,\gamma_b)D_{ij}^{ab}}{e_{a}(X_i)e_{b}(X_j)^2} \biggl| X_j \biggr) e_{b,\tau}(X_j)\biggr].
    \end{align*}
    So by the same arguments as before 
    \[
        \phi_{ab}^e(D_i,X_i,D_j,X_j,e,\alpha^{e}) = \alpha_{ab,1}^e(X_i)(\ind(D_i = a) - e_a(X_i)) + \alpha_{ab,2}^e(X_j)(\ind(D_j = b) - e_b(X_j)),    
    \]
    where
    \begin{align*}
        \alpha_{ab,1}^e(X_i) &= -\Ex\biggl[\frac{g(Y_i,X_i,Y_j,X_j,\gamma_a,\gamma_b)D_{ij}^{ab}}{e_a(X_i)^2 e_b(X_j)} \biggr| X_i\biggr], \, \alpha_{ab,2}^e(X_j) &= -\Ex\biggl[\frac{g(Y_i,X_i,Y_j,X_j,\gamma_a,\gamma_b)D_{ij}^{ab}}{e_a(X_i) e_b(X_j)^2}\biggr|X_j\biggr].
    \end{align*}
\end{proof}

\begin{proof}[Proof of Proposition \ref{prop_ioporthscores}]
    Let $\gamma_{c,\tau} = \gamma_c + \tau \tilde{\gamma}_c$ for some $\tilde{\gamma_c} \in L_2$. For $(a,b) \in \{0,1\}^2$
    \begin{align*}
        \frac{d}{d\tau} \Ex[\gamma_{a,\tau}(X_i) + \gamma_{b,\tau}(X_j)] &= 
        \frac{d}{d\tau} \Ex\biggl[ \gamma_{a,\tau}(X_i)\frac{\ind(D_i = a)}{e_a(X_i)} + \gamma_{b,\tau}(X_j)\frac{\ind(D_j = b)}{e_b(X_j)}\biggr] \\
        &= \frac{d}{d\tau} \Ex\biggl[ \bar{\gamma}_{\tau}(D_i,X_i)\frac{\ind(D_i = a)}{e_a(X_i)} + \bar{\gamma}_{\tau}(D_j,X_j)\frac{\ind(D_j = b)}{e_b(X_j)}\biggr] \\
        &= \frac{d}{d\tau} \Ex_\tau \biggl[ \bar{\gamma}_{\tau}(D_i,X_i)\frac{\ind(D_i = a)}{e_a(X_i)} + \bar{\gamma}_{\tau}(D_j,X_j)\frac{\ind(D_j = b)}{e_b(X_j)}\biggr] \\
        &- \frac{d}{d\tau} \Ex_\tau \biggl[ \gamma(D_i,X_i)\frac{\ind(D_i = a)}{e_a(X_i)} + \gamma(D_j,X_j)\frac{\ind(D_j = b)}{e_b(X_j)}\biggr] \\
        &= \frac{d}{d\tau} \Ex_\tau\biggl[ \frac{\ind(D_i = a)}{e_a(X_i)}(Y_i - \gamma(D_i,X_i)) + \frac{\ind(D_j = b)}{e_b(X_j)}(Y_j - \gamma(D_j,X_j))\biggr].
    \end{align*}
    Also, let $\Delta_{a,b} = \gamma_a(X_i) - \gamma_b(X_j)$, then 
    \begin{align*}
        \frac{d}{d\tau} \Ex[|\gamma_{a,\tau}(X_i) - \gamma_{b,\tau}(X_j)|] &= 
        \frac{d}{d \tau} \Ex[|\Delta_{ab} + \tau(\tilde{\gamma}_a(X_i) - \tilde{\gamma}_b(X_j))|].
    \end{align*}
    As shown in \cite{escanciano2023machine}, the Gateaux derivative of the mapping $\Delta \mapsto \Ex(|\Delta|)$ is some direction $\nu$ (assuming no point mass at zero, which follows from the assumptions in the Proposition) is $\Ex[sgn(\Delta)\nu]$. Hence, by the chain rule
    \begin{align*}
        \frac{d}{d\tau} &\Ex[\gamma_{a,\tau}(X_i) + \gamma_{b,\tau}(X_j)] = 
        \frac{d}{d\tau} \Ex[sgn(\gamma_{a}(X_i) - \gamma_{b}(X_j))(\gamma_{a,\tau}(X_i) - \gamma_{b,\tau}(X_j))] \\
        &= \frac{d}{d\tau} \Ex\biggl[sgn(\gamma_{a}(X_i) - \gamma_{b}(X_j))\biggl(\gamma_{a,\tau}(X_i)\frac{\ind(D_i = a)}{e_a(X_i)} - \gamma_{b,\tau}(X_j)\frac{\ind(D_j = b)}{e_b(X_j)}\biggr)\biggr] \\
        &= \frac{d}{d\tau} \Ex\biggl[sgn(\gamma_{a}(X_i) - \gamma_{b}(X_j))\biggl(\gamma_{\tau}(D_i,X_i)\frac{\ind(D_i = a)}{e_a(X_i)} - \gamma_{\tau}(D_j,X_j)\frac{\ind(D_j = b)}{e_b(X_j)}\biggr)\biggr] \\
        &= \frac{d}{d\tau} \Ex_\tau \biggl[sgn(\gamma_{a}(X_i) - \gamma_{b}(X_j))\biggl(\gamma_{\tau}(D_i,X_i)\frac{\ind(D_i = a)}{e_a(X_i)} - \gamma_{\tau}(D_j,X_j)\frac{\ind(D_j = b)}{e_b(X_j)}\biggr)\biggr] \\
        &- \frac{d}{d\tau} \Ex_\tau\biggl[sgn(\gamma_{a}(X_i) - \gamma_{b}(X_j))\biggl(\gamma(D_i,X_i)\frac{\ind(D_i = a)}{e_a(X_i)} - \gamma(D_j,X_j)\frac{\ind(D_j = b)}{e_b(X_j)}\biggr)\biggr] \\
        &= \frac{d}{d\tau} \Ex_\tau\biggl[sgn(\gamma_{a}(X_i) - \gamma_{b}(X_j))\biggl(\frac{\ind(D_i = a)}{e_a(X_i)}(Y_i - \gamma(D_i,X_i)) - \frac{\ind(D_j = b)}{e_b(X_j)}(Y_j - \gamma(D_j,X_j))\biggr)\biggr].
    \end{align*}
    So by the results in \cite{escanciano2023machine}, the locally robust score is given by
    \begin{align*}
        2\Gamma_{ij}^{ab} &= \gamma_a(X_i) + \gamma_b(X_j) - |\gamma_a(X_i) - \gamma_b(X_j)| \\
        &+  \frac{\ind(D_i = a)}{e_a(X_i)}(Y_i - \gamma(D_i,X_i)) + \frac{\ind(D_j = b)}{e_b(X_j)}(Y_j - \gamma(D_j,X_j)) \\
        &-  sgn(\gamma_{a}(X_i) - \gamma_{b}(X_j))\biggl(\frac{\ind(D_i = a)}{e_a(X_i)}(Y_i - \gamma(D_i,X_i)) - \frac{\ind(D_j = b)}{e_b(X_j)}(Y_j - \gamma(D_j,X_j))\biggr) \\
        &= \gamma_a(X_i) + \gamma_b(X_j) - |\gamma_a(X_i) - \gamma_b(X_j)| \\
        &+  (1 - sgn(\gamma_{a}(X_i) - \gamma_{b}(X_j)))\frac{\ind(D_i = a)}{e_a(X_i)}(Y_i - \gamma(D_i,X_i)) \\
        &+ (1 + sgn(\gamma_{a}(X_i) - \gamma_{b}(X_j)))\frac{\ind(D_j = b)}{e_b(X_j)}(Y_j - \gamma(D_j,X_j)).
    \end{align*}
\end{proof}

\bigbreak

\begin{proof}[Proof of Lemma \ref{lemma_rademacher}]
    Using the definition of $W(\pi)$ and $\T{W}_n(\pi)$ and the triangle inequality we know that 
    \begin{align*}
        \Ex\biggl[\sup_{\pi \in \Pi} |\T{W}_n(\pi) - W(\pi)|\biggr] &= \Ex\biggl[\sup_{\pi \in \Pi} \biggl| \U_n \sum_{(a,b) \in \{0,1\}^2} \biggl(\Gamma_{ij}^{ab} \pi_{ab}(X_i,X_j) - \Ex[\Gamma_{ij}^{ab} \pi_{ab}(X_i,X_j)]\biggr) \biggr| \biggr] \\
        &\leq \sum_{(a,b) \in \{0,1\}^2} \Ex\biggl[\sup_{\pi \in \Pi} \biggl| \U_n  \biggl(\Gamma_{ij}^{ab} \pi_{ab}(X_i,X_j) - \Ex[\Gamma_{ij}^{ab} \pi_{ab}(X_i,X_j)]\biggr) \biggr| \biggr].
    \end{align*}
    By the representation in Section A Online Appendix we can rewrite the above as
    \begin{align}
        \sum_{(a,b) \in \{0,1\}^2} &\Ex\biggl[\sup_{\pi \in \Pi} \biggl| \frac{1}{n!} \sum_{\kappa} \floor{n/2}^{-1} \sum_{i=1}^{\floor{n/2}} \biggl(\Gamma_{\kappa(i)\kappa(\floor{n/2} + i)}^{ab} \pi_{ab}(X_{\kappa(i)},X_{\kappa(\floor{n/2} + i)}) \nonumber \\
        \label{eq_perms}
        &- \Ex[\Gamma_{\kappa(i)\kappa(\floor{n/2} + i)}^{ab} \pi_{ab}(X_{\kappa(i)},X_{\kappa(\floor{n/2} + i)})]\biggr) \biggr| \biggr].
    \end{align}
    Introduce an independent ghost sample $(Z'_1,...,Z'_n)$ which is distributed as $(Z_1,...,Z_n)$, Rademacher random variables $\varepsilon_i$, $i = 1,...,n$, such that $\mathbb{P}(\varepsilon_i = 1) = \mathbb{P}(\varepsilon_i = -1) = 1/2$ and construct ghost scores $\Gamma^{'ab}_{ij}$ using the ghost sample. Let $\mathbb{E}_Z$ be the expectation with respect to the distribution of the sample $(Z_1,...,Z_n)$ and define $\mathbb{E}_{Z'}$ and $\mathbb{E}_\varepsilon$ similarly. Define the Rademacher complexity as
    \[
    \mathcal{R}_n(\Pi) = \Ex_\varepsilon \biggl(\sup_{\pi \in \Pi} \biggl| \floor{n/2}^{-1} \sum_{i=1}^{\floor{n/2}}\varepsilon_i  \Gamma_{i,\floor{n/2} + i}^{ab} \pi_{ab}(X_{i},X_{\floor{n/2} + i})\biggr| \biggr).
    \]
    We are now ready to use a classical symmetrization argument, (\ref{eq_perms}) is equal to
    \begin{align*}
        \sum_{(a,b) \in \{0,1\}^2} &\Ex_Z\biggl[\sup_{\pi \in \Pi} \biggl| \frac{1}{n!} \sum_{\kappa} \floor{n/2}^{-1} \sum_{i=1}^{\floor{n/2}} \biggl(\Gamma_{\kappa(i)\kappa(\floor{n/2} + i)}^{ab} \pi_{ab}(X_{\kappa(i)},X_{\kappa(\floor{n/2} + i)})\\
        &- \Ex_{Z'}[\Gamma_{\kappa(i)\kappa(\floor{n/2} + i)}^{'ab} \pi_{ab}(X'_{\kappa(i)},X'_{\kappa(\floor{n/2} + i)})]\biggr) \biggr| \biggr] \\
        &\leq \frac{1}{n!} \sum_{\kappa}  \sum_{(a,b) \in \{0,1\}^2} \Ex_{Z,Z'}\biggl[\sup_{\pi \in \Pi} \biggl| \floor{n/2}^{-1} \sum_{i=1}^{\floor{n/2}} \biggl(\Gamma_{\kappa(i)\kappa(\floor{n/2} + i)}^{ab} \pi_{ab}(X_{\kappa(i)},X_{\kappa(\floor{n/2} + i)})\\
        &- \Gamma_{\kappa(i)\kappa(\floor{n/2} + i)}^{'ab} \pi_{ab}(X'_{\kappa(i)},X'_{\kappa(\floor{n/2} + i)})\biggr) \biggr| \biggr] \\
        &= \sum_{(a,b) \in \{0,1\}^2} \Ex_{Z,Z',\varepsilon}\biggl[\sup_{\pi \in \Pi} \biggl| \floor{n/2}^{-1} \sum_{i=1}^{\floor{n/2}} \varepsilon_i \biggl(\Gamma_{i,\floor{n/2} + i}^{ab} \pi_{ab}(X_{i},X_{\floor{n/2} + i})\\
        &- \Gamma_{i,\floor{n/2} + i}^{'ab} \pi_{ab}(X'_{i},X'_{\floor{n/2} + i})\biggr) \biggr| \biggr] \\
        &\leq \sum_{(a,b) \in \{0,1\}^2} \Ex_{Z,Z',\varepsilon}\biggl[\sup_{\pi \in \Pi} \biggl| \floor{n/2}^{-1} \sum_{i=1}^{\floor{n/2}} \varepsilon_i \Gamma_{i,\floor{n/2} + i}^{ab} \pi_{ab}(X_{i},X_{\floor{n/2} + i}) \biggr| \\
        &+ \biggl| \floor{n/2}^{-1} \sum_{i=1}^{\floor{n/2}} \varepsilon_i \Gamma_{i,\floor{n/2} + i}^{'ab} \pi_{ab}(X'_{i},X'_{\floor{n/2} + i}) \biggr| \biggr] \\
        &= \sum_{(a,b) \in \{0,1\}^2} \Ex[2\mathcal{R}_n(\Pi)].
    \end{align*}
The first inequality follows from Jensen's and triangle inequalities, then we use that $(Z_{\pi(i)},Z_{\pi(\floor{n/2} + i)},Z_{\pi(i)}',Z_{\pi(\floor{n/2} + i)}')$ is identically distributed for $i = 1,...,\floor{n/2}$ for all permutations in $\kappa$ and that $\varepsilon_i (\Gamma_{i,\floor{n/2} + i}^{ab} \pi_{ab}(X_{i},X_{\floor{n/2} + i}) - \Gamma_{i,\floor{n/2} + i}^{'ab} \pi_{ab}(X'_{i},X'_{\floor{n/2} + i}))$ and $\Gamma_{i,\floor{n/2} + i}^{ab} \pi_{ab}(X_{i},X_{\floor{n/2} + i}) - \Gamma_{i,\floor{n/2} + i}^{'ab} \pi_{ab}(X'_{i},X'_{\floor{n/2} + i})$ have the same distribution, the third inequality uses triangle inequality and the last equality uses $Z_i \sim Z'_i$ and the definition of the Rademacher complexity.
\end{proof}

\begin{proof}[Proof of Lemma \ref{lemma_boundforrademacher}]
    Note that Lemma 2 Online Appendix gives a sequence of covers $B_k$ for $k = 0,...,K$ of $\tilde{\Pi}_{ab}$ of radius less than $2^{-k}$ for some $K$. For any $j = 1,...,J$ with $J = \ceil{\log_2(\floor{n/2})(1-\beta)}$ and $\pi \in \tilde{\Pi}_{ab}$ let $b_j: \tilde{\Pi}_{ab} \mapsto \tilde{\Pi}_{ab}$ be such that $b_j(\pi)$ is an approximating policy from the cover $B_j$ such that $D_n(\pi,b_j(\pi)) \leq 2^{-j}$, such an approximation exists by the same Lemma. Also by the same Lemma, $|\{b_j(\pi): \pi \in \tilde{\Pi}_{ab}\}| \leq  N_{D_n}(2^{-j},\T{\Pi}_{ab},\{X_i,\Gamma_{i,\floor{n/2}+i}\}_{i=1}^{\floor{n/2}})$. Let $\underline{J} = \ceil{1/2\log_2(\floor{n/2})(1-\beta)}$. By a telescope sum and approximations $b_0,...,b_J$,  the complexity decomposes as 
    \begin{align*}
        \mathcal{R}_n(\Pi) &= \Ex_\varepsilon \biggl\{\sup_{\pi \in \Pi} \biggl| \floor{n/2}^{-1} \sum_{i=1}^{\floor{n/2}}\varepsilon_i  \Gamma_{i,\floor{n/2} + i}^{ab}\biggl[b_0(\pi_{ab}(X_{i},X_{\floor{n/2} + i})) \\
        &+ \sum_{j=1}^{\underline{J}} \biggl(b_j(\pi_{ab}(X_{i},X_{\floor{n/2} + i})) - b_{j-1}(\pi_{ab}(X_{i},X_{\floor{n/2} + i}))\biggr) \\
        &+ (b_J(\pi_{ab}(X_{i},X_{\floor{n/2} + i})) - b_{\underline{J}}(\pi_{ab}(X_{i},X_{\floor{n/2} + i}))) \\
        &+ (\pi_{ab}(X_{i},X_{\floor{n/2} + i}) - b_{J}(\pi_{ab}(X_{i},X_{\floor{n/2} + i})))\biggr] \biggr| \biggr\}.
    \end{align*} 
    Since $D_n$ is bounded by $1$, by the second property in Lemma 2 Online Appendix we have that $b_0$ can be any policy in $\tilde{\Pi}_{ab}$. Hence, we can set $b_0(\pi_{ab}(X_{i},X_{\floor{n/2} + i})) = 0$ for all $i =1,...,\floor{n/2}$. We approach each of the terms above in turn. Note that $b_0,...,b_J$ is a sequence of increasingly accurate approximations. Notice that the last term above is negligible, i.e. the term involving the closest approximation vanishes at a $\sqrt{n}$ rate. By using Cauchy-Schwarz and multiplying and dividing we get
    \begin{align*}
        \sqrt{\floor{n/2}}\sup_{\pi \in \Pi} &\biggl| \floor{n/2}^{-1} \sum_{i=1}^{\floor{n/2}}\varepsilon_i  \Gamma_{i,\floor{n/2} + i}^{ab}(\pi_{ab}(X_{i},X_{\floor{n/2} + i}) - b_{J}(\pi_{ab}(X_{i},X_{\floor{n/2} + i}))) \biggr| \\
        &\leq \sqrt{\floor{n/2}} \sup_{\pi \in \Pi}  \frac{\sqrt{\floor{n/2}^{-1} \sum_{i=1}^{\floor{n/2}} \biggl| \Gamma_{i,\floor{n/2} + i}^{ab}(\pi_{ab}(X_{i},X_{\floor{n/2} + i}) - b_{J}(\pi_{ab}(X_{i},X_{\floor{n/2} + i}))) \biggr|^2}}{\sqrt{\floor{n/2}^{-1} \sum_{i=1}^{\floor{n/2}} \Gamma_{i,\floor{n/2} + i}^{2 \, ab}}} \\
        &\times \sqrt{\floor{n/2}^{-1} \sum_{i=1}^{\floor{n/2}} \Gamma_{i,\floor{n/2} + i}^{2 \, ab}} \\
        &= \sqrt{\floor{n/2}}\sup_{\pi \in \Pi} D_n(\pi_{ab}, b_J(\pi_{ab})) \sqrt{\floor{n/2}^{-1} \sum_{i=1}^{\floor{n/2}} \Gamma_{i,\floor{n/2} + i}^{2 \, ab}} \\
        &\leq \sqrt{\floor{n/2}} 2^{-J} \sqrt{\floor{n/2}^{-1} \sum_{i=1}^{\floor{n/2}} \Gamma_{i,\floor{n/2} + i}^{2 \, ab}} = \frac{M}{\floor{n/2}^{1/2-\beta}} \to 0,
    \end{align*}
    last inequality uses Lemma 2 Online Appendix and the last equality uses $J = \ceil{\log_2(\floor{n/2})(1-\beta)}$ and the boundedness assumption. The second to last term of the Rademacher decomposition is also negligible. Notice that $\{\varepsilon_i  \Gamma_{i,\floor{n/2} + i}^{ab}(b_J(\pi_{ab}(X_{i},X_{\floor{n/2} + i})) - b_{\underline{J}}(\pi_{ab}(X_{i},X_{\floor{n/2} + i})))\}_{i=1}^{\floor{n/2}}$ are zero mean (conditional on $\{X_i,\Gamma_{i,\floor{n/2}+i}\}_{i=1}^{\floor{n/2}}$) i.i.d. random variables. They are also bounded below by $a_i = -|\Gamma_{i,\floor{n/2} + i}^{ab}(b_J(\pi_{ab}(X_{i},X_{\floor{n/2} + i})) - b_{\underline{J}}(\pi_{ab}(X_{i},X_{\floor{n/2} + i})))|$ and above by $b_i = -a_i$. Hence, by Hoeffding's inequality
    \begin{align*}
        \Pr_\varepsilon &\biggl(\biggl| \sum_{i=1}^{\floor{n/2}} \varepsilon_i  \Gamma_{i,\floor{n/2} + i}^{ab}(b_J(\pi_{ab}(X_{i},X_{\floor{n/2} + i})) - b_{\underline{J}}(\pi_{ab}(X_{i},X_{\floor{n/2} + i}))) \biggr| \geq t \biggr) \\
        &\leq 2\exp\biggl(-\frac{2t^2}{\sum_{i=1}^{\floor{n/2}}(b_i - a_i)^2}\biggr) = 2\exp\biggl(-\frac{t^2}{D_n^2(b_J(\pi_{ab}),b_{\underline{J}}(\pi_{ab}))\sum_{i=1}^{\floor{n/2}} \Gamma_{i,\floor{n/2} + i}^{2 \, ab}}\biggr). \\
    \end{align*}
    Hence, for any $a > 0$ we have that
    \begin{align*}
        \Pr_\varepsilon &\biggl(\biggl| \sqrt{\floor{n/2}} \floor{n/2}^{-1} \sum_{i=1}^{\floor{n/2}} \varepsilon_i  \Gamma_{i,\floor{n/2} + i}^{ab}(b_J(\pi_{ab}(X_{i},X_{\floor{n/2} + i})) - b_{\underline{J}}(\pi_{ab}(X_{i},X_{\floor{n/2} + i}))) \biggr| \\
        &\geq a 2^{2-\underline{J}} \sqrt{\frac{\sum_{i=1}^{\floor{n/2}} \Gamma_{i,\floor{n/2} + i}^{2 \, ab}}{\floor{n/2}}} \biggr) \\
        &\leq 2\exp\biggl(-\frac{a^2 4^{2-\underline{J}}}{D_n^2(b_J(\pi_{ab}),b_{\underline{J}}(\pi_{ab}))} \biggr) \\
        &\leq 2\exp\biggl(-\frac{a^2 4^{2-\underline{J}}}{\sum_{j = \underline{J}}^{J-1}D_n^2(b_j(\pi_{ab}),b_{j+1}(\pi_{ab}))} \biggr) \\
        &\leq 2\exp\biggl(-\frac{a^2 4^{2-\underline{J}}}{\biggl(\sum_{j = \underline{J}}^{J-1} 2^{-(j-1)}\biggr)^2} \biggr)\\
        &\leq 2 \exp(-a^2),
    \end{align*}
    where we have used triangle inequality in the second inequality and that $\sum_{j = \underline{J}}^{J-1} 2^{-(j-1)} = 2^{2-\underline{J}} - 2^{2-J} \leq 2^{2-\underline{J}}$ in the last inequality. This holds for any policy, hence
    \begin{align*}
        \Pr_\varepsilon &\biggl(\sup_{\pi \in \Pi}\biggl| \floor{n/2}^{-1/2} \sum_{i=1}^{\floor{n/2}} \varepsilon_i  \Gamma_{i,\floor{n/2} + i}^{ab}(b_J(\pi_{ab}(X_{i},X_{\floor{n/2} + i})) - b_{\underline{J}}(\pi_{ab}(X_{i},X_{\floor{n/2} + i}))) \biggr| \\
        &\geq a 2^{2-\underline{J}} \sqrt{\frac{\sum_{i=1}^{\floor{n/2}} \Gamma_{i,\floor{n/2} + i}^{2 \, ab}}{\floor{n/2}}} \biggr) \\
        &\leq 2 |\{b_J(\pi_{ab}),b_{\underline{J}}(\pi_{ab})\}| \exp(-a^2) \\
        &\leq 2 N_{D_n}(2^{-J},\T{\Pi}_{ab},\{X_i,\Gamma_{i,\floor{n/2}+i}\}_{i=1}^{\floor{n/2}}) \exp(-a^2) \\
        &\leq 2 N_H(2^{-2J},\T{\Pi}_{ab}) \exp(-a^2) \\
        &= 2 \exp(\log(N_H(2^{-2J},\T{\Pi}_{ab}))) \exp(-a^2) \\
        &\leq 2 \exp(5 VC(\T{\Pi}_{ab}) \log(2^{2J}) -a^2) \\
        &\leq 2\exp(5 VC(\T{\Pi}_{ab}) \log(2^{-2(1-\beta)\log_2(\floor{n/2})}) - a^2),
    \end{align*}
    where in the first inequality I use the union bound, in the second inequality I use properties of the approximations (see \cite{zhou2023offline}), in the third I use Lemma Lemma 1 in the Online Appendix and in the fourth inequality I bound the log of the Hamming covering number by the VC dimension using a result in \cite{haussler1995sphere}. Let now
    \[
        a = \frac{2^{\underline{J}}}{\sqrt{\log(\floor{n/2}) \floor{n/2}^{-1} \sum_{i=1}^{\floor{n/2}} \Gamma_{i,\floor{n/2} + i}^{2 \, ab}}} \text{ so that }  a 2^{2-\underline{J}} \sqrt{\frac{\sum_{i=1}^{\floor{n/2}} \Gamma_{i,\floor{n/2} + i}^{2 \, ab}}{\floor{n/2}}} = \frac{4}{\sqrt{\log(\floor{n/2})}}.
    \]
    Finally,
    \begin{align*}
        \Pr_\varepsilon &\biggl(\sup_{\pi \in \Pi}\biggl| \floor{n/2}^{-1/2} \sum_{i=1}^{\floor{n/2}} \varepsilon_i  \Gamma_{i,\floor{n/2} + i}^{ab}(b_J(\pi_{ab}(X_{i},X_{\floor{n/2} + i})) - b_{\underline{J}}(\pi_{ab}(X_{i},X_{\floor{n/2} + i}))) \biggr| \\
        &\geq\frac{4}{\sqrt{\log(\floor{n/2})}} \biggr) \leq 2\exp\biggl(5 VC(\T{\Pi}_{ab}) \log(\floor{n/2}^{-2(1-\beta)}) - \frac{\floor{n/2}^{-\beta}}{\log(\floor{n/2}) \sum_{i=1}^{\floor{n/2}} \Gamma_{i,\floor{n/2} + i}^{2 \, ab}}\biggr) \\
        &\leq 2\exp\biggl\{-5 \floor{n/2}^\beta \log\biggl(\floor{n/2}^{2(1-\beta)}\biggr) - \frac{1}{\floor{n/2}^\beta \log(\floor{n/2})M^2} \biggr\} \to 0,
    \end{align*}
    where I have used Assumption \ref{ass_VC} and the boundedness assumption. 
    \begin{align*}
        \Ex\biggl(\sup_{\pi \in \Pi}\biggl| \floor{n/2}^{-1/2} \sum_{i=1}^{\floor{n/2}} \varepsilon_i  \Gamma_{i,\floor{n/2} + i}^{ab}(b_J(\pi_{ab}(X_{i},X_{\floor{n/2} + i})) - b_{\underline{J}}(\pi_{ab}(X_{i},X_{\floor{n/2} + i}))) \biggr| \biggr) \to 0,
    \end{align*}
    since for any sequence of random variables $X_n$ and sequence of real numbers $a_n$ if $\lim_{n \to \infty} \Pr(X_n \leq a_n) = 1$ and $\lim_{n \to \infty}a_n = 0$, then $\lim_{n \to \infty} \Ex(X_n) = 0$ (proof of this fact uses $\Ex(X_n) = \int_0^\infty \Pr(X_n > u) \, du$). Hence, we have proven that
    \begin{align*}
        \Ex[\mathcal{R}_n(\Pi)] &= \Ex \biggl\{\sup_{\pi \in \Pi} \biggl| \floor{n/2}^{-1} \sum_{i=1}^{\floor{n/2}}\varepsilon_i  \Gamma_{i,\floor{n/2} + i}^{ab}\biggl[\sum_{j=1}^{\underline{J}} \biggl(b_j(\pi_{ab}(X_{i},X_{\floor{n/2} + i})) - b_{j-1}(\pi_{ab}(X_{i},X_{\floor{n/2} + i}))\biggr) \biggr] \biggr| \biggr\} \\
        &+ o\biggl(\frac{1}{\sqrt{n}}\biggr).
    \end{align*} 
    Hence I have left what \cite{zhou2023offline} call the effective regime. Let $j \in \{1,...,\underline{J}\}$ and $a_j$ be some constant depending on $j$. As before, conditional on $\{X_i,\Gamma_{i,\floor{n/2}+i}\}_{i=1}^{\floor{n/2}}$ we can apply Hoeffding inequality and then use the definition of $D_n$ to get
    \begin{align*}
        \Pr_\varepsilon &\biggl(\biggl| \floor{n/2}^{-1/2} \sum_{i=1}^{\floor{n/2}} \varepsilon_i \Gamma_{i,\floor{n/2} + i}^{ab}(b_j(\pi_{ab}(X_{i},X_{\floor{n/2} + i})) - b_{j-1}(\pi_{ab}(X_{i},X_{\floor{n/2} + i}))) \biggr| \\
        & \geq a_j 2^{2-j} \sqrt{\frac{\sum_{i=1}^{\floor{n/2}} \Gamma_{i,\floor{n/2} + i}^{2 \, ab}}{\floor{n/2}}} \biggr) \\
        &\leq 2\exp\biggl(-\frac{a_j^2 4^{2-j}}{D_n^2(b_j(\pi_{ab}),b_{j-1}(\pi_{ab}))} \biggr) \\
        &\leq 2 \exp\biggl(\frac{-a_j^2 4^{2-j}}{4^{-(j-1)}}\biggr) \\
        &= 2 \exp\biggl(-4a_j^2 \biggr),
    \end{align*}
    since $D_n(b_j(\pi_{ab}),b_{j-1}(\pi_{ab})) \leq 2^{-(j-1)}$ by Lemma 2 Online Appendix. Now we let
    \[
        a_j^2(k) = 2 \log \biggl(\frac{2j^2}{\delta_k} N_H(4^{-j},\T{\Pi}_{ab})\biggr),  
    \]
    where $\delta_k$ is some sequence of real numbers indexed by $k \in \mathbb{N}$. Define
    \[
        R_j = \sup_{\pi \in \Pi} \biggl| \floor{n/2}^{-1/2} \sum_{i=1}^{\floor{n/2}} \varepsilon_i \Gamma_{i,\floor{n/2} + i}^{ab}(b_j(\pi_{ab}(X_{i},X_{\floor{n/2} + i})) - b_{j-1}(\pi_{ab}(X_{i},X_{\floor{n/2} + i}))) \biggr|.
    \]  
    Then we have that
    \begin{align*}
        \Pr\biggl(R_j \geq a_j(k) 2^{-j} \sqrt{\floor{n/2}^{-1} \sum_{i=1}^{\floor{n/2}} \Gamma_{i,\floor{n/2} + i}^{2 \, ab}}\biggr) &\leq 2 |\{b_j(\pi_{ab}),b_{j-1}(\pi_{ab})\}| \exp(-a_j^2(k)/2) \\
        &\leq 2 N_{D_n}(2^{-j},\T{\Pi}_{ab},\{X_i,\Gamma_{i,\floor{n/2}+i}\}_{i=1}^{\floor{n/2}}) \exp(-a_j^2(k)/2) \\
        &\leq 2 N_H(2^{-2j},\T{\Pi}_{ab}) \exp(-a_j^2(k)/2) \\
        &= 2 N_H(4^{-j},\T{\Pi}_{ab}) \exp(-\log(N_H(4^{-j},\T{\Pi}_{ab}) 2j^2/\delta_k)) \\
        &= \frac{\delta_k}{j^2}.
    \end{align*}
    Sum across $j$ and apply this bound with $\delta_k = 1/2^k$ to note that 
    \begin{align*}
        \sum_{j=1}^{\underline{J}} \Pr\biggl(R_j \geq a_j(k) 2^{-j} \sqrt{\floor{n/2}^{-1} \sum_{i=1}^{\floor{n/2}} \Gamma_{i,\floor{n/2} + i}^{2 \, ab}}\biggr) &\leq \sum_{j=1}^{\underline{J}} \frac{\delta_k}{j^2} \\
        &\leq \sum_{j=1}^{\infty} \frac{\delta_k}{j^2} \\
        &\leq \frac{1.7}{2^k}.
    \end{align*}
    Let $F_{R_j}$ be the cumulative distribution function of $R_j$ (conditional on $\{X_i,\Gamma_{i,\floor{n/2}+i}\}_{i=1}^{\floor{n/2}}$). We can bound the following object of interest in the following way
    \begin{align*}
        \Ex_\varepsilon&\biggl[\sup_{\pi \in \Pi} \biggl| \floor{n/2}^{-1/2} \sum_{i=1}^{\floor{n/2}} \varepsilon_i \Gamma_{i,\floor{n/2} + i}^{ab}\sum_{j=1}^{\underline{J}}(b_j(\pi_{ab}(X_{i},X_{\floor{n/2} + i})) - b_{j-1}(\pi_{ab}(X_{i},X_{\floor{n/2} + i}))) \biggr|\biggr] \\
        &\leq \sum_{j=1}^{\underline{J}} \Ex_\varepsilon[R_j] =  \int_0^\infty \sum_{j=1}^{\underline{J}} (1-F_{R_j}(r)) \, dr \leq \int_0^\infty \sum_{j=1}^{\underline{J}} \Pr(R_j \geq r) \, dr \\
        &\leq \sum_{k=0}^\infty \sum_{j=1}^{\underline{J}} \frac{1.7}{2^k} a_j(k) 2^{-j}\sqrt{\floor{n/2}^{-1} \sum_{i=1}^{\floor{n/2}} \Gamma_{i,\floor{n/2} + i}^{2 \, ab}} \\
        &\leq \sum_{k=0}^\infty \sum_{j=1}^{\underline{J}} \frac{1.7}{2^k} \sqrt{2} \sqrt{\log(2^{k+1} j^2 N_H(4^{-j},\T{\Pi}_{ab}))}2^{-j} \sqrt{\floor{n/2}^{-1} \sum_{i=1}^{\floor{n/2}} \Gamma_{i,\floor{n/2} + i}^{2 \, ab}} \\
        &\leq 1.7 \sqrt{2} \sqrt{\floor{n/2}^{-1} \sum_{i=1}^{\floor{n/2}} \Gamma_{i,\floor{n/2} + i}^{2 \, ab}} \sum_{k=0}^\infty 2^{-k} \sum_{j=1}^{\underline{J}} 2^{-j} \sqrt{(k+1)\log 2 + 2 \log j + \log N_H(4^{-j},\T{\Pi}_{ab})} \\
        &\leq 1.7 \sqrt{2} \sqrt{\floor{n/2}^{-1} \sum_{i=1}^{\floor{n/2}} \Gamma_{i,\floor{n/2} + i}^{2 \, ab}} \sum_{k=0}^\infty 2^{-k} \sum_{j=1}^{\underline{J}} 2^{-j} \biggl(\sqrt{k+1} + \sqrt{2 \log j} + \sqrt{5 VC(\T{\Pi}_{ab}) \log(4^j)}\biggr) \\
        &\leq 1.7 \sqrt{2} \sqrt{\floor{n/2}^{-1} \sum_{i=1}^{\floor{n/2}} \Gamma_{i,\floor{n/2} + i}^{2 \, ab}} \sum_{k=0}^\infty 2^{-k} \biggl( \sqrt{k+1} \sum_{j=1}^\infty 2^{-j} + \sqrt{2} \sum_{j=1}^{\infty} 2^{-j} \sqrt{\log j} \\
        &+ \sqrt{5 VC(\T{\Pi}_{ab})} \sum_{j=1}^\infty 2^{-j} \sqrt{\log 4^{j}}\biggr) \\
        &\leq 1.7 \sqrt{2} \sqrt{\floor{n/2}^{-1} \sum_{i=1}^{\floor{n/2}} \Gamma_{i,\floor{n/2} + i}^{2 \, ab}} \biggl( \sum_{k=0}^\infty 2^{-k} \sqrt{k+1} + \frac{\sqrt{2}}{2} \sum_{k=0}^\infty 2^{-k} + \sqrt{5 VC(\T{\Pi}_{ab})} 1.6 \sum_{k= 0}^\infty 2^{-k} \biggr) \\
        &\leq 1.7 \sqrt{2} \sqrt{\floor{n/2}^{-1} \sum_{i=1}^{\floor{n/2}} \Gamma_{i,\floor{n/2} + i}^{2 \, ab}} \biggl(5 + 3.2 \sqrt{5 VC(\T{\Pi}_{ab})}\biggr).
    \end{align*}
    So taking expectations over $\{X_i,\Gamma_{i,\floor{n/2}+i}\}_{i=1}^{\floor{n/2}}$, using this bound and the Jensen's inequality we get
    \begin{align*}
        \Ex&\biggl[\sup_{\pi \in \Pi} \biggl| \floor{n/2}^{-1/2} \sum_{i=1}^{\floor{n/2}} \varepsilon_i \Gamma_{i,\floor{n/2} + i}^{ab}\sum_{j=1}^{\underline{J}}(b_j(\pi_{ab}(X_{i},X_{\floor{n/2} + i})) - b_{j-1}(\pi_{ab}(X_{i},X_{\floor{n/2} + i}))) \biggr|\biggr] \\
        &\leq 1.7 \sqrt{2} \biggl(5 + 8 \sqrt{5 VC(\T{\Pi}_{ab})}\biggr) \Ex \biggl[ \sqrt{\floor{n/2}^{-1} \sum_{i=1}^{\floor{n/2}} \Gamma_{i,\floor{n/2} + i}^{2 \, ab}} \biggr] \\
        &\leq 1.7 \sqrt{2} \biggl(5 + 8 \sqrt{5 VC(\T{\Pi}_{ab})}\biggr) \sqrt{\Ex \biggl[ \Gamma_{i,\floor{n/2} + i}^{2 \, ab} \biggr]} \\
        &=  1.7 \sqrt{2} \biggl(5 + 8 \sqrt{5 VC(\T{\Pi}_{ab})}\biggr) \sqrt{S_{ab}} \leq C\sqrt{VC(\T{\Pi}_{ab}) S_{ab}},
    \end{align*}
    for some constant $C > 0$. Dividing both sides by $\sqrt{\floor{n/2}}$ we get 
    \begin{align*}
        \Ex&\biggl[\sup_{\pi \in \Pi} \biggl| \floor{n/2}^{-1} \sum_{i=1}^{\floor{n/2}} \varepsilon_i \Gamma_{i,\floor{n/2} + i}^{ab}\sum_{j=1}^{\underline{J}}(b_j(\pi_{ab}(X_{i},X_{\floor{n/2} + i})) - b_{j-1}(\pi_{ab}(X_{i},X_{\floor{n/2} + i}))) \biggr|\biggr] \\
        &\leq C\sqrt{\frac{VC(\T{\Pi}_{ab}) S_{ab}}{\floor{n/2}}},
    \end{align*}
    and hence $\Ex[\mathcal{R}_n(\Pi)] \leq C\sqrt{\frac{VC(\T{\Pi}_{ab}) S_{ab}}{\floor{n/2}}} + o\biggl(\frac{1}{\sqrt{n}}\biggr) = \mathcal{O}\biggl(\sqrt{\frac{VC(\T{\Pi}_{ab}) S_{ab}}{\floor{n/2}}}\biggr)$.
\end{proof}

\begin{proof}[Proof of Lemma \ref{lemma_uniform_coupling}]
    Define the following random variables
    \begin{align*}
        \hat{R}_{ij,ab,l}^{(1)} &= m_{ab}(Z_i,Z_j,\hat{\gamma}_l,\varphi) - m_{ab}(Z_i,Z_j,\gamma,\varphi), \quad  \hat{R}_{ij,ab,l}^{(2)} &= m_{ab}(Z_i,Z_j,\gamma,\hat{\varphi}_l) - m_{ab}(Z_i,Z_j,\gamma,\varphi) \\
        \hat{R}_{ij,ab,l}^{(3)} &= \phi_{ab}^{\gamma}(Z_i,Z_j,\hat{\gamma}_l,\alpha^{\gamma}) - \phi_{ab}^{\gamma}(Z_i,Z_j,\gamma,\alpha^{\gamma}), \quad \hat{R}_{ij,ab,l}^{(4)} &= \phi_{ab}^{\gamma}(Z_i,Z_j,\gamma,\hat{\alpha}_l^{\gamma}) - \phi_{ab}^{\gamma}(Z_i,Z_j,\gamma,\alpha^{\gamma}) \\
        \hat{R}_{ij,ab,l}^{(5)} &= \phi_{ab}^{\varphi}(Z_i,Z_j,\hat{\varphi}_l,\alpha^{\varphi}) - \phi_{ab}^{\varphi}(Z_i,Z_j,\varphi,\alpha^{\varphi}), \quad \hat{R}_{ij,ab,l}^{(6)} &= \phi_{ab}^{\varphi}(Z_i,Z_j,\varphi,\hat{\alpha}_l^{\varphi}) - \phi_{ab}^{\varphi}(Z_i,Z_j,\varphi,\alpha^{\varphi}).
    \end{align*}
Then,
\footnotesize
\begin{align*}
    \Ex\biggl[\sup_{\pi \in \Pi_n}|\hat{W}_n(\pi) - \T{W}_n(\pi)|\biggr] &= 
    \Ex\biggl(\sup_{\pi \in \Pi_n}\biggl| \binom{n}{2}^{-1} \sum_{l=1}^L \sum_{(i,j)\in I_l} \sum_{(a,b)\in \{0,1\}^2} \sum_{k=1}^6 (\hat{R}_{ij,ab,l}^{(k)} + \hat{\xi}_{ij,ab,l} + \hat{\xi}_{ij,ab,l}^{\gamma} + \hat{\xi}_{ij,ab,l}^{\varphi})\pi_{ab}(X_i,X_j)\biggr|\biggr).
\end{align*}
\normalsize
By repeated use of the triangle inequality
\begin{align*}
    (\dagger) \quad \Ex\biggl[\sup_{\pi \in \Pi_n}|\hat{W}_n(\pi) - \T{W}_n(\pi)|\biggr] 
    &\leq \sum_{l=1}^L \sum_{(a,b)\in \{0,1\}^2}\Ex\biggl(\sup_{\pi \in \Pi_n}\biggl| \binom{n}{2}^{-1} \sum_{(i,j)\in I_l}  (\hat{R}_{ij,l}^{(1)} + \hat{R}_{ij,l}^{(3)})\pi_{ab}(X_i,X_j) \biggr|\biggr) \\
    &\quad + \sum_{l=1}^L \sum_{(a,b)\in \{0,1\}^2} \Ex\biggl(\sup_{\pi \in \Pi_n}\biggl|\binom{n}{2}^{-1}  \sum_{(i,j)\in I_l} (\hat{R}_{ij,l}^{(2)} + \hat{R}_{ij,l}^{(5)})\pi_{ab}(X_i,X_j) \biggr|\biggr) \\
    &\quad + \sum_{l=1}^L \sum_{(a,b)\in \{0,1\}^2} \Ex\biggl(\sup_{\pi \in \Pi_n}\biggl| \binom{n}{2}^{-1} \sum_{(i,j)\in I_l} (\hat{R}_{ij,l}^{(4)} + \hat{R}_{ij,l}^{(6)})\pi_{ab}(X_i,X_j) \biggr|\biggr) \\
    &\quad + \sum_{l=1}^L \sum_{(a,b)\in \{0,1\}^2} \Ex\biggl(\sup_{\pi \in \Pi_n}\biggl| \binom{n}{2}^{-1} \sum_{(i,j)\in I_l} (\hat{\xi}_{ij,l} + \hat{\xi}_{ij,l}^{\gamma} + \hat{\xi}_{ij,l}^{\varphi})\pi_{ab}(X_i,X_j) \biggr|\biggr).
\end{align*}
Without loss of generality I focus on some fixed $(a,b)$ and $l$. Let $N_l^c$ be the observations not in $I_l$. By adding and subtracting $\Ex[(\hat{R}_{ij,ab,l}^{(1)} + \hat{R}_{ij,ab,l}^{(3)})\pi_{ab}(X_i, X_j)|N_l^c]$ and triangle inequality, the summands of the first term are bounded by

\begin{align*}
    \Ex\biggl(&\sup_{\pi \in \Pi_n}\biggl| \binom{n}{2}^{-1} \sum_{(i,j)\in I_l}  (\hat{R}_{ij,l}^{(1)} + \hat{R}_{ij,l}^{(3)})\pi_{ab}(X_i,X_j) - \Ex[(\hat{R}_{ij,ab,l}^{(1)} + \hat{R}_{ij,ab,l}^{(3)})\pi_{ab}(X_i,X_j)|N_l^c]) \biggr|\biggr) \quad (\star) \\
    &+ \Ex\biggl(\sup_{\pi \in \Pi_n} \binom{n}{2}^{-1} \sum_{(i,j)\in I_l}  |\Ex[(\hat{R}_{ij,ab,l}^{(1)} + \hat{R}_{ij,ab,l}^{(3)})\pi_{ab}(X_i,X_j)|\hat{\gamma}_l]| \biggr). \quad (\star \star)
\end{align*}
\normalsize
By Assumption \ref{ass_glob_rob_quadr} we know that 
\begin{align*}
    |\Ex[\hat{R}_{ij,ab,l}^{(1)} + \hat{R}_{ij,ab,l}^{(3)}|N_l^c]| &= |\Ex[m_{ab}(Z_i,Z_j,\hat{\gamma}_l,\varphi) + \phi_{ab}^{\gamma}(Z_i,Z_j,\hat{\gamma}_l,\alpha^{\gamma})|\hat{\gamma}_l]| \leq C||\hat{\gamma}_l - \gamma||^2.
\end{align*}
Applying the conditional Jensen's inequality (on the absolute value) in $(\star \star)$ and noting that the resulting expression is maximized by treating everybody we get that 
\begin{align*}
    (\star \star) &\leq C \Ex[||\hat{\gamma}_l - \gamma||^2] \underbrace{\binom{n}{2}^{-1}|I_l|}_{\leq 1} = o(n^{-2\lambda_\gamma}) = o(1/\sqrt{n}),
\end{align*}
where the last equality follows since $2\lambda_\gamma \geq 1/2$. For $(\star)$, note that
\begin{align*}
    (\star) &\leq \binom{n}{2}^{-1}|I_l|\Ex\biggl(\sup_{\pi \in \Pi_n}\biggl| |I_l|^{-1}  \sum_{(i,j)\in I_l}  \hat{R}_{ij,l}^{(1)}\pi_{ab}(X_i,X_j) - \Ex[\hat{R}_{ij,ab,l}^{(1)}\pi_{ab}(X_i,X_j)|N_l^c] \biggr|\biggr) \\
    &+ \binom{n}{2}^{-1}|I_l|\Ex\biggl(\sup_{\pi \in \Pi_n}\biggl| |I_l|^{-1} \sum_{(i,j)\in I_l}  \hat{R}_{ij,l}^{(3)}\pi_{ab}(X_i,X_j) - \Ex[\hat{R}_{ij,ab,l}^{(3)}\pi_{ab}(X_i,X_j)|N_l^c]) \biggr|\biggr) \\
    &= \binom{n}{2}^{-1}|I_l|\Ex\biggl[\Ex\biggl(\sup_{\pi \in \Pi_n}\biggl| |I_l|^{-1} \sum_{(i,j)\in I_l}  \hat{R}_{ij,l}^{(1)}\pi_{ab}(X_i,X_j) - \Ex[\hat{R}_{ij,ab,l}^{(1)}\pi_{ab}(X_i,X_j)|N_l^c] \biggr| \, \biggr| N_l^c\biggr)\biggr] \\
    &+ \binom{n}{2}^{-1}|I_l|\Ex\biggl[\Ex\biggl(\sup_{\pi \in \Pi_n}\biggl| |I_l|^{-1} \sum_{(i,j)\in I_l}  \hat{R}_{ij,l}^{(3)}\pi_{ab}(X_i,X_j) - \Ex[\hat{R}_{ij,ab,l}^{(3)}\pi_{ab}(X_i,X_j)|N_l^c] \biggr| \, \biggr| N_l^c \biggr)\biggr].
\end{align*}
The inner expectations are the expected supremum of centered U-processes. Using Lemma \ref{lemma_rademacher} I can bound them with Rademacher complexities. However, in the same way we used the construction in Section A of the Online Appendix in Lemma \ref{lemma_rademacher} to be able to bound the U-process with a Rademacher complexity which involves a sum of independent terms, we need to use such a construction for each fold $I_l$. Take the cross-fitting technique in \cite{escanciano2023machine} where we split $\{1,...,n\}$ into sets $\mathcal{C} = \{C_1,...,C_K\}$ and take the intersection between $\mathcal{C}^2$ and the set $\{(i,j) \in \{1,...,n\}^2: i<j\}$. $I_l$ can be either a triangle ($I_l \in T$, where $T = \{I_l: i \in C_f, j \in C_g, f < g, (i,j) \in I_l\}$) or a square ($I_l \in S$, where $S = \{I_l: i \in C_f, j \in C_g, f = g, (i,j) \in I_l\}$) and that in each case we can bound the U-process with the following Rademacher complexities
\begin{align*}
    \mathcal{R}_{n,l}(\Pi_{ab}) &= \begin{cases}
        \Ex_\varepsilon \biggl(\sup_{\pi \in \Pi} \biggl| |C_k|^{-1} \sum_{i=1}^{|C_k|} \varepsilon_i \hat{R}_{\rho(i,k),|C_k|+i}^{(q)}\pi_{ab}(X_{\rho(i,k)},X_{|C_k| + i}) \biggr| \biggr) &\text{if } I_l \in S \\
        \Ex_\varepsilon \biggl(\sup_{\pi \in \Pi} \biggl| \floor{|C_k|/2}^{-1} \sum_{i=1}^{\floor{|C_k|/2}}\varepsilon_i  \hat{R}_{i,\floor{|C_k|/2} + i,l}^{(q)} \pi_{ab}(X_{i},X_{\floor{|C_k|/2} + i})\biggr| \biggr) &\text{if } I_l \in T,
        \end{cases} 
\end{align*}
for $q = 1,3$. Hence, by Lemmas \ref{lemma_rademacher} and \ref{lemma_boundforrademacher} we have that
\begin{align*}
    \binom{n}{2}^{-1}|I_l|\Ex&\biggl(\sup_{\pi \in \Pi_n}\biggl| |I_l|^{-1} \sum_{(i,j)\in I_l}  \hat{R}_{ij,l}^{(1)}\pi_{ab}(X_i,X_j) - \Ex[\hat{R}_{ij,ab,l}^{(1)}\pi_{ab}(X_i,X_j)|N_l^c] \biggr| \, \biggr| N_l^c\biggr) \\
    &= \mathcal{O}\biggl(\sqrt{\frac{S_{ab,l}^{(1)}VC(\Pi_{ab,n})}{\floor{C_k/2}}}\biggr),
\end{align*}
where $S_{ab,l}^{(1)} = \Ex[\hat{R}_{ij,l}^{(1)^2}|N_l^c]$. Noting that $\Ex[S_{ab,l}^{(1)}] = \Ex[(m_{ab}(Z_i,Z_j,\hat{\gamma}_l,\varphi) - m_{ab}(Z_i,Z_j,\gamma,\varphi))^2]$ and using Assumption \ref{ass_nuisance_rates}, Jensen's inequality, the fact that $|I_l| = |C_k|\times |C_m|$ if $I_l = I(C_k,C_m)$ and $|I_l| = |C_k|\times |C_k - 1|/2$ if $I_l = I(C_k,C_k)$ and that for evenly sized folds $|C_k|/(n-1) \leq 1$ for all $k = 1,...,K$ we have that
\begin{align*}
    \Ex\biggl(&\sup_{\pi \in \Pi_n}\biggl| \binom{n}{2}^{-1} \sum_{(i,j)\in I_l}  \hat{R}_{ij,l}^{(1)}\pi_{ab}(X_i,X_j) - \Ex[\hat{R}_{ij,ab,l}^{(1)}\pi_{ab}(X_i,X_j)|N_l^c]) \biggr|\biggr) \\
    &= \mathcal{O}\biggl(\sqrt{VC(\Pi_{ab,n})\frac{a((1-K^{-1})n)^2}{n^{1 + 2\lambda_\gamma}}}\biggr).
\end{align*}
The same bound applies by using the same arguments when we replace $\hat{R}_{ij,l}^{(1)}$ by $\hat{R}_{ij,l}^{(3)}$. This bound applies to all folds $I_l$, so, summing across all folds gives us the same asymptotic bound. Hence, we have bounded the first term on the right-hand side in $(\dagger)$. For the second term, we follow the same steps as with the first term to get the same bounds with $\lambda_\gamma$ replaced by $\lambda_\varphi$. For the third term in $(\dagger)$, by Assumption \ref{ass_glob_rob_quadr} (i) (global robustness of $\alpha$), $\Ex[\hat{R}_{ij,ab,l}^{(4)}|N_l^c] = \Ex[\hat{R}_{ij,ab,l}^{(6)}|N_l^c] = 0$. Thus, we can apply Lemmas \ref{lemma_rademacher} and \ref{lemma_boundforrademacher} directly to get that for $q = 4, 6$
\[
    \Ex\biggl(\sup_{\pi \in \Pi_n}\biggl| \binom{n}{2}^{-1} \sum_{(i,j)\in I_l}  \hat{R}_{ij,l}^{(q)}\pi_{ab}(X_i,X_j) \biggr|\biggr) = \mathcal{O}\biggl(\sqrt{VC(\Pi_{ab,n})\frac{a((1-K^{-1})n)^2}{n^{1 + 2\lambda_\alpha}}}\biggr).    
\]
Finally, the bound for the last term in ($\dagger$) follows directly from Assumption \ref{ass_interaction_term}
\[
    \Ex\biggl(\sup_{\pi \in \Pi_n}\biggl| \binom{n}{2}^{-1} \sum_{(i,j)\in I_l} (\hat{\xi}_{ij,l} + \hat{\xi}_{ij,l}^{\gamma} + \hat{\xi}_{ij,l}^{\varphi})\pi_{ab}(X_i,X_j) \biggr|\biggr) = \mathcal{O}\biggl(\frac{a(1-K^{-1})}{\sqrt{n}}\biggr).
\]
Putting everything together we know that 
\begin{align*}
    \sqrt{n}\Ex\biggl[\sup_{\pi \in \Pi_n}|\hat{W}_n(\pi) - \T{W}_n(\pi)|\biggr] &= \mathcal{O}\biggl(\sqrt{VC(\Pi_{ab,n})\frac{a((1-K^{-1})n)^2}{n^{2\lambda_\gamma}}}\biggr) \\
    &+ \mathcal{O}\biggl(\sqrt{VC(\Pi_{ab,n})\frac{a((1-K^{-1})n)^2}{n^{2\lambda_\varphi}}}\biggr) \\
    &+ \mathcal{O}\biggl(\sqrt{VC(\Pi_{ab,n})\frac{a((1-K^{-1})n)^2}{n^{2\lambda_\alpha}}}\biggr) \\
    &+ \mathcal{O}\biggl(a(1-K^{-1})\biggr) + o(1) \\
    &= \mathcal{O}\biggl(a((1-K^{-1})n)\biggl(1 + \sqrt{\frac{VC(\Pi_{ab,n})}{n^{2\min(\lambda_\gamma,\lambda_\varphi,\lambda_\alpha)}}} \biggr) \biggr).
\end{align*}
\end{proof}

\begin{proof}[Proof of Theorem \ref{thm_regret_upper_bound}]
    Follows from Lemmas \ref{lemma_boundforrademacher}, \ref{lemma_uniform_coupling} and Lemma 3 in the Online Appendix.
\end{proof}

\section{Online Appendix}
\setcounter{lemma}{0}
\subsection{Useful U-statistics representation}
I introduce a representation of U-statistics which will be very useful for the coming proofs. For any function $f: \mathcal{Z}^2 \to \mathbb{R}$ let $\mathbb{U}_n f(X_i,X_j) = \binom{n}{2}^{-1}\sum_{i<j}f(X_i,X_j)$. Let $\kappa$ be the permutations of $\{1,...,n\}$, then, as in \cite{hoeffding1963}, we can rewrite 
\begin{equation}
\label{eq_sum_iid_repr}
\mathbb{U}_n f(Z_i,Z_j) = \frac{1}{n!}\sum_\kappa \floor{n/2}^{-1} \sum_{i=1}^{\floor{n/2}}f(Z_{\kappa(i)},Z_{\kappa(\floor{n/2} + i)}).
\end{equation}
This expresses $\mathbb{U}_n f(Z_i,Z_j)$ as a (dependent) sum of averages of i.i.d. random variables (i.e. $f(Z_{\kappa(i)},Z_{\kappa(\floor{n/2} + i)})$ are i.i.d. for $i = 1,..., \floor{n/2}$). 
\subsection{Auxiliary lemmas}
\label{app_aux_lemmas}
Now some lemmas needed for the main results. For a fixed $\{X_i,\Gamma_{i,\floor{n/2}+i}\}_{i=1}^{\floor{n/2}}$ define
\begin{align*}
    \T{\Pi}_{ab} = \{\pi_{ab}(X_1,X_{\floor{n/2} + 1}),...,\pi_{ab}(X_{\floor{n/2}},X_{n}): \pi \in \Pi\}.
\end{align*}
For $\pi,\pi' \in \T{\Pi}_{ab}$ define the following distances
\begin{align*}
  D_n^2(\pi,\pi') &= \frac{\sum_{i=1}^{\floor{n/2} }\Gamma_{i,\floor{n/2}+i}^{2 \, ab}(\pi_{ab}(X_i,X_{\floor{n/2}+i}) - \pi_{ab}'(X_i,X_{\floor{n/2}+i}))^2}{\sum_{i=1}^{\floor{n/2} }\Gamma_{i,\floor{n/2}+i}^{2 \, ab}},  \\
  H(\pi,\pi') &= \frac{1}{n} \sum_{i = 1}^{n} \ind(\pi_{ab}(X_i,X_{\floor{n/2}+i}) \neq \pi_{ab}'(X_i,X_{\floor{n/2}+i})).
\end{align*}
Let $N_{D_n}(\varepsilon,\T{\Pi}_{ab},\{X_i,\Gamma_{i,\floor{n/2}+i}\}_{i=1}^{\floor{n/2}})$ be the number of balls of radius $\varepsilon$ needed to cover $\T{\Pi}_{ab}$ under distance $D_n$. Define the same object for the Hamming distance $H$ and let 
\[
N_H(\varepsilon, \T{\Pi}_{ab}) = \sup\{N_H(\varepsilon,\T{\Pi}_{ab},\{X_i\}_{i=1}^m): X_1,...,X_m \in \mathcal{X}, m \geq 1\}.
\]
Note $N_H(\varepsilon,\T{\Pi}_{ab})$ does not depend on $m$. It will be useful to bound $N_{D_n}$ with $N_H$.

\begin{lemma}
    \label{app_auxlemma_NDNH}
    For fixed $\{X_i,\Gamma_{i,\floor{n/2}+i}\}_{i=1}^{\floor{n/2}}$ we have that $N_{D_n}(\varepsilon,\T{\Pi}_{ab},\{X_i,\Gamma_{i,\floor{n/2}+i}\}_{i=1}^{\floor{n/2}}) \leq  N_H(\varepsilon^2, \T{\Pi}_{ab})$.
\end{lemma}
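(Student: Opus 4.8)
The plan is to realize $D_n^2$ as a Hamming distance on a finite, suitably enlarged list of pairs and then invoke the definition of the uniform Hamming covering number $N_H$. The starting point is that every $\pi\in\T{\Pi}_{ab}$ is $\{0,1\}$-valued, so $(\pi_{ab}-\pi'_{ab})^2=\ind(\pi_{ab}\neq\pi'_{ab})$ pointwise. Writing $\lambda_i=\Gamma_{i,\floor{n/2}+i}^{2\,ab}\big/\sum_{k=1}^{\floor{n/2}}\Gamma_{k,\floor{n/2}+k}^{2\,ab}$ (the degenerate case where all these scores vanish is trivial, $D_n\equiv 0$), the definition of $D_n$ becomes
\[
D_n^2(\pi,\pi')=\sum_{i=1}^{\floor{n/2}}\lambda_i\,\ind\!\big(\pi_{ab}(X_i,X_{\floor{n/2}+i})\neq\pi'_{ab}(X_i,X_{\floor{n/2}+i})\big),\qquad \lambda_i\ge 0,\ \textstyle\sum_i\lambda_i=1 ,
\]
i.e. $D_n^2$ is the squared $L_2$ distance between $\pi_{ab}$ and $\pi'_{ab}$ under the $\lambda$-weighted empirical measure on the pairs $(X_i,X_{\floor{n/2}+i})$.

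Next I would rationalize the weights. Given $\eta>0$, choose an integer $M>4\floor{n/2}/\eta$ and nonnegative integers $m_1,\dots,m_{\floor{n/2}}$ with $\sum_i m_i=M$ and $\sum_i|m_i/M-\lambda_i|<\eta/2$ (round $M\lambda_i$ down and distribute the deficit). Let $\T{H}(\pi,\pi')$ be the normalized Hamming distance of $\pi_{ab},\pi'_{ab}$ over the list of $M$ pairs in which $(X_i,X_{\floor{n/2}+i})$ is repeated $m_i$ times; then $\T{H}(\pi,\pi')=M^{-1}\sum_i m_i\,\ind(\pi_{ab}\neq\pi'_{ab})$, so $|\T{H}(\pi,\pi')-D_n^2(\pi,\pi')|<\eta/2$ for all $\pi,\pi'\in\T{\Pi}_{ab}$.

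The only subtle point is transferring a cover across this approximation, and here I would exploit that $\T{\Pi}_{ab}\subseteq\{0,1\}^{\floor{n/2}}$ is \emph{finite}, so $D_n^2$ attains only finitely many values on $\T{\Pi}_{ab}\times\T{\Pi}_{ab}$. If no attained value exceeds $\varepsilon^2$, every policy lies within $D_n$-distance $\varepsilon$ and the cover has size one; otherwise let $\rho>0$ be the gap between $\varepsilon^2$ and the smallest attained value above $\varepsilon^2$, and take $\eta<2\rho$ in the construction above. Then $\T{H}(\pi,\pi')\le\varepsilon^2$ forces $D_n^2(\pi,\pi')<\varepsilon^2+\eta/2<\varepsilon^2+\rho$, and since $D_n^2(\pi,\pi')$ is an attained value it must satisfy $D_n^2(\pi,\pi')\le\varepsilon^2$. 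Hence each closed $\T{H}$-ball of radius $\varepsilon^2$ is contained in the closed $D_n$-ball of radius $\varepsilon$ with the same centre, so any $\varepsilon^2$-cover of $\T{\Pi}_{ab}$ in the Hamming distance $\T{H}$ is an $\varepsilon$-cover in $D_n$. As $\T{H}$ is the normalized Hamming distance on a finite list of pairs, a minimal such cover has at most $N_H(\varepsilon^2,\T{\Pi}_{ab})$ elements by definition of the uniform Hamming covering number, giving $N_{D_n}(\varepsilon,\T{\Pi}_{ab},\{X_i,\Gamma_{i,\floor{n/2}+i}\}_{i=1}^{\floor{n/2}})\le N_H(\varepsilon^2,\T{\Pi}_{ab})$. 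Everything except this radius-matching step is the routine "weighted-to-unweighted Hamming via blow-up" reduction; the matching of radii, handled through finiteness of $\T{\Pi}_{ab}$, is the part I expect to require the only genuine care.
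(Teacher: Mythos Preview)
Your proof is correct and follows the same blow-up idea as the paper: replicate each pair $(X_i,X_{\floor{n/2}+i})$ with multiplicity proportional to its weight $\lambda_i$ so that $D_n^2$ is approximated by a normalized Hamming distance on the enlarged list, then invoke the uniform Hamming covering number. Your argument is in fact more careful than the paper's at two points. First, at the radius-matching step you exploit the finiteness of $\T{\Pi}_{ab}$ to absorb the $O(1/M)$ approximation error exactly, whereas the paper simply writes ``making $m$ large we conclude.'' Second, the paper displays the inequality $H\le D_n^2+O(1/m)$, which is the direction that would yield $N_H(\varepsilon^2)\le N_{D_n}(\varepsilon)$ rather than the stated bound; the needed direction $D_n^2\le H+O(1/m)$ follows just as easily from the same construction (using $m\lambda_i\le |B_i|+1$ instead of $m\lambda_i\ge |B_i|-1$), and you make this direction explicit. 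So your approach is essentially the paper's, but with the correct inequality direction written out and the limiting step made rigorous.
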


\begin{proof}
    Take an auxiliary sample $\{X_j'\}_{j=1}^m$ contained in $\{X_i\}_{i=1}^n$ such that
    \[
      \biggl| |B_i| - \frac{m \Gamma_{i,\floor
      {n/2}+i}^{2 \, ab}}{\sum_{i=1}^{\floor
      n/2}\Gamma_{i,\floor
      {n/2}+i}^{2 \, ab}} \biggr| \leq 1,
    \]
    where $B_i = \{j\in \{1,...,m\}: X_j' = X_i\}$. Then, for $\pi,\pi' \in \T{\Pi}_{ab}$
    \begin{align*}
        D_n^2(\pi,\pi') = \frac{1}{m}\sum_{i=1}^{\floor{n/2}}\underbrace{\frac{m\Gamma_{i,\floor
        {n/2}+i}^{2 \, ab}}{\sum_{k=1}^{\floor{n/2}}\Gamma_{k,\floor
        {n/2}+k}^{2 \, ab}}}_{\geq |B_i| - 1}\ind(\pi_{ab}(X_i,X_{\floor
        {n/2}+i}) \neq \pi_{ab}'(X_i,X_{\floor
        {n/2}+i}) ).
    \end{align*}
    So 
    \begin{align*}
        D_n^2(\pi,\pi') &\geq \sum_{i=1}^{\floor{n/2}} \frac{|B_i|}{m}\ind(\pi_{ab}(X_i,X_{\floor
        {n/2}+i}) \neq \pi_{ab}'(X_i,X_{\floor
        {n/2}+i})) - O(1/m) \\
        &= \sum_{i=1}^{\floor{n/2}} \frac{|B_i|}{m} \frac{1}{|B_i|}\sum_{j\in B_i}\ind(\pi_{ab}(X_j',X_{\floor
        {n/2}+j}') \neq \pi_{ab}'(X_j',X_{\floor
        {n/2}+j}')) - O(1/m) \\
        &= \frac{1}{m}\sum_{i=1}^{\floor{n/2}}\sum_{j\in B_i}\ind(\pi_{ab}(X_j',X_{\floor
        {n/2}+j}') \neq \pi_{ab}'(X_j',X_{\floor
        {n/2}+j}')) - O(1/m).
    \end{align*}
    The first equality uses that all summands in the inner sum are the same since for all $j \in B_i$ we know that $(X_i, X_{\floor
    {n/2}+i}) = (X_j',X_{\floor
    {n/2}+j}')$. The sum $\sum_{i=1}^{\floor{n/2}}\sum_{j\in B_i}$ can sum pairs more than once (e.g. if $(X_1,X_{\floor
    {n/2}+1}) = (X_2,X_{\floor
    {n/2}+2})$ then $B_1 = B_2$). Since $\{X'_j\}_{j=1}^m$ is contained in $\{X'_i\}_{i=1}^m$ 
    \begin{align*}
        D_n^2(\pi,\pi') &\geq \frac{1}{m}\sum_{j=1}^m \ind(\pi_{ab}(X_j',X_{\floor
        {n/2}+j}') \neq \pi_{ab}'(X_j',X_{\floor
        {n/2}+j}')) - O(1/m) \\
        &= H(\pi,\pi') - O(1/m).
    \end{align*} 
    So, $H(\pi,\pi') \leq D_n^2(\pi,\pi') + O(1/m)$. $N_H$ does not depend on $m$, so making $m$ large we conclude
    \[
        N_{D_n}(\varepsilon,\T{\Pi}_{ab},\{X_i,\Gamma_{i,\floor{n/2}+i}\}_{i=1}^{\floor{n/2}}) \leq  N_H(\varepsilon^2, \T{\Pi}_{ab}) . 
    \]
\end{proof}

\noindent Now we prove that the sequence of covers we use in the proof of Lemma 2 in the main text exists.

\begin{lemma}
    \label{app_auxlemma_covers}
    $\exists$ $\{B_k\}_{k=0}^K$ with $K < \infty$ of $\T{\Pi}_{ab}$ with $B_k \subset \T{\Pi}_{ab}$ s.t. for $k = 0,...,K$
    \begin{itemize}
        \item For all $\pi \in \T{\Pi}_{ab}$, there exists $b \in B_k$ such that $D_n(\pi,b) \leq 2^{-k}$,
        \item $|B_k| = N_{D_n}(2^{-k},\T{\Pi}_{ab},\{X_i,\Gamma_{i,\floor{n/2}+i}\}_{i=1}^{\floor{n/2}}) \leq |\T{\Pi}_{ab}|$.
    \end{itemize}
\end{lemma}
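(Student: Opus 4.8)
The plan is to exploit the fact that, conditionally on the realized sample, $\T{\Pi}_{ab}$ is a \emph{finite} set, so that a minimal cover at each dyadic scale exists for the cheapest of reasons. First I would record two structural facts. (i) Every element of $\T{\Pi}_{ab}$ is the vector $(\pi_{ab}(X_1,X_{\floor{n/2}+1}),\dots,\pi_{ab}(X_{\floor{n/2}},X_n))\in\{0,1\}^{\floor{n/2}}$, since each coordinate $\pi_{ab}(X_i,X_j)=\ind(\pi(X_i)=a)\ind(\pi(X_j)=b)$ is $\{0,1\}$-valued; hence $|\T{\Pi}_{ab}|\le 2^{\floor{n/2}}<\infty$. (ii) $D_n$ is a pseudometric on $\T{\Pi}_{ab}$ bounded by $1$, because $(\pi_{ab}(\cdot)-\pi_{ab}'(\cdot))^2\in\{0,1\}$ while the weights $\Gamma_{i,\floor{n/2}+i}^{2\,ab}/\sum_k\Gamma_{k,\floor{n/2}+k}^{2\,ab}$ are nonnegative and sum to one; the triangle inequality follows from $\ind(\pi_i\ne\pi_i'')\le\ind(\pi_i\ne\pi_i')+\ind(\pi_i'\ne\pi_i'')$ after squaring. (If $\sum_k\Gamma_{k,\floor{n/2}+k}^{2\,ab}=0$ one reads $D_n\equiv 0$ and the statement is trivial with each $B_k$ a singleton, so assume this sum is positive.)

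Next I would, for each $k\ge 0$, take $B_k$ to be a subset of $\T{\Pi}_{ab}$ of minimal cardinality such that every $\pi\in\T{\Pi}_{ab}$ is within $D_n$-distance $2^{-k}$ of some element of $B_k$. Such a set exists: the family of subsets with this covering property is nonempty, as $\T{\Pi}_{ab}$ itself qualifies (each policy covering itself, $D_n(\pi,\pi)=0\le 2^{-k}$), and by finiteness one of minimal cardinality can be selected. The first bullet then holds by construction, and $|B_k|\le|\T{\Pi}_{ab}|$ because $\T{\Pi}_{ab}$ is an admissible cover; the equality $|B_k|=N_{D_n}(2^{-k},\T{\Pi}_{ab},\{X_i,\Gamma_{i,\floor{n/2}+i}\}_{i=1}^{\floor{n/2}})$ holds by the definition of the covering number with centers drawn from the set in question (as in \cite{wainwright2019high}). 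To see that we may stop at a finite $K$: $D_n$ takes only finitely many values on $\T{\Pi}_{ab}\times\T{\Pi}_{ab}$, so letting $\delta>0$ be the smallest strictly positive such value, for every $k$ with $2^{-k}<\delta$ a $2^{-k}$-cover must retain one representative from each $D_n$-equivalence class and $B_k$ stabilises; hence taking $K=\lceil\log_2(1/\delta)\rceil+1$ suffices, and since the construction furnishes a valid $B_k$ for every $k\in\mathbb{N}_0$, any longer finite initial segment required in the downstream application (where the largest index is $J=\lceil\log_2(\floor{n/2})(1-\beta)\rceil$) is equally available.

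The one point deserving care — and the closest thing to an obstacle — is the convention for $N_{D_n}$. If one insisted on covers whose centers range over all of $[0,1]^{\floor{n/2}}$ (or $\mathbb{R}^{\floor{n/2}}$), then a minimal internal cover need not have exactly $N_{D_n}(2^{-k},\cdot)$ elements, only at most $N_{D_n}(2^{-k-1},\cdot)$-many (replace each nonempty external ball of radius $2^{-k-1}$ by an internal point and use the triangle inequality to obtain an internal cover of radius $2^{-k}$). I would therefore make explicit, as in \cite{wainwright2019high}, that $N_{D_n}$ denotes the covering number with centers constrained to $\T{\Pi}_{ab}$, under which the stated identity $|B_k|=N_{D_n}(2^{-k},\cdot)$ is immediate; this is also the only version used in Lemmas \ref{app_auxlemma_NDNH} and \ref{lemma_boundforrademacher}, and any factor-of-two slack from the alternative convention would be absorbed harmlessly into the constants there.
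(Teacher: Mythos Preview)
Your proof is correct and follows essentially the same approach as the paper's: both observe that $\T{\Pi}_{ab}\subset\{0,1\}^{\floor{n/2}}$ is finite for fixed data, and then use this finiteness to extract a minimal internal $2^{-k}$-cover at each dyadic scale. Your forward construction (pick a minimal-cardinality cover for each $k$) is slightly cleaner than the paper's backward one (set $B_K=\T{\Pi}_{ab}$ and recurse downward), and your explicit treatment of the pseudometric case $D_n\equiv0$, the covering-number convention, and the choice of $K$ via the smallest positive $D_n$-value are all welcome refinements absent from the paper.
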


\begin{proof}
    Note that $|\T{\Pi}_{ab}| < 2^{\floor{n/2}} < \infty$ since $X_i$'s are fixed. Since $\T{\pi}_{ab}$ is finite and $B_k \subset \T{\Pi}_{ab}$ for all $k$, there is finite $K$ s.t. we can set $B_K = \T{\Pi}_{ab}$. This is because for any $B_k$ which is a strict subset of $\T{\Pi}_{ab}$ there is $\pi \in \T{\Pi}_{ab}$ s.t for all $b \in B_k$, $D_n(b,\pi) > a > 0$ and there is $K > 0$ s.t. $2^{-K} < a$. $K$ is finite since there are only finitely many subsets of $\T{\Pi}_{ab}$. For $B_{K-1}$ we look through all possible strict subsets for one which satisfies our conditions, if there is none we know that $B_{K-1} = \T{\Pi}_{ab}$ does satisfy them. In this way, we can go backwards and build the sequence of covers.
\end{proof}

\begin{lemma}
    \label{app_auxlemma_VC}
    $VC(\T{\Pi}_{ab}) \leq 2VC(\Pi) - 1$.
\end{lemma}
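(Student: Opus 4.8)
The plan is to convert a shattering statement for $\T{\Pi}_{ab}$ into a shattering statement for $\Pi$ itself, at the cost of only the factor $2$ (and one more unit). Write $d := VC(\Pi)$. First I would reduce to the case $a=b=1$: since $\pi_{ab}(x,x')=1$ iff $\pi(x)=a$ and $\pi(x')=b$, replacing $\pi$ by $1-\pi$ on whichever argument(s) is needed rewrites $\pi_{ab}(x,x')$ as $\rho(x)\rho'(x')$ with $\rho,\rho'\in\{\pi,1-\pi\}$; restricted to the finitely many sample points this is a partial complementation of the trace of $\Pi$, which changes neither $VC(\Pi)$ nor the VC dimension of the pair-class. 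So I may assume $\pi_{ab}(x,x')=\pi(x)\pi(x')$.

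Now suppose $\T{\Pi}_{ab}$ shatters $m$ of the pair-indices; write the corresponding sample points as $u_k,v_k\in\mathcal{X}$ for pair $k$, so that for every $T\subseteq\{1,\dots,m\}$ there is $\pi_T\in\Pi$ with $\pi_T(u_k)=\pi_T(v_k)=1$ for $k\in T$ and $\pi_T(u_k)\pi_T(v_k)=0$ for $k\notin T$. The crux of the argument is to show that this forces $\Pi$ to shatter a subset of the $2m$ endpoints $p_1,\dots,p_{2m}$ (with $p_k=u_k$, $p_{m+k}=v_k$) of size at least $m-d+1$. Granting that, the shattered set has at most $d$ elements, so $m-d+1\le d$, i.e. $m\le 2d-1$, which is exactly $VC(\T{\Pi}_{ab})\le 2\,VC(\Pi)-1$. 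To produce the shattered endpoint set I would look at the single set system $\{\,k\in\{1,\dots,2m\}:\pi_T(p_k)=1\,\}$ indexed by $T$; it has at least $2^m$ members, because $T$ is recovered from the trace as the set of pairs both of whose endpoints lie in it, so distinct $T$ give distinct traces. Calling a labelling ``resolved on the $u$-side'' when $\pi_T(u_k)=0$ for all $k\notin T$ (so $\pi_T$ restricted to the $u_k$'s cuts out exactly $T$, and such labellings form a subfamily of the VC-$d$ trace of $\Pi$ on $\{u_1,\dots,u_m\}$), and noting that every other labelling separates some $k\notin T$ through $v_k$, one uses the \emph{diagonal} constraint — both endpoints of each pair are read off the same $\pi_T$ — to argue that the coordinates ever used for a $v$-side separation span at most a $d$-dimensional subsystem, so at least $m-d+1$ coordinates are always resolved on the $u$-side and the associated $u$-endpoints are shattered (and symmetrically with $u,v$ swapped).

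The main obstacle is precisely this last combinatorial extraction. A crude count is useless: treating $\T{\Pi}_{ab}$ merely as a subclass of the intersection of two independent VC classes only yields $2^m\le\sum_{i\le d}\binom{2m}{i}$, which already permits $m$ of order $d\log d$; the equality of the two factors' underlying $\pi$ has to be exploited in an essential way, and getting the bookkeeping to leave exactly $m-d+1$ coordinates (rather than something weaker like $m-2d$) is where the care is needed. Once the shattered endpoint set of size at least $m-d+1$ is in hand, the inequality $m\le 2\,VC(\Pi)-1$, and hence $VC(\T{\Pi}_{ab})\le 2\,VC(\Pi)-1$, is immediate.
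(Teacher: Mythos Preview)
Your approach has a genuine gap that you yourself flag: the ``combinatorial extraction'' of a shattered endpoint set of size at least $m-d+1$ is asserted but not proved, and the sketch you give does not establish it. The sentence ``the coordinates ever used for a $v$-side separation span at most a $d$-dimensional subsystem, so at least $m-d+1$ coordinates are always resolved on the $u$-side and the associated $u$-endpoints are shattered'' conflates two different things: bounding how many $v$-coordinates are \emph{ever} used to witness a $0$-label does not imply that the remaining $u$-coordinates are \emph{shattered} by the $\pi_T$'s, because shattering requires that \emph{every} $0/1$ pattern on those $u$-coordinates is realised, and your hypothesis only tells you that for each $T$ the pattern $\mathbf{1}_T$ appears on the full pair-level. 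You are right that the diagonal constraint (same $\pi$ in both factors) is extra structure beyond the independent-product case, but the argument sketched does not exploit it in a way that yields the sharp $2d-1$.

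The paper's proof is entirely different and avoids this combinatorial work altogether. It simply observes that $\T{\Pi}_{ab}$ is contained in the product class $\{\pi_a\cdot\pi_b : (\pi_a,\pi_b)\in\Pi_a\times\Pi_b\}$, where $\Pi_t=\{\mathds{1}(\pi=t):\pi\in\Pi\}$ acts on one coordinate each; notes that $VC(\Pi_0)=VC(\Pi_1)=VC(\Pi)$ by the complementation lemma (Kosorok, Lemma~9.7); and then invokes a standard permanence result for pairwise products of VC classes (Kosorok, Lemma~9.9(ii)) to get the bound. In other words, the paper discards the diagonal constraint entirely---embedding into the \emph{larger} class with independent factors---and quotes an off-the-shelf VC-index bound for that larger class. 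This buys a two-line proof at the cost of needing the cited lemma; your route, if it could be completed, would be self-contained but is substantially harder, and as written it is not complete.
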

\begin{proof}
    Let $\pi_t(X_i) = \ind(\pi(X_i) = t)$ for $t \in \{0,1\}$. Define $\Pi_t = \{\ind(\pi(X_i) = t): \pi \in \Pi\}$. Note that $\Pi_1 = \Pi$ and that $VC(\Pi_0) = VC(\Pi_1)$ by Lemma 9.7 in \cite{kosorok2008introduction}. Now note that for any $(a,b) \in \{0,1\}^2$
    \[
      \T{\Pi}_{ab} = \{\pi_a \cdot \pi_b: (\pi_a,\pi_b) \in \Pi_a \times \Pi_b\}, 
    \]
    so Lemma 9.9 (ii) in \cite{kosorok2008introduction} yields the desired result.
\end{proof}

\subsection{IGM Corollary}

\begin{corollary}
    The bound in Theorem 1 applies to the IGM example. 
    \label{corollary_igm}
\end{corollary}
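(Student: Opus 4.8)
\emph{Proof sketch.} The plan is to reduce the IGM regret bound to Theorem~\ref{thm_regret_upper_bound} applied to the untransformed U-statistic welfare that sits inside the absolute value, exploiting that $v\mapsto -|v-t|$ is a $1$-Lipschitz contraction. Write
\[
V(\pi) = \Ex\biggl[\sum_{(a,b)\in\{0,1\}^2}\Gamma_{ij}^{ab}\pi_{ab}(X_i,X_j)\biggr],
\]
with $\Gamma_{ij}^{ab}$ the IGM orthogonal score, and let $\T{V}_n$ and $\H{V}_n$ be the infeasible and cross-fitted feasible estimators of $V$ built exactly as in Section~\ref{sec_statguarantees}, so that $W(\pi)=-|V(\pi)-t|$, $\T{W}_n(\pi):=-|\T{V}_n(\pi)-t|$, and $\H{W}_n(\pi)=-|\H{V}_n(\pi)-t|$ is precisely the estimator in (\ref{eq_What_igm}); in particular $\H\pi$ maximizes $\H{W}_n$ iff it minimizes $|\H{V}_n(\cdot)-t|$. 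The first observation is that $(V,\T{V}_n,\H{V}_n)$ is a special case of the general U-statistics framework of Section~\ref{sec_statguarantees}: the IGM score has the form ``identifying function $\varphi$ plus correction term $\phi^{\varphi}_{ab}$'' with an empty $\gamma$-correction, and it has bounded support because $sgn(\cdot)sgn(\cdot)\in\{-1,0,1\}$ and $e_{ab}(X_i,X_j)\ge\kappa^2$ by part (ii) of Assumption~\ref{ass_ident}, so $S_{ab}=\Ex[\Gamma_{ij}^{2\,ab}]<\infty$. Hence Lemmas~\ref{lemma_rademacher}, \ref{lemma_boundforrademacher} and \ref{lemma_uniform_coupling} apply to $V$ verbatim.

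First I would record the regret decomposition, which uses no property of the welfare functional: taking $\pi^\star\in\arg\max_{\pi\in\Pi_n}W(\pi)$ (or an $\epsilon$-maximizer and letting $\epsilon\downarrow 0$ if the supremum is not attained), inserting and removing $\H{W}_n(\pi^\star)$ and $\H{W}_n(\H\pi)$, and using optimality of $\H\pi$ for $\H{W}_n$ with the triangle inequality gives
\[
W^*_{\Pi_n} - W(\H\pi) \le 2\sup_{\pi\in\Pi_n}|\H{W}_n(\pi)-\T{W}_n(\pi)| + 2\sup_{\pi\in\Pi_n}|\T{W}_n(\pi)-W(\pi)|,
\]
exactly as in (\ref{eq_simpleboundregret}). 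Then I would apply the contraction: since $\bigl||a-t|-|b-t|\bigr|\le|a-b|$ for all $a,b$,
\[
\sup_{\pi\in\Pi_n}|\H{W}_n(\pi)-\T{W}_n(\pi)| \le \sup_{\pi\in\Pi_n}|\H{V}_n(\pi)-\T{V}_n(\pi)|, \qquad \sup_{\pi\in\Pi_n}|\T{W}_n(\pi)-W(\pi)| \le \sup_{\pi\in\Pi_n}|\T{V}_n(\pi)-V(\pi)|,
\]
so the IGM regret is bounded by the two quantities already controlled for the general framework.

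Finally I would invoke the established bounds for the inner welfare. By Lemma~\ref{lemma_uniform_coupling} with Assumptions~\ref{ass_nuisance_rates}, \ref{ass_interaction_term}, \ref{ass_VC}, $\beta<\min(\lambda_\gamma,\lambda_\varphi,\lambda_\alpha)$, and $VC(\Pi_{ab,n})\le 2VC(\Pi_n)-1$ (Lemma~\ref{app_auxlemma_VC}), one gets $\Ex[\sup_{\pi\in\Pi_n}|\H{V}_n(\pi)-\T{V}_n(\pi)|]=O(n^{-1/2})$; by Lemmas~\ref{lemma_rademacher} and \ref{lemma_boundforrademacher}, $\Ex[\sup_{\pi\in\Pi_n}|\T{V}_n(\pi)-V(\pi)|]=O\bigl(\sqrt{S_{ab}(2VC(\Pi_n)-1)/\floor{n/2}}\bigr)$. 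Since the first term is dominated by the second, combining everything yields $\Ex[W^*_{\Pi_n}-W(\H\pi)]=O\bigl(\sqrt{S_{ab}(2VC(\Pi_n)-1)/\floor{n/2}}\bigr)$, i.e. the bound of Theorem~\ref{thm_regret_upper_bound}. The only content beyond bookkeeping is the elementary Lipschitz step; the mild obstacle is to verify that the proofs of Lemmas~\ref{lemma_rademacher}--\ref{lemma_uniform_coupling} and of Theorem~\ref{thm_regret_upper_bound} only ever used the structure of the linear-in-$\pi_{ab}$ inner object and never the particular form of the outer welfare transformation, which on inspection is the case.
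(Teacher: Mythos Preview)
Your proof is correct and follows the same high-level idea as the paper --- use the $1$-Lipschitz property of $v\mapsto -|v-t|$ to reduce the IGM regret to the uniform-deviation bounds already established for the untransformed inner U-statistic, then invoke the machinery behind Theorem~\ref{thm_regret_upper_bound}.

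Your execution is in fact cleaner than the paper's. The paper writes the regret as $\Ex[\sup_{\pi}\{-|K(\pi)-t|+|K(\hat\pi)-t|\}]$ and bounds it by $\Ex[\sup_{\pi}|K(\pi)-K(\hat\pi)|]$ (your $V$ is their $K$), then says ``apply Theorem~1 with $W$ replaced by $K$.'' Taken literally that last step is problematic: $\sup_{\pi}|K(\pi)-K(\hat\pi)|$ need not be $O(n^{-1/2})$ (it compares $K$ across arbitrary policies, not an estimator to its target), and Theorem~1 concerns the regret of the maximizer of $\hat K_n$, which $\hat\pi$ is not. Your route avoids this by first invoking the decomposition~(\ref{eq_simpleboundregret}) at the $W$-level --- where optimality of $\hat\pi$ for $\hat W_n$ is what is actually used --- and only then applying the Lipschitz contraction termwise to pass to $\hat V_n,\T V_n,V$. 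That is the right order of operations, and your subsequent checks (bounded scores via $\kappa^2$, the IGM score fitting the $\varphi+\phi^\varphi$ template with vacuous $\gamma$-correction, Lemma~\ref{app_auxlemma_VC} for $VC(\Pi_{ab,n})$) are all in order.
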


\begin{proof}[Proof of Corollary]
    Let $\Gamma_{ij}^{ab}$ and $\hat{\Gamma}_{ij,l}^{ab}$ be defined as in the IGM example and let
    \begin{align*}
        K(\pi) &= \Ex \biggl[ \sum_{(a,b) \in \{0,1\}^2} \Gamma_{ij}^{ab}\pi_{ab}(X_i,X_j)\biggr], \quad \T{K}_n(\pi) = \binom{n}{2}^{-1} \sum_{i<j} \biggl[ \sum_{(a,b) \in \{0,1\}^2} \Gamma_{ij}^{ab}\pi_{ab}(X_i,X_j)\biggr], \\
        \hat{K}_n(\pi) &= \binom{n}{2}^{-1} \sum_{l=1}^L \sum_{(i,j)\in I_l} \biggl[\sum_{(a,b) \in \pi}\hat{\Gamma}_{ij,l}^{ab}(Z_i,Z_j,\hat{\gamma}_l,\hat{\varphi}_l,\hat{\alpha}_l)\pi_{ab}(X_i,X_j).\biggr].
        \end{align*}
    Note also that $W(\pi) = - |K(\pi) - t|$. Hence, we can write the regret as
    \begin{align*}
        \Ex\biggl[&sup_{\pi \in \Pi_n} - |K(\pi) - t| + |K(\hat{\pi}) - t| \biggr] \leq \Ex\biggl[\sup_{\pi \in \Pi_n}|K(\pi) - K(\hat{\pi})|\biggr].
    \end{align*}
    The result follows from applying Theorem 1 with $W$ replaced by $K$.
\end{proof}

\bibliographystyle{ectabib}
\bibliography{references}

\end{document}